\newcolumntype{M}[1]{>{\centering\arraybackslash}m{#1}}
\newcolumntype{R}[1]{>{\raggedright\arraybackslash}p{#1}}
\pgfplotsset{compat=newest}
\definecolor{color1}{HTML}{d7191c}
\definecolor{color2}{HTML}{fdae61}
\definecolor{color3}{HTML}{abd9e9}
\definecolor{color4}{HTML}{2c7bb6}
\definecolor{oiBlack}{HTML}{000000}
\definecolor{oiOrange}{HTML}{E69F00}
\definecolor{oiSkyBlue}{HTML}{56B4E9}
\definecolor{oiBluishGreen}{HTML}{009E73}
\definecolor{oiYellow}{HTML}{F0E442}
\definecolor{oiBlue}{HTML}{0072B2}
\definecolor{oiVermillion}{HTML}{D55E00}
\definecolor{oiReddishPurple}{HTML}{CC79A7}
\newcommand{\MyLineWidth}{1.5pt}
\pgfplotsset{
    WgenUB/.style={
        line width=\MyLineWidth, 
        color=oiBlue, 
        solid,
    },
    WgenUBsimple/.style={
        line width=\MyLineWidth, 
        color=oiBlue, 
        dashed,
    },
    WgenUBsimpleImproved/.style={
        line width=\MyLineWidth, 
        color=oiBlue, 
        dashdotted,
    },
    WgenRCUUB/.style={
        line width=\MyLineWidth,
        color=oiBlue,
        dashed,
    },    
    WgenLB/.style={
        line width=\MyLineWidth, 
        color=oiSkyBlue, 
        solid,
    },
    WgenRCULB/.style={
        line width=\MyLineWidth,
        color=oiSkyBlue, 
        dotted,
    },    
    WrandUB/.style={
        line width=\MyLineWidth, 
        color=oiOrange, 
        solid,
    },
    WrandRCUUB/.style={
        line width=\MyLineWidth,
        color=oiOrange,
        dashed,
    },    
    WrandLB/.style={
        line width=\MyLineWidth, 
        color=oiVermillion, 
        solid,
    },
    WrandRCULB/.style={
        line width=\MyLineWidth, 
        color=oiVermillion, 
        dotted,
    },    
    WgenavgLP/.style={
        line width=\MyLineWidth, 
        color=oiBlack, 
        solid,
    },
    WgenwcLP/.style={
        line width=\MyLineWidth, 
        color=oiBlack, 
        solid,
    },
    Wprange/.style={
      only marks,
      mark=triangle,
      mark options={fill=none, line width=1pt, draw=oiBluishGreen}, %
      color=oiBluishGreen,                %
      mark size=3pt,            %
    },
    line3/.style={
        line width=\MyLineWidth,
        color=oiReddishPurple,
        dotted,
    }
}
\tikzset{
    regiontext/.style={
        text width=2cm,
        align=center,
        font=\scriptsize
    },
    blueline/.style={
        line width=0.8pt,
        color=oiBlue,
        very thick,
        solid,
        mark options={solid}
    },
    magentaline/.style={
        line width=0.8pt,
        color=oiReddishPurple,
        thick,
        dashed,
        mark options={solid}
    },
    redline/.style={
        line width=0.8pt,
        color=oiVermillion,
        thick,
        dotted,
        mark options={solid}
    },
    nosolutionfill/.style={
        pattern=north east lines,
        pattern color=gray!50
    },
    hardfill/.style={
        pattern=north west lines,
        pattern color=oiVermillion!20
    },
    easyfill/.style={
        pattern=crosshatch,
        pattern color=oiBluishGreen!20
    },
}
\renewcommand{\todo}[1]{}
\newcommand{\tj}[1]{}
\newcommand{\hb}[1]{}
\newcommand{\aw}[1]{}
\newcommand{\F}{\ensuremath{\mathbb F}}
\newcommand{\Fq}{\ensuremath{\mathbb F_{q}}}
\newcommand{\Fqm}{\ensuremath{\mathbb F_{q^m}}}
\newcommand{\mycode}[1]{\ensuremath{\mathcal{#1}}}
\newcommand{\myspace}[1]{\ensuremath{\mathcal{#1}}}
\newcommand{\Grassm}[2]{\mathcal{G}_{#1}(#2)}
\newcommand{\Conv}{%
  \mathop{\scalebox{1.5}{\raisebox{-0.2ex}{$\circledast$}}
  }
}
\newcommand{\conv}{\ensuremath{\circledast}}
\renewcommand{\vec}[1]{{\ensuremath{\bm{#1}}}}
\newcommand{\mat}[1]{{\ensuremath{\bm{#1}}}}
\newcommand{\set}[1]{{\ensuremath{\mathcal{#1}}}}
\renewcommand{\st}{\ensuremath{:}}
\renewcommand{\a}{\vec{a}}
\renewcommand{\c}{\vec{c}}
\newcommand{\e}{\vec{e}}
\newcommand{\n}{\vec{n}}
\renewcommand{\r}{\vec{r}}
\newcommand{\s}{\vec{s}}
\renewcommand{\u}{\vec{u}}
\renewcommand{\v}{\vec{v}}
\newcommand{\w}{\vec{w}}
\newcommand{\x}{\vec{x}}
\newcommand{\y}{\vec{y}}
\newcommand{\A}{\mat{A}}
\newcommand{\B}{\mat{B}}
\newcommand{\G}{\mat{G}}
\newcommand{\I}{\mat{I}}
\renewcommand{\H}{\mat{H}}
\renewcommand{\P}{\mat{P}}
\newcommand{\shots}{\ensuremath{\ell}}
\newcommand{\shotLength}{\ensuremath{\eta}}
\newcommand{\maxShotWeight}{\ensuremath{\mu}}
\newcommand{\wdecomp}[1]{\ensuremath{\mathcal{T}_{#1}}}
\newcommand{\intparts}[1]{\ensuremath{\mathcal{J}_{#1}}}
\DeclareMathOperator{\NM}{NM}
\newcommand{\NMq}[2]{\NM_{#1}{(#2)}}
\newcommand{\errWeightDiff}{\errEx}
\newcommand{\pmfset}[1]{\mathcal{D}(#1)}
\newcommand{\probProf}[2]{\ensuremath{\varphi_{#1}(#2)}}
\newcommand{\probProfmax}[2]{\ensuremath{\varphi_{#1}^{\mathrm{(max)}}(#2)}}
\newcommand{\probProfavg}[2]{\ensuremath{\varphi_{#1}(#2)}}
\newcommand{\probProfRand}[2]{\ensuremath{\phi_{#1}(#2)}}
\newcommand{\probProfRandP}[2]{\ensuremath{\phi_{#1}'(#2)}}
\newcommand{\probProfavgRand}[2]{\ensuremath{\phi_{#1}(#2)}}
\newcommand{\pRCUBgen}{\mathrm{p}_{\mathrm{RCU}}^{(\mathrm{UB}, \mathrm{gen})}}
\newcommand{\supwt}{\ensuremath{v}}
\newcommand{\supwtVec}{\ensuremath{\v}}
\newcommand{\subwt}{\ensuremath{u}}
\DeclareMathOperator{\ucomp}{ucomp}
\DeclareMathOperator{\scomp}{vcomp}
\DeclareMathOperator{\wdcmp}{\psi}
\newcommand{\errorSet}{\ensuremath{\set{E}}}
\newcommand{\inrange}[2]{\ensuremath{\in\{#1,\ldots,#2\}}}
\newcommand{\intersectprob}[2]{\mathsf{P}_{q,#1,#2}^{\cap}}
\newcommand{\subspaceprob}[3]{\mathsf{P}_{q,#1}^{\subseteq}\left(#2 \subseteq #3\right)}
\newcommand{\Wgen}{\ensuremath{W_{\mathrm{gen}}}}
\newcommand{\Wgenwc}{\ensuremath{W_{\mathrm{gen}, \mathrm{wc}}}}
\newcommand{\WgenLB}{\ensuremath{W_{\mathrm{gen}, \mathrm{wc}}^{(\mathrm{LB})}}}
\newcommand{\WgenUB}{\ensuremath{W_{\mathrm{gen},  \mathrm{wc}}^{(\mathrm{UB})}}}
\newcommand{\WgenUBsimple}{\ensuremath{\widetilde{W}_{\mathrm{gen}, \mathrm{wc}}^{(\mathrm{UB})}}}
\newcommand{\WgenUBsimpleImproved}{\ensuremath{\widetilde{W}_{\mathrm{gen}, \mathrm{wc}}^{(\mathrm{UB},\mathrm{improved})}}}
\newcommand{\WgenRCU}{\ensuremath{W_{\mathrm{gen}, \mathrm{RCU}}}}
\newcommand{\WgenRCUUB}{\ensuremath{W_{\mathrm{gen}, \mathrm{RCU}}^{(\mathrm{UB})}}}
\newcommand{\WgenRCULB}{\ensuremath{W_{\mathrm{gen}, \mathrm{RCU}}^{(\mathrm{LB})}}}
\newcommand{\Wrand}{\ensuremath{W_{\mathrm{rand}}}}
\newcommand{\WrandLB}{\ensuremath{W_{\mathrm{rand}}^{(\mathrm{LB})}}}
\newcommand{\WrandUB}{\ensuremath{W_{\mathrm{rand}}^{(\mathrm{UB})}}}
\newcommand{\WrandRCU}{\ensuremath{W_{\mathrm{rand}, \mathrm{RCU}}}}
\newcommand{\WrandRCUUB}{\ensuremath{W_{\mathrm{rand}, \mathrm{RCU}}^{(\mathrm{UB})}}}
\newcommand{\WrandRCULB}{\ensuremath{W_{\mathrm{rand}, \mathrm{RCU}}^{(\mathrm{LB})}}}
\newcommand{\WiterGen}{W_\mathrm{gen}^{(\mathrm{iter})}}
\newcommand{\WiterRand}{W_\mathrm{rand}^{(\mathrm{iter})}}
\newcommand{\WiterEnE}{W_\mathrm{ee-dec}}
\newcommand{\WiterErDec}{W_\mathrm{erasure-dec}}
\newcommand{\WiterWC}{W_\mathrm{iter}}
\newcommand{\bnd}[2]{\ensuremath{#1\mathopen{}\left(#2\right)\mathclose{}}}
\newcommand{\oh}[1]{\bnd{O}{#1}}
\newcommand{\softoh}[1]{\bnd{\widetilde{O}}{#1}}
\newcommand{\ComplSupportDrawingHeu}{\softoh{n^3 m^2 \log_2(q)}} %
\newcommand{\event}[1]{\ensuremath{\mathcal{#1}}}
\newcommand{\uniqueEventGen}{\event{E}_\text{unique}}
\newcommand{\altEventGen}{\event{E}_\mathrm{RCU}}
\newcommand{\ProbOf}[1]{\Pr\left[#1\right]}
\newcommand{\RS}{\ensuremath{\mathcal{E}_{R}}}
\newcommand{\CS}{\ensuremath{\mathcal{E}_{C}}}
\newcommand{\RSv}{\ensuremath{\bm{\mathcal{E}}_{R}}}
\newcommand{\CSv}{\ensuremath{\bm{\mathcal{E}}_{C}}}
\newcommand{\ES}{\ensuremath{{\mathcal{E}}}}
\newcommand{\FS}{\ensuremath{{\mathcal{F}}}}
\newcommand{\GS}{\ensuremath{{\mathcal{U}}}}
\newcommand{\FSv}{\ensuremath{\bm{\mathcal{F}}}}
\newcommand{\ESv}{\ensuremath{\bm{\mathcal{E}}}}
\newcommand{\GSv}{\ensuremath{\bm{\mathcal{U}}}}
\DeclareMathOperator{\sumDim}{\ensuremath{\dim_{\Sigma}}}
\DeclareMathOperator{\ord}{sort}
\DeclareMathOperator{\diag}{diag}
\DeclareMathOperator*{\argmax}{arg\,max}
\newcommand{\ps}[1]{\ensuremath{\alpha_{#1}}}
\newcommand{\psvec}{\ensuremath{\boldsymbol{\alpha}}}
\newcommand{\psm}[1]{\ensuremath{\alpha^{(\mathrm{m})}_{#1}}}
\newcommand{\psmvec}{\ensuremath{\boldsymbol{\alpha}^{(\mathrm{m})}}}
\newcommand{\sord}{\ensuremath{{\v'}}}
\newcommand{\psord}[1]{\ensuremath{\tilde{\alpha}_{#1}}}
\newcommand{\psordvec}{\ensuremath{\tilde{\boldsymbol{\alpha}}}}
\newcommand{\pu}[1]{\ensuremath{\beta_{#1}}}
\newcommand{\pum}[1]{\ensuremath{\beta^{(\mathrm{m})}_{#1}}}
\newcommand{\pumvec}{\ensuremath{\boldsymbol{\beta}^{(\mathrm{m})}}}
\newcommand{\designPr}[1]{\ensuremath{\gamma_{#1}}}
\newcommand{\qFactorSymRand}{\tilde{Q}}
\newcommand{\puwc}[1]{\ensuremath{\tilde{\beta}_{#1}}}
\newcommand{\SG}[1]{\ensuremath{\mathfrak{S}_{\shots,\maxShotWeight}({#1})}}
\newcommand{\defeq}{\coloneqq}
\newcommand{\eqdef}{\eqqcolon}
\newcommand{\sample}{\overset{\$}{\gets}}
\newcommand{\inshots}{\ensuremath{\in\{1,\ldots,\shots\}}}
\newcommand{\inshotsarg}[1]{\ensuremath{\in\{1,\ldots,#1\}}}
\newcommand{\quadbinomq}[2]{\left[\genfrac{}{}{0pt}{}{#1\vphantom{N_N}}{#2\vphantom{N}}\right]_{q}}
\newcommand{\NN}{\ensuremath{\mathbb{Z}_{\geq 0}}}
\DeclareMathOperator{\wt}{wt}
\DeclareMathOperator{\rkq}{rk_{q}}
\DeclareMathOperator{\rkqm}{rk_{q^m}}
\DeclareMathOperator{\dimq}{\dim_{q}}
\newcommand{\SumRankWeight}{\ensuremath{\wt_{\Sigma R}}}
\newcommand{\SumRankWeightN}{\ensuremath{\wt_{\Sigma R}^{(\n)}}}
\newcommand{\SumRankDist}{d_{\ensuremath{\Sigma}R}}
\newcommand{\SumRankDistN}{d_{\ensuremath{\Sigma}R}^{(\n)}}
\newcommand{\wF}{w_F}
\newcommand{\wC}{w_C}
\newcommand{\wR}{w_R}
\newcommand{\errWeight}{\ensuremath{w}}
\newcommand{\decRadius}{\ensuremath{t}}
\newcommand{\errWeightVec}{\ensuremath{\w}}
\newcommand{\wcRP}{\ensuremath{\w_\mathrm{wc}}}
\newcommand{\subwtVec}{\ensuremath{\u}}
\newcommand{\sdInter}{\ensuremath{\epsilon}}
\newcommand{\sdInterMax}{\ensuremath{n-k}}
\newcommand{\sdInterVec}{\ensuremath{\bm{\epsilon}}}
\newcommand{\minEps}{\epsilon_\mathrm{min}}
\newcommand{\errEx}{\xi}
\newcommand{\PermCount}[1]{|\SG{#1}|}
\newcommand{\UniqueDecRad}{\ensuremath{\tau}}
\newcommand{\dmin}{\ensuremath{d_\mathrm{min}}}
\newcommand{\smax}{\ensuremath{\supwt_\mathrm{max}}}
\newcommand{\wRel}{\ensuremath{w_\mathrm{rel}}}
\newcommand{\wGVm}{\ensuremath{w^{-}}}
\newcommand{\wGVp}{\ensuremath{w^{+}}}
\newcommand{\wEasym}{\ensuremath{w_\mathrm{easy}^{-}}}
\newcommand{\wEasyp}{\ensuremath{w_\mathrm{easy}^{+}}}
\DeclareMathOperator{\LRS}{LRS}
\newcommand{\LRSCode}{\ensuremath{\mathcal{C}_{\LRS}}}
\newcommand{\Code}{\ensuremath{\mathcal{C}}}
\DeclareMathOperator{\DEC}{DEC}
\newcommand{\LinearCode}[2]{\ensuremath{{\Code[#1]}_{#2}}}
\newcommand{\boltzmParam}{\lambda}
\newtheorem{theorem}{Theorem}
\newtheorem{definition}{Definition}
\newtheorem{lemma}{Lemma}
\newtheorem{corollary}{Corollary}
\newtheorem{problem}{Problem}
\newtheorem{remark}{Remark}
\newtheorem{proposition}{Proposition}
\title{Support-Guessing Decoding Algorithms in the Sum-Rank Metric}
\author{
 Thomas Jerkovits~\orcidlink{0000-0002-7538-7639},~\IEEEmembership{Member,~IEEE, }%
 Hannes Bartz~\orcidlink{0000-0001-7767-1513},~\IEEEmembership{Member,~IEEE, }%
 Antonia Wachter-Zeh~\orcidlink{0000-0002-5174-1947},~\IEEEmembership{Member,~IEEE}%
  \thanks{This work was supported by the German Research Council (DFG) as an ANR-DFG project under Grant no. WA 3907/9-1.}
}
\begin{document}

\maketitle

\vspace{-1em}
\begin{abstract}
The sum-rank metric generalizes the Hamming and rank metric by partitioning vectors into blocks and defining the total weight as the sum of the rank weights of these blocks, based on their matrix representation.w
In this work, we explore support-guessing algorithms for decoding sum-rank-metric codes. Support-guessing involves randomly selecting candidate supports and attempting to decode the error under the assumption that it is confined to these supports. While previous works have focused on worst-case scenarios, we analyze the average case and derive an optimal support-guessing distribution in the asymptotic regime. We show that this distribution also performs well for finite code lengths. Our analysis provides exact complexity estimates for unique decoding scenarios and establishes tighter bounds beyond the unique decoding radius.

Additionally, we introduce a randomized decoding algorithm for Linearized Reed--Solomon (LRS) codes. This algorithm extends decoding capabilities beyond the unique decoding radius by leveraging an efficient error-and-erasure decoder. Instead of requiring the entire error support to be confined to the guessed support, the algorithm succeeds as long as there is sufficient overlap between the guessed support and the actual error support. As a result, the proposed method improves the success probability and reduces computational complexity compared to generic decoding algorithms.

Our contributions offer more accurate complexity estimates than previous works, which are essential for understanding the computational challenges involved in decoding sum-rank-metric codes. This improved complexity analysis, along with optimized support-guessing distributions, provides valuable insights for the design and evaluation of code-based cryptosystems using the sum-rank metric. This is particularly important in the context of quantum-resistant cryptography.
\end{abstract}

\begin{IEEEkeywords}
 Linearized Reed--Solomon Codes, Beyond Unique Decoding, Sum-Rank Metric, Generic Decoding, Support Guessing
\end{IEEEkeywords}

\section{Introduction}
\label{sec:introduction}

The sum-rank metric, introduced for space-time codes in~\cite{luUnifiedConstructionSpacetime2005}, generalizes both the Hamming and rank metrics by partitioning vectors into several blocks and defining the sum-rank weight as the sum of the rank weights of its blocks. Here, the \emph{rank weight} of a block refers to the rank of its matrix representation, as used in rank-metric codes~\cite{gabidulinTheoryCodesMaximum1985}. 

In recent years, the sum-rank metric has gained significant attention, leading to the development of various code constructions and decoding algorithms~\cite{wachterPartialUnitMemory2011,wachter-zehRankMetricConvolutional2012,wachter-zehConvolutionalCodesRank2015,nappMRDRankMetric2017,martinez-penasSkewLinearizedReed2018,martinez-penasReliableSecureMultishot2019,bartzFastKotterNielsen2024,bartzDecodingInterleavedLinearized2021}. \Ac{LRS} codes, introduced by Martínez-Peñas~\cite{martinez-penasUniversalDynamicLocally2019}, achieve the Singleton-like bound in the sum-rank metric with equality and include Reed--Solomon and Gabidulin codes as special cases. These codes have found applications in multishot network coding~\cite{nobregaMultishotCodesNetwork2010,martinez-penasReliableSecureMultishot2019}, locally repairable codes~\cite{martinez-penasUniversalDynamicLocally2019}, and space-time codes~\cite{luUnifiedConstructionSpacetime2005}.

The sum-rank metric has also attracted interest in the context of code-based cryptography. It offers a balanced approach between the Hamming and rank metric, making it an appealing choice for constructing secure code-based cryptosystems resilient to quantum and classical attacks. 
Generic decoding algorithms in the rank metric, such as the GRS algorithm~\cite{gaboritComplexityRankSyndrome2016}, are more complex than generic decoding algorithms in the Hamming metric, which typically rely on (improvements of) Prange's information set decoding.
Consequently, rank-metric-based cryptography can use smaller parameters than in the Hamming metric while maintaining the same level of security, leading to reduced key sizes. However, many rank-metric cryptosystems depend on highly structured codes, which have been vulnerable to structural attacks and, in some cases, have been broken~\cite{gibsonSeverelyDentingGabidulin1995,gibsonSecurityGabidulinPublic1996,overbeckExtendingGibsonAttacks2006,rashwanSecurityGPTCryptosystem2011,wachter-zehRepairingFaureLoidreauPublicKey2018}.

Although cryptosystems based on the sum-rank metric have yet to be extensively studied, it is hoped that by carefully choosing the block sizes and the number of blocks, the sum-rank metric can be tuned to achieve a desirable balance between security and key size. This flexibility may offer resistance to attacks that exploit vulnerabilities specific to either the Hamming or rank metric, where many attacks effective in one may prove ineffective in the other~\cite{hormannDistinguishingRecoveringGeneralized2023}.

While most code-based cryptosystems, such as the McEliece cryptosystem, focus on unique decoding up to the unique decoding radius, understanding the complexity of decoding in the sum-rank metric beyond the unique decoding radius is also important for cryptographic applications. Cryptosystems like the \ac{FL}~system~\cite{faureNewPublicKeyCryptosystem2006,wachter-zehRepairingFaureLoidreauPublicKey2018,rennerLIGACryptosystemBased2021} rely on the difficulty of decoding structured codes beyond the unique decoding radius, even when the attacker knows the code's structure.

On the other hand, cryptosystems like WAVE~\cite{debris-alazardWaveNewFamily2019}, BIKE~\cite{aragon2020bike}, HQC~\cite{melchor2020hqc}, and RQC~\cite{melchorRankQuasicyclicRQC2020} require solving the syndrome decoding problem for random-like codes. Therefore, analyzing decoding complexities both within and beyond the unique decoding radius is crucial for assessing the security of such systems.

Recently, Puchinger et al.~\cite{puchingerGenericDecodingSumRank2022} proposed the first non-trivial generic decoding algorithm for arbitrary $\mathbb{F}_{q^m}$-linear codes in the sum-rank metric. Their algorithm solves the generic decoding problem for error weights up to $n-k$, where $n$ is the code length and $k$ is the code dimension. They derive upper and lower bounds on the expected decoding complexity for the worst-case rank profile, which refers to the distribution of ranks across the individual blocks of the error vector, as defined by the sum-rank metric. However, in deriving the lower bound, they assume a fixed error, meaning that this lower bound is only valid for unique decoding scenarios, such as in the McEliece setting, where the error weight is up to $(d-1)/2$, with $d$ being the minimum distance of the code.
Beyond the unique decoding radius, alternative solutions to the decoding problem may exist, and the lower bound does not account for these, making it loose in this regime. Additionally, the gap between the upper and lower bounds in~\cite{puchingerGenericDecodingSumRank2022} is significant for some parameters, indicating the potential for improvement. We need a tight lower bound on the decoding complexity to ensure that the system's best-known attack is as hard as brute-forcing all possible keys. This bound effectively serves as an upper limit on the attacker's efficiency, ensuring that any potential attack cannot be easier than brute-force. Therefore, improving the lower bound on the expected decoding complexity is essential for accurately assessing the security level and selecting appropriate parameters when designing such cryptosystems.

This paper addresses these limitations by extending the analysis to the average case (over all rank profiles). We derive a support drawing distribution tailored for the average-case scenario, optimizing the expected decoding complexity in the asymptotic setting. This solution also performs well for finite lengths, as numerical evaluations demonstrate. Our derived theoretical complexity can be computed efficiently and is exact for the case of unique decoding, where exactly one solution exists (e.g., in the McEliece setting). For decoding beyond the unique decoding radius, i.e., for error weights larger than $(d-1)/2$, our exact complexity becomes an upper bound. We introduce a lower bound that accounts for alternative solutions using \ac{RCU} bound arguments, providing a more accurate estimation of the decoding complexity in this regime.

In our previous work~\cite{jerkovitsRandomizedDecodingLinearized2023}, we generalized the randomized decoding algorithm from~\cite{rennerRandomizedDecodingGabidulin2020} in the rank metric to the sum-rank metric. In this paper, we extend that analysis by providing additional insights, proofs, and discussions and by considering the asymptotic average case, similar to our analysis of the generic decoding algorithm.

Conceptually, the randomized decoding algorithm works by randomly guessing part of the error's support. Unlike the generic decoder, which requires the guessed support to fully contain the error's support, the randomized decoding algorithm for \ac{LRS} codes only requires sufficient overlap between the guessed support and the error's support. By using an error-and-erasure decoder (e.g.,~\cite{hormannErrorErasureDecodingLinearized2022}), successful decoding is possible if enough of the error lies within the guessed support. This relaxation improves the success probability and reduces the expected decoding complexity, particularly for errors with weights only slightly above the unique decoding radius.

We analyze the probability of successful decoding for specific distributions of the guessed supports and propose methods to find the optimal distribution. By exploiting the structure of the underlying \ac{LRS} code, our approach achieves better expected computational complexity compared to the generic decoding algorithm introduced in~\cite{puchingerGenericDecodingSumRank2022}.

Additionally, we introduce a new Prange-like algorithm for the generic case in the sum-rank metric, specifically designed for the asymptotic average-case setting where the number of blocks tends to infinity and the sizes of the individual blocks are fixed. Similar to the modified Prange algorithm used in the WAVE scheme~\cite{debris-alazardWaveNewFamily2019} for the Hamming metric, our algorithm can also find a solution to the generic decoding problem for large error weights in the sum-rank metric.

\subsection*{Organization of the Paper}
In Section~\ref{sec:preliminaries}, we provide the necessary preliminaries on the sum-rank metric and \ac{LRS} codes. Section~\ref{sec:problem-description} discusses several decoding problems in the sum-rank metric that are of interest for various applications, highlighting their relevance to cryptography. In Section~\ref{sec:generic-decoding}, we extend the analysis of the generic decoding algorithm to the average case and derive optimal support drawing distributions. Additionally, we provide an improved simple closed-form upper bound that enhances the simple bound introduced in~\cite{puchingerGenericDecodingSumRank2022}. Section~\ref{sec:generic-decoding-prange} introduces the new Prange-like algorithm for the generic case in the sum-rank metric, specifically designed for the asymptotic average-case setting with a large number of blocks. Section~\ref{sec:random_decoding} presents the extended analysis of the randomized decoding algorithm for \ac{LRS} codes, along with its optimization. Finally, in Section~\ref{sec:conclusion}, we conclude the paper and discuss potential directions for future work.

\section{Preliminaries}\label{sec:preliminaries}

This section introduces the notation used throughout the paper, defines the sum-rank metric, and presents the concepts of linear codes and the considered channel model. We also define the error support in the sum-rank metric and \ac{LRS} codes.

\subsection{Notation}
In this paper, we let $n$ and $m$ be integers with $1\leq m$ and $1\leq  n$. Let $q$ be a prime power and denote the finite field of order $q$ by $\Fq$. We denote the extension field of $\Fq$ of degree $m$ as $\Fqm$. We denote the set of all $k$-dimensional subspaces of $\Fq^\mu$ as the Grassmannian $\Grassm{q}{\Fq^\mu}$.

Let $\mathcal{A}$ be a discrete set. The cardinality of $\mathcal{A}$, denoted by $|\mathcal{A}|$, is the number of elements in $\mathcal{A}$. 
We use the notation $a \sample \mathcal{A}$ to denote an element $a$ drawn uniformly at random from a set $\mathcal{A}$.
We use $[0, 1] \subset \mathbb{R}$ to denote the interval of real numbers between 0 and 1, inclusive. We define the set of all valid \acp{PMF} over $\mathcal{A}$ as
\begin{equation}
\pmfset{\mathcal{A}} = \left\{ \ps{\s} \in {[0, 1]}^{|\mathcal{A}|} \subset \mathbb{R}^{|\mathcal{A}|} : \sum_{\s \in \mathcal{A}} \ps{\s} = 1 \right\}. \label{eq:pmfset}
\end{equation}
Here, $\ps{\s} \in {[0, 1]}^{|\mathcal{A}|}$ denotes a vector of length $|\mathcal{A}|$ with real values in the interval $[0, 1]$.

Since each component $\ps{\s}$ can take any real value in $[0, 1]$ and the sum of these components must be 1, the set $\pmfset{\mathcal{A}}$ forms a probability simplex in $\mathbb{R}^{|\mathcal{A}|}$. Therefore, there are uncountably many \acp{PMF} over $\mathcal{A}$.

For a given integer $e$ and $f$ such that $1 \leq e \leq f \leq w$, we use the following notation to denote a \emph{submatrix} of $\A \in \Fq^{v \times w}$ consisting of the selected columns:
\begin{equation*}
    \A_{[e:f]} \coloneqq
    \begin{bmatrix}
        A_{1,e} & \cdots & A_{1,f} \\
        \vdots & \ddots & \vdots \\
        A_{v,e} & \cdots & A_{v,f}
    \end{bmatrix},
\end{equation*}
where $\A_{[e:f]}$ is a submatrix of $\A$ formed by all rows but only columns $e$ to $f$.

\subsection{Sum-Rank Metric}

This section provides a concise overview of the sum-rank weight and sum-rank distance, along with other relevant notations that will be used throughout the paper.
\begin{definition}[Sum-Rank Weight and Sum-Rank Distance]
Let $\x=[\x^{(1)}\mid\ldots\mid\x^{(\shots)}]\in\Fqm^n$ be a vector, that is partitioned into blocks $\x^{(i)}\in\Fqm^{n_i}$ with respect to a length profile $\n=[n_1,\ldots,n_\shots]\in\NN^\shots$.
The sum-rank weight of $\x$ with respect to the length profile $\n$ is then defined as
\begin{equation}
    \SumRankWeightN(\x) \defeq \sum_{i=1}^{\shots} \rkq\left(\x^{(i)}\right),
\end{equation}    
where $\rkq(\x^{(i)}) \defeq \dimq {\langle x_{1}^{(i)},\ldots,x_{n_i}^{(i)} \rangle}$ denotes the dimension of the $\Fq$-span of the entries of $\x^{(i)}$.
The sum-rank distance of two vectors $\x,\y\in\Fqm^n$ is then defined by
\begin{equation}
\SumRankDistN(\x,\y) \defeq \SumRankWeightN(\x-\y).    
\end{equation}
\end{definition}

In some parts of the paper, we restrict our analysis to the case where $n = \shots\shotLength$ with $\n = [\shotLength,\ldots,\shotLength]\in\NN^\shots$. In this case, we omit $\n$ in the notation and simply write $\SumRankWeight(\cdot)$ and $\SumRankDist(\cdot,\cdot)$, respectively. Furthermore, we denote the maximum possible rank weight for each block as
\begin{equation}
\maxShotWeight \defeq \min\{m,\shotLength\}.
\end{equation}

\begin{remark}
    Note that if $n_i = 1$ for all $i\in\{1,\ldots,\shots\}$, we have that the sum-rank weight of a vector $\x\in\Fqm^n$ is the number of nonzero entries of $\x$. This means, that the sum-rank weight coincides with the Hamming weight. For $\shots = 1$ the sum-rank weight coincides with the rank weight. Thus, codes in the Hamming metric and codes in the rank metric can be seen as special cases within the sum-rank metric.
\end{remark}
For a vector $\x=[\x^{(1)},\ldots,\x^{(\shots)}]\in\Fqm^n$ with sum-rank weight $w=\SumRankWeight(\x)$, we are at times interested in the rank profile of $\x$, which describes the rank weights of the individual blocks.
Therefore we define the map $\wdcmp : \Fqm^n \to \{0,\ldots,\maxShotWeight\}^\shots$ as
\begin{equation}
    \wdcmp(\x) =  [\rkq(\x^{(1)}), \ldots, \rkq(\x^{(\shots)})],
\end{equation}
and call $\wdcmp(\x)$ the \emph{rank profile of $\x$}.
Furthermore, we define the set of all possible rank profiles as follows.
\begin{definition}
Let $\errWeight$, $\shots$ and $\maxShotWeight$ be non-negative integers such that $\errWeight \leq \shots \maxShotWeight$. We define the set
\begin{equation}
    \wdecomp{\errWeight,\shots,\maxShotWeight} \defeq \left\{ \errWeightVec \in \{0,\ldots,\maxShotWeight\}^\shots \st \sum_{i=1}^{\shots} \errWeight_i = \errWeight \right\},
\end{equation}
which contains all possible rank profiles of a vector consisting of $\shots$ blocks and sum-rank weight $\errWeight$.%
\footnote{Rank profiles are closely related to the concept known in the literature as \emph{weak integer compositions}, where an integer is expressed as the sum of non-negative integers. In contrast to weak integer compositions, rank profiles have the additional constraint that each part (rank weight) does not exceed a fixed upper bound, which in our case is $\maxShotWeight$.}
\end{definition}

In \cite{puchingerGenericDecodingSumRank2022}, an upper bound on the cardinality of the set $\wdecomp{\errWeight,\shots,\maxShotWeight}$ has been derived as
\begin{equation} \label{eq:wdecomp_bound}
    |\wdecomp{\errWeight,\shots,\maxShotWeight}| \leq \binom{\shots+\errWeight-1}{\shots-1}.
\end{equation}

We now define the set that contains only the ordered rank profiles.
\begin{definition}
Let $\errWeight$, $\shots$, and $\maxShotWeight$ be non-negative integers with $\errWeight \leq \shots \maxShotWeight$. We define the set
\begin{equation}
    \intparts{\errWeight,\shots,\maxShotWeight} \defeq \left\{ \errWeightVec \in \wdecomp{\errWeight,\shots,\maxShotWeight} \st \errWeight_1 \geq \errWeight_2 \geq \dots \geq \errWeight_\shots \right\}.
\end{equation}  
Without the restrictions on the maximal length $\shots$ and the maximum value $\maxShotWeight$, this set would be identical to the set of integer partitions of $\errWeight$. However, with these restrictions, it forms a special case of integer partitions, constrained in length by $\shots$ and in value by $\maxShotWeight$. Note that every element $\w \in \intparts{\errWeight,\shots,\maxShotWeight}$ is also an element in $\wdecomp{\errWeight,\shots,\maxShotWeight}$, but not vice versa.
\end{definition}

Next, for a given $\w \in \wdecomp{\errWeight,\shots,\maxShotWeight}$, we denote by $\SG{\w}$ the set of all possible permutations of $\w$. Hence, each element $\sigma$ in $\SG{\w}$ is an automorphism, i.e., $\sigma : \wdecomp{\errWeight,\shots,\maxShotWeight} \to \wdecomp{\errWeight,\shots,\maxShotWeight}$. 

Additionally, let $\tilde{\w} = [\tilde{w}_1,\ldots,\tilde{w}_\shots]$ be the vector obtained by sorting the entries of $\w$ in decreasing order. We denote this operation by $\ord(\w)$, so that $\tilde{\w} = \ord(\w)$. In other words, $\ord(\w)$ represents the permutation of the entries of $\w$ that yields the sorted vector $\tilde{\w}$, such that
\begin{equation}
    \tilde{w}_1 \geq \dots \geq \tilde{w}_\shots.
\end{equation}

This leads to the following two relations between the two sets $\intparts{\errWeight,\shots,\maxShotWeight}$ and $\intparts{\errWeight,\shots,\maxShotWeight}$:
\begin{align}
\intparts{\errWeight,\shots,\maxShotWeight} &= \{ \ord(\w) \st \forall \w\in\wdecomp{\errWeight,\shots,\maxShotWeight} \}, \\
    \wdecomp{\errWeight,\shots,\maxShotWeight} &= \bigcup\limits_{\w\in\intparts{\errWeight,\shots,\maxShotWeight}}\{ \sigma(\w) \st \forall \sigma \in\SG{\w}  \}.
\end{align}

For a given rank profile $\w\in\wdecomp{\errWeight,\shots,\maxShotWeight}$ we also define the restricted set
\begin{equation}
    \wdecomp{\errWeight,\shots,\maxShotWeight}^{\geq\w} \defeq \{ \w' \in \wdecomp{\errWeight,\shots,\maxShotWeight} \st w_1' \geq w_1, \dots, w_\shots' \geq w_\shots \}.
\end{equation}

For an element $\errWeightVec\in\intparts{\errWeight,\shots,\maxShotWeight}$ denote by $\PermCount{\w}$ the number of possible permutations of $\errWeightVec$ which is given by the multinomial as
\begin{equation}
    \PermCount{\w} = \binom{\shots}{\lambda_1,\lambda_2,\ldots,\lambda_\maxShotWeight} = \frac{\shots!}{\lambda_1 ! \lambda_2 ! \cdots \lambda_\maxShotWeight !},
\end{equation}
where $\lambda_i$ is the number of occurrences of the integer $i$ in $\errWeightVec$ for all $i=0,\ldots,\maxShotWeight$. Hence, from all permutations from all elements in $\intparts{\errWeight,\shots,\maxShotWeight}$ we obtain the set $\wdecomp{\errWeight,\shots,\maxShotWeight}$.

The set of all vectors in $\Fqm^n$ of sum-rank weight $\SumRankWeight(\e)=\errWeight$ is denoted as
\begin{equation}\label{eq:error set definition}
    \errorSet_{q,\shotLength,m,\shots}(\errWeight) \defeq \{ \e \in \Fqm^n \mid \SumRankWeight(\e) = \errWeight \}.
\end{equation}
The cardinality of this set is given by (see~\cite{puchingerGenericDecodingSumRank2022})
\begin{equation}
    |\errorSet_{q,\shotLength,m,\shots}(\errWeight)| = \sum_{\errWeightVec\in\wdecomp{\errWeight,\shots,\maxShotWeight}} \prod_{i=0}^{\shots-1} \NMq{q}{m,\shotLength,\errWeight_i},
    \label{eq:error_set_cardinality}
\end{equation}
where $\errWeightVec=[\errWeight_0,\ldots,\errWeight_{\shots-1}]$ denotes the decomposition of the sum-rank weight $\errWeight$ into $\shots$ non-negative integers $\errWeight_i$ such that $\sum_{i=0}^{\shots-1} \errWeight_i = \errWeight$ and $\errWeight_i \leq \maxShotWeight$ for all $i$. The set of all such decompositions is denoted by $\wdecomp{\errWeight,\shots,\maxShotWeight}$.
The cardinality expression in \eqref{eq:error_set_cardinality} can be efficiently computed using a dynamic programming approach, as described in~\cite{puchingerGenericDecodingSumRank2022}.

The term $\NMq{q}{m,\shotLength,\errWeight_i}$ represents the number of matrices of size $m \times \shotLength$ of rank $\errWeight_i$ over the finite field $\Fq$. It can be calculated using the following formula \cite[Chapter 13]{macwilliamsTheoryErrorCorrectingCodes1977}\cite{miglerWeightRankMatrices2004}
\begin{align}\label{eq:def-cardinality-of-matrix-set}
    \NMq{q}{m,\shotLength,\errWeight_i} &= \prod_{j=0}^{\errWeight_i-1} \frac{(q^m-q^j)(q^\shotLength-q^j)}{q^{\errWeight_i}-q^j} \\
    &= \quadbinomq{\shotLength}{\errWeight_i}\prod_{j=0}^{\errWeight_i-1} (q^m-q^j)\\
    &=  \quadbinomq{m}{\errWeight_i}\prod_{j=0}^{\errWeight_i-1} (q^\shotLength-q^j),
\end{align}
where $\quadbinomq{a}{b}$ denotes the $q$-binomial coefficient, also known as the Gaussian binomial coefficient, which is defined as
\begin{equation}
    \quadbinomq{a}{b} \defeq \prod_{i=1}^{b}\frac{q^{a-b+i} - 1}{q^i - 1} = \prod_{i=0}^{b-1} \frac{(q^{a-i} - 1)}{ (q^{b-i} - 1)}.
\end{equation}

Here, $a$ and $b$ are integers with $a \geq b \geq 0$, and $q$ is a prime power.

The Gaussian binomial coefficient satisfies the following inequality~\cite{koetterCodingErrorsErasures2008}
\begin{equation}
    q^{(a-b)b} \leq \quadbinomq{a}{b} \leq \gamma_q q^{(a-b)b},
\end{equation}
with $\gamma_q$ defined as
\begin{equation}\label{eq:def-qconstant}
    \gamma_q \defeq \prod_{i=1}^{\infty}(1-q^{-i})^{-1}.
\end{equation}

In the following, we present expressions for the probability that two subspaces intersect in a fixed-dimensional space and the probability of one subspace being a subspace of another.

Let $\myspace{A}$ and $\myspace{B}$ be two subspaces of $\Fq^{\mu}$ with dimensions $a$ and $b$, respectively.

We define the conditional probability $\intersectprob{\mu}{a, b}(j)$ as the probability that the intersection of $\myspace{A}$ and $\myspace{B}$ has dimension exactly $j$, given their dimensions $a$ and $b$. This probability is given by (see~\cite{rennerRandomizedDecodingGabidulin2020})
\begin{equation}\label{eq:notation intersect prob}
    \intersectprob{\mu}{a, b}(j) \defeq \Pr[\dim(\myspace{A} \cap \myspace{B}) = j \mid a, b] = \frac{\quadbinomq{\mu-a}{b - j}\quadbinomq{a}{j} q^{(a-j)(b-j)}}{\quadbinomq{\mu}{b}}.
\end{equation}

Next, we define the probability that $\myspace{A}$ is a subspace of $\myspace{B}$, denoted by $\subspaceprob{\mu}{a}{b}$, where $a \leq b$. This probability is given by~\cite{koetterCodingErrorsErasures2008}
\begin{equation}\label{eq:notation subspace prob}
    \subspaceprob{\mu}{a}{b} \defeq \Pr[\myspace{A} \subseteq \myspace{B} \mid a, b] = \frac{\quadbinomq{b}{a}}{\quadbinomq{\mu}{a}}.
\end{equation}

\begin{remark}
The probabilities $\intersectprob{\mu}{a, b}(j)$ and $\subspaceprob{\mu}{a}{b}$ hold whether $\myspace{A}$ and $\myspace{B}$ are both drawn uniformly at random from $\Grassm{q}{\Fq^\mu}$, or one of them is fixed and the other is drawn uniformly at random from $\Grassm{q}{\Fq^\mu}$.
\end{remark}

\begin{remark}
The probability $\subspaceprob{\mu}{a}{b}$ is equal to the intersection probability $\intersectprob{\mu}{a, b}(b)$ (or $\intersectprob{\mu}{a, b}(a)$) when the dimension of the intersection is equal to the dimension of the smaller subspace $\myspace{A}$, i.e.,
\begin{align}
    \subspaceprob{\mu}{a}{b} &= \intersectprob{\mu}{a, b}(b) = \frac{\quadbinomq{\mu-a}{b - a}}{\quadbinomq{\mu}{b}} = \frac{\quadbinomq{b}{a}}{\quadbinomq{\mu}{a}} &  \text{ if } a \leq b,\\
    \subspaceprob{\mu}{b}{a} &= \intersectprob{\mu}{a, b}(a) = \frac{\quadbinomq{a}{b}}{\quadbinomq{\mu}{b}} & \text{ if } b \leq a.
\end{align}
This relationship holds because $\myspace{A}$ is a subspace of $\myspace{B}$ if and only if the dimension of their intersection is the same as that of $\myspace{A}$.
\end{remark}

\subsection{Linear Codes}
A linear code over $\Fqm$ with dimension $k$ and length $n$ is a $k$-dimensional subspace of $\Fqm^n$, denoted as $\LinearCode{\n,k}{\Fqm}$.
The minimum sum-rank distance of $\LinearCode{\n,k}{\Fqm}$ w.r.t to $\n$ is defined as
\begin{equation}
    \dmin = \min_{\substack{\c_1,\c_2\in\LinearCode{\n,k}{\Fqm} \\ \c_1\neq \c_2}} \left\{ \SumRankDistN{(\c_1,\c_2)} \right\}.
\end{equation}
When the minimum distance $\dmin$ of a linear code $\LinearCode{\n,k}{\Fqm}$ is known, we refer to $\LinearCode{\n,k}{\Fqm}$ as an $\LinearCode{n,k,\dmin}{\Fqm}$ code.
We define the \emph{unique decoding radius} $\UniqueDecRad$ of a code with minimum distance $\dmin$ as
\begin{equation*}
    \UniqueDecRad \defeq \left\lfloor \frac{\dmin - 1}{2} \right\rfloor.
\end{equation*}
This is the maximum weight of errors (in the respective metric) that can be uniquely decoded by the code.
In this paper, we concentrate on codes with distance properties in the sum-rank metric. To emphasize that the sum-rank distance is computed with respect to the length profile $\n = [n_1, n_2, \dots, n_\shots]$, we denote such codes as $\LinearCode{\n,k}{\Fqm}$. In the special case where all blocks have the same length $\shotLength$, resulting in a total length of $n=\shotLength\shots$, we use the notation $\LinearCode{n,k}{\Fqm}$.

A matrix $\G \in \Fqm^{k \times n}$ is called a generator matrix of the code $\LinearCode{\n,k}{\Fqm}$ if and only if the rows of $\G$ form a basis for $\LinearCode{n,k}{\Fqm}$. In other words, the linear combinations of the rows of $\G$ generate all the codewords in $\LinearCode{n,k}{\Fqm}$.

Additionally, a matrix $\H \in \Fqm^{(n-k) \times n}$ is called a parity-check matrix of $\LinearCode{n,k}{\Fqm}$ if its rows form a basis for the right kernel (i.e., the null space) of the generator matrix $\G$.

\subsection{Channel Model}\label{sec:pre:channel model}
Consider a codeword $\c \in \LinearCode{n,k}{\Fqm}$ that is corrupted by an error $\e$ of sum-rank weight $\errWeight$. The received word $\y$ is then given by
\begin{equation}
\y = \c + \e.
\end{equation}
The error vector $\e \in \Fqm^n$ with sum-rank weight $\SumRankWeight(\e) = w$ can be partitioned into blocks $\e^{(1)}, \dots, \e^{(\shots)}$ according to the length profile $\n$. 
This partitioned vector can be decomposed as
\begin{equation}\label{eq:errorDecomp}
\e = \a\B = \left[\a^{(1)}\mid\dots\mid\a^{(\shots)}\right] \cdot \diag(\B^{(1)}, \ldots,\B^{(\shots)}),
\end{equation}
where $\a^{(i)}\in\Fqm^{w_i}$, $\B^{(i)}\in\Fq^{w_i \times n_i}$ with $\rkq(\a^{(i)})=\rkq(\B^{(i)})=w_i$ for all $i\in\{1,\ldots,\shots\}$, and $w = \sum_{i=1}^{\shots} w_i$.
It follows that $\e^{(i)} = \a^{(i)}\B^{(i)}$ for $i\in\{1,\ldots,\shots\}$, where the entries of $\a^{(i)}$ form a basis over $\Fq$ of the column space of $\e^{(i)}$ and the rows of $\B^{(i)}$ form a basis over $\Fq$ of its row space.

\subsection{Error Support}
In the following, we define the notions of row support and column support for vectors in the sum-rank metric.
\begin{definition}[Row and Column Support]\label{def:row-column-sum-rank-support}
Let $\e\in\Fqm^n$ be of sum-rank weight $\errWeight$. 
\begin{itemize}
    \item \textbf{Row Support:} The row support of $\e$ is defined as
    \begin{equation}
        \RSv \defeq \RS^{(1)} \times \RS^{(2)} \times \cdots \times \RS^{(\shots)},
    \end{equation}
    where $\RS^{(i)} \subseteq \Fq^{n_i}$ is the $\Fq$-row space of $\B^{(i)}\in\Fq^{\errWeight_i \times n_i}$ and thus of $\e^{(i)}$ as in~\eqref{eq:errorDecomp} for all $i\in\{1,\ldots,\shots\}$.
    
    \item \textbf{Column Support:} The column support of $\e$ is defined as
    \begin{equation}
        \CSv \defeq \CS^{(1)} \times \CS^{(2)} \times \cdots \times \CS^{(\shots)},
    \end{equation}
    where $\CS^{(i)} \subseteq \Fq^{m}$ is the $\Fq$-column space of $\e^{(i)}$. Specifically, each entry of $\e^{(i)} \in \Fqm$ is expanded as a vector in $\Fq^m$ using a fixed basis of $\Fqm$ over $\Fq$. The column support is then defined as the $\Fq$-span of these expanded vectors, for all $i\in\{1,\ldots,\shots\}$.
\end{itemize}    
\end{definition}

Assume $\ESv$ to be either a row or column support of the error as defined in Definition~\ref{def:row-column-sum-rank-support}.
We denote by $\sumDim(\ESv)$ the \emph{sum dimension} of an error support: %
\begin{equation}
    \sumDim(\ESv) \defeq \sum_{i=1}^{\shots} \dim(\ES^{(i)}).
\end{equation}
The intersection of two supports $\ES_1$ and $\ES_2$ is defined as
\begin{equation}
    \ESv_1 \cap \ESv_2 \defeq \left( \ES_1^{(1)} \cap \ES_2^{(1)} \right) \times \cdots \times \left( \ES_1^{(\shots)} \cap \ES_2^{(\shots)} \right).
\end{equation}

Given two supports $\ESv_1$ and $\ESv_2$, we say that $\ESv_2$ is a \emph{row super-support} of $\ESv_1$, denoted by $\ESv_1 \subseteq \ESv_2$, if and only if $\ES_1^{(i)} \subseteq \ES_2^{(i)}$ for all $i\in\{1,\ldots,\shots\}$.

\begin{definition}
    Let $\mu$ be a positive integer and $0\leq w \leq \shots\mu$. For $\w\in\wdecomp{w,\shots, \mu}$, we define the set of all supports as
    \begin{equation}
        \Xi_{q,\mu}(\w) \defeq \left\{ \mathcal{F}_1 \times \cdots \times \mathcal{F}_{\shots} : \mathcal{F}_i \subseteq \Fq^\mu \st \dim(\mathcal{F}_i)=w_i \right\}.
    \end{equation}
\end{definition}

\subsection{Linearized Reed--Solomon Codes}
\Ac{LRS} codes, introduced by Martínez-Peñas in~\cite{martinez-penasSkewLinearizedReed2018}, are a class of codes in the sum-rank metric. An \ac{LRS} code of length $n$, dimension $k$, and length partition $\n = \left[n_1,\ldots,n_\shots\right]$ over $\Fqm$ is denoted by $\LRSCode[\n, k]$.

The key properties of \ac{LRS} codes are:
\begin{itemize}
    \item \ac{LRS} codes achieve the Singleton-like bound in the sum-rank metric, i.e., their minimum sum-rank distance is $\dmin = n - k + 1$~\cite{martinez-penasSkewLinearizedReed2018, byrneFundamentalPropertiesSumRankMetric2021}. Therefore, \ac{LRS} codes are called \ac{MSRD} codes.
    
    \item Consequently, \ac{LRS} codes can uniquely decode errors of sum-rank weight up to
    \begin{equation*}
        \UniqueDecRad = \left\lfloor \frac{n - k}{2} \right\rfloor.
    \end{equation*}
    
    \item \ac{LRS} codes have some restrictions on their parameters: the number of blocks $\shots$ must satisfy $\shots \leq q - 1$, and the length of each block $n_i$ must satisfy $n_i \leq m$ for all $i \in \{1,\dots,\shots\}$ (see~\cite{martinez-penasSkewLinearizedReed2018}).
\end{itemize}

\section{Problem Description}\label{sec:problem-description}

This section presents and categorizes decoding problems in the sum-rank metric, relevant to various coding theory and cryptography applications. Starting with the most generic problem, we introduce progressively specific scenarios.

\subsection{Sum-Rank Syndrome Decoding Problem}\label{subsec:syndrome-decoding}

The \emph{sum-rank syndrome decoding problem} generalizes both the syndrome decoding problem in the Hamming metric and the rank syndrome decoding problem, thereby encompassing a wide range of cryptosystems such as WAVE~\cite{debris-alazardWaveNewFamily2019}, BIKE~\cite{aragon2020bike}, HQC~\cite{melchor2020hqc} (which are based on the Hamming metric), and RQC~\cite{melchorRankQuasicyclicRQC2020} (which is based on the rank metric).

\begin{problem}[Sum-Rank Syndrome Decoding Problem]\label{prob:SRSDecProblemUniformSyndrome}
\hfill
\begin{itemize}
    \item \textbf{Instance:}
    \begin{itemize}
        \item A linear sum-rank metric code $\LinearCode{\n, k}{\Fqm} \subseteq \Fqm^n$ with parity-check matrix $\H \in \Fqm^{(n-k) \times n}$.
        \item A syndrome $\s \in \Fqm^{n-k}$ and an integer $\decRadius > 0$.
    \end{itemize}
    \item \textbf{Objective:} Find an error vector $\e$ such that $\s = \e\H^\top$ and $\SumRankWeight(\e) = \decRadius$.
\end{itemize}
\end{problem}

\noindent The sum-rank syndrome decoding problem is particularly interesting due to its applicability across different metrics and cryptosystems. A solution to this problem is not guaranteed.

The algorithm introduced in~\cite{puchingerGenericDecodingSumRank2022} provides a generic decoding approach that addresses the sum-rank syndrome decoding problem for error weights up to $n - k$. 

\subsection{Decoding Beyond the Unique Radius}\label{subsec:beyond-unique-decoding}

The following problem can be seen as a special case of the sum-rank syndrome decoding problem (see Problem~\ref{prob:SRSDecProblemUniformSyndrome}). In this case, we assume that a specific codeword, corrupted by an error of known weight, has been received.
Under this assumption, the weight of the error, denoted by $\errWeight$, is known, and we set the decoding radius accordingly to $\errWeight$. This ensures that at least one solution exists, although it may not be unique, similar to the general syndrome decoding problem.

We first consider the problem of decoding beyond the unique radius for arbitrary sum-rank metric codes.
\begin{problem}[Beyond Unique Decoding for Sum-Rank Metric Codes]\label{prob:SRSDecProblemY}
\hfill
\begin{itemize}
    \item \textbf{Instance:} 
    \begin{itemize}
        \item Linear sum-rank metric code $\LinearCode{\n, k}{\Fqm} \subseteq \Fqm^n$ with unique decoding radius $\UniqueDecRad$.
        \item Error vector $\e \sample \errorSet_\errWeight$ with $\SumRankWeight(\e)=\errWeight \geq \UniqueDecRad$.
        \item Received vector $\y = \c + \e$ with $\c\in\LinearCode{\n, k}{\Fqm}$.
    \end{itemize}
    \item \textbf{Objective:} Find a codeword $\c \in \LinearCode{\n, k}{\Fqm}$, such that $\SumRankWeight(\y-\c) = \errWeight$.
\end{itemize}
\end{problem}

Problem~\ref{prob:SRSDecProblemY} extends the unique decoding problem by allowing error weights that exceed the unique decoding radius. Unlike unique decoding, multiple codewords may satisfy the decoding condition, and a solution is not guaranteed to be unique. The generic decoder introduced in~\cite{puchingerGenericDecodingSumRank2022} can efficiently address this problem for any linear sum-rank metric code without relying on its structure, handling error weights up to $n - k$.

Next, we specialize this problem for \ac{LRS} codes.
\begin{problem}[Beyond Unique Decoding for \ac{LRS} codes]\label{prob:LRSDecProblem}
\hfill
\begin{itemize}
    \item \textbf{Instance:}
    \begin{itemize}
        \item \Ac{LRS} code $\LRSCode[\n, k] \subseteq \Fqm^n$, $\y\in\Fqm^n$.
        \item Error vector $\e \sample \errorSet_\errWeight$ with $\SumRankWeight(\e)=\errWeight \geq \UniqueDecRad$.
        \item Received vector $\y=\c+\e\in\Fqm^{n}$ with $\c\in\LRSCode$.
    \end{itemize}
    \item \textbf{Objective:} Find a codeword $\c \in \LRSCode$, such that $\SumRankWeight(\y-\c) = \errWeight$.
\end{itemize}
\end{problem}

Problem~\ref{prob:LRSDecProblem} is a specialization of Problem~\ref{prob:SRSDecProblemY} for \ac{LRS} codes. This specialization allows us to exploit the inherent structure of \ac{LRS} codes, potentially leading to more efficient decoding algorithms. We compare the complexity of the generic decoder for Problem~\ref{prob:SRSDecProblemY}, as introduced in~\cite{puchingerGenericDecodingSumRank2022}, with our proposed randomized decoder (see Section~\ref{sec:random_decoding}) tailored for \ac{LRS} codes.

The \ac{FL} system is a rank-metric code-based cryptosystem that uses Gabidulin codes, which are a special case of \ac{LRS} codes. Problem~\ref{prob:LRSDecProblem} is itself a generalization of the decoding problem considered in~\cite{rennerRandomizedDecodingGabidulin2020}, which focused on Gabidulin codes. Our proposed randomized decoder generalizes the decoder introduced in~\cite{rennerRandomizedDecodingGabidulin2020} to the sum-rank metric.
We first introduced this generalization in~\cite{jerkovitsRandomizedDecodingLinearized2023}.

Our proposed decoder and complexity analysis for the sum-rank metric could be valuable for future cryptosystems that are similar to the \ac{FL} system but operate in the sum-rank metric, or for different cryptosystems that rely on problems like Problem~\ref{prob:LRSDecProblem} in the sum-rank metric.

\subsection{Unique Decoding Problem}\label{subsec:unique-decoding}

The \emph{unique decoding problem} is a specific case of Problem~\ref{prob:SRSDecProblemY}, where the error weight does not exceed the unique decoding radius, ensuring that there is exactly one solution within this radius.

\begin{problem}[Unique Decoding Problem]\label{prob:uniqueDecoding}
\hfill
\begin{itemize}
    \item \textbf{Instance:} 
    \begin{itemize}
        \item A linear sum-rank metric code $\LinearCode{\n, k}{\Fqm} \subseteq \Fqm^n$ with unique decoding radius $\UniqueDecRad$.
        \item An error vector $\e \sample \errorSet_\errWeight$ with $\errWeight = \SumRankWeight(\e) \leq \UniqueDecRad$.
        \item A received vector $\y = \c + \e$ where $\c\in\LinearCode{\n, k}{\Fqm}$.
    \end{itemize}
    \item \textbf{Objective:} Determine the unique codeword $\c \in \LinearCode{\n, k}{\Fqm}$ such that $\SumRankWeight(\y-\c) \leq \UniqueDecRad$.
\end{itemize}
\end{problem}

In this case, the decoder is guaranteed to find exactly one solution as long as the error weight is within the unique decoding radius. Efficient polynomial-time decoders are available for several well-known algebraic codes, such as Reed--Solomon codes, \ac{BCH} codes, and Goppa codes in the Hamming metric~\cite{macwilliamsTheoryErrorCorrectingCodes1977}, as well as Gabidulin codes in the rank metric~\cite{gabidulinTheoryCodesMaximum1985}, and \ac{LRS} codes in the sum-rank metric~\cite{martinez-penasSkewLinearizedReed2018}, provided that the code structure is known.

However, in cryptosystems such as McEliece-like cryptosystems, where the code structure is intentionally obfuscated to increase security, decoding becomes much more challenging for unauthorized parties. This further highlights the importance of efficient decoders, particularly in cases where the error weight stays within the unique decoding radius.

\subsection{Channel Model}\label{subsec:channel-model}

In Problem~\ref{prob:SRSDecProblemY}, Problem~\ref{prob:LRSDecProblem} and Problem~\ref{prob:uniqueDecoding}, the error vector $\e$ is drawn uniformly at random from $\errorSet_\errWeight$, the set of all vectors in $\Fqm^n$ with sum-rank weight $\errWeight$. This is done by first determining a rank profile $\w\in\wdecomp{\errWeight,\shots,\maxShotWeight}$ according to the distribution
\begin{equation}\label{eq:probability of errWeightVec for uniform error}
\Pr[\w] = \frac{1}{|\errorSet_{q,\shotLength,m,\shots}(\errWeight)|} \prod_{i=1}^{\shots} \NMq{q}{m,\shotLength,\errWeight_i}.
\end{equation}
Then, for each $i\inrange{1}{\shots}$, an element $\e^{(i)}\in\Fqm^{\shotLength_i}$ of rank weight $\errWeight_i$ is drawn independently and uniformly at random from the set of all elements in $\Fqm^{\shotLength_i}$ of rank weight $\errWeight_i$, as described in~\cite{puchingerGenericDecodingSumRank2022}. The resulting error vector $\e$ satisfies $\SumRankWeight(\e)=\errWeight$ and is uniformly distributed in $\errorSet_\errWeight$. 
The received word $\y$ is then considered to be of the form $\y = \c + \e$ with $\c \in \LinearCode{\n, k}{\Fqm} \subseteq \Fqm^n$.

\section{Generic Decoding in the Sum-Rank Metric}\label{sec:generic-decoding}

In this section, we summarize the generic decoding algorithm for the sum-rank metric introduced by Puchinger et al. \cite{puchingerGenericDecodingSumRank2020, puchingerGenericDecodingSumRank2022}. Their analysis focuses on solving a special instance of Problem~\ref{prob:SRSDecProblemUniformSyndrome}, where the syndrome is chosen such that at least one solution exists. This scenario applies to both Problem~\ref{prob:SRSDecProblemY} and Problem~\ref{prob:uniqueDecoding}. 
In their derivation of the lower bound, they assume that at most one solution exists within the decoding radius, with no alternative solutions possible. As a result, the lower bound applies only to Problem~\ref{prob:uniqueDecoding}.

The decoding algorithm proceeds by guessing possible error supports according to a probability distribution, which is a key design criterion that can be optimized to maximize the decoder's success probability. In each iteration of the decoding loop, a new support is guessed, and the decoder attempts to correct the error based on that support. If the guess is incorrect, the loop continues with another guess. The success probability for each iteration depends on the support-guessing distribution. The worst-case complexity is determined by evaluating all possible error rank profiles, with the worst-case profile, denoted by $\wcRP$, representing the maximum number of operations needed for the decoder to succeed.

\begin{remark}
In~\cite[Remark 17]{puchingerGenericDecodingSumRank2022}, it was shown that an error $\e$ with row support $\RSv$ and column support $\CSv$ can be uniquely recovered if either a row super support $\FSv_{R} \supseteq \RSv$ or a column super support $\FSv_{C} \supseteq \CSv$ is found, both with sum-rank weight $\supwt$, such that $\errWeight \leq \supwt < \dmin$. The sum-rank weight $\supwt$ cannot exceed
\begin{equation}\label{eq:smax}
\supwt_{\max} \defeq \min\left\{n-k, \left\lfloor\frac{m}{\shotLength}(n-k)\right\rfloor\right\}.
\end{equation}

Decoding beyond the unique erasure decoding radius (i.e. error weights larger than $\dmin-1$) is possible up to $\supwt_{\max}$. This situation is analogous to the classical \ac{ISD} algorithm in the Hamming metric, where a support of size $n-k$ is selected. Successful decoding occurs if the corresponding submatrix of the parity-check matrix formed by these columns is of full rank, ensuring a unique solution to the system of linear equations. 
In the Hamming case, the probability that the submatrix is full rank is typically assumed to be close to 1, which is especially true for large field sizes.
However, as shown in \cite{jerkovitsErrorcodePerspectiveMetzner2024}, for the sum-rank metric, the probability of finding a valid decoding support under these conditions can be significantly lower, depending on parameter choices. Within this paper, we similarly assume this probability to be close to 1 in our analysis, though for certain parameter sets, it might be advisable to check whether this assumption holds in practice.
\end{remark}

Algorithm~\ref{alg:genSRdecoder_puchinger} describes the decoding process, which selects the appropriate type of (row or column) support. When $m$ is smaller than $\shotLength$, the algorithm selects the row support; otherwise, it chooses the column support. For the sake of simplicity, we will henceforth refer to the selected support type as simply ``support'' throughout the remainder of this paper, with the understanding that the decoding algorithm makes this choice based on the given parameters.

\begin{algorithm}\label{alg:genSRdecoder_puchinger}
\caption{Generic Sum-Rank Decoder~\cite{puchingerGenericDecodingSumRank2022}}\label{alg:generic_sr_decoder}

\Input{Parameters: $q$, $m$, $n$, $k$, $\shots$, $\errWeight$ and $\supwt$ with $\errWeight \leq \supwt \leq \smax$ \\
       Received vector $\y\in\Fqm^n$\\
       Parity-check matrix $\H\in\Fqm^{(n-k)\times n}$ of an $\Fqm$-linear sum-rank \\
       metric code $\mycode{C}$\\    
}
\Output{
    Vector $\c' \in \mycode{C}$ such that $\SumRankWeight(\y-\c') = \errWeight$
}
$\e' \gets \bm{0}$ \;
$\shotLength \gets n/\shots$ \;
$\mu \gets \min\{m,\shotLength\}$ \;
\While{$\H(\r-\e')^\top \neq \bm{0}$ \oror $\SumRankWeight(\e') \neq \errWeight$}{
    $\FSv \gets $ Draw random support $\FSv\subseteq \Fq^\maxShotWeight \times \dots \times \Fq^\maxShotWeight$ of sum dimension $\supwt$\label{alg:generic_sr_decoder:drawsupport} \;
    \If{$\shotLength < m$}{
        $\e' \gets$ Column erasure decoding w.r.t. $\FSv$, $\H$, $\y$~(cf.~\cite[Theorem 13]{puchingerGenericDecodingSumRank2022})
    }
    \Else{
        $\e' \gets$ Row erasure decoding w.r.t. $\FSv$, $\H$, $\y$~(cf.~\cite[Theorem 14]{puchingerGenericDecodingSumRank2022})
    }
}
\Return $\y-\e'$
\end{algorithm}

\subsection{Improved Simple Bound on the Worst-Case Success Probability}

Recall that we denoted the worst-case rank profile as $\wcRP$. In~\cite[Theorem 16]{puchingerGenericDecodingSumRank2022}, the authors derive lower and upper bounds on the expected run time $\Wgen$ of Algorithm~\ref{alg:generic_sr_decoder} to decode an error pattern with rank profile $\wcRP$.

Given an error rank profile $\errWeightVec \in \wdecomp{\errWeight,\shots,\maxShotWeight}$ and a super-support rank profile $\v \in \wdecomp{\supwt,\shots,\maxShotWeight}$, we compute the probability $\probProf{q,\maxShotWeight}{\v, \errWeightVec}$ that the error support lies within the guessed super support as~\cite{puchingerGenericDecodingSumRank2022}
\begin{equation}\label{eq:success probability for generic given profile of v and w}
    \probProf{q,\maxShotWeight}{\v, \errWeightVec} \defeq \prod_{i=1}^{\shots}\subspaceprob{\maxShotWeight}{\errWeightVec_i}{\v_i},
\end{equation}
where $\subspaceprob{\maxShotWeight}{\errWeightVec_i}{\v_i}$ is the probability as in~\eqref{eq:notation subspace prob}.

Using this probability, we define $\probProfmax{q,\maxShotWeight,\supwt}{\errWeightVec}$ as the maximum probability over all super-support rank profiles $\v \in \wdecomp{\supwt,\shots,\maxShotWeight}$
\begin{equation}\label{eq:probProfmax}
    \probProfmax{q,\maxShotWeight,\supwt}{\errWeightVec} \defeq \max_{\v\in\wdecomp{\supwt,\shots,\maxShotWeight}} \probProf{q,\maxShotWeight}{\v, \errWeightVec}.
\end{equation}
Further let $\scomp_{\maxShotWeight}(\errWeightVec, \supwt)$ be a function $\scomp_{\maxShotWeight} : \wdecomp{\errWeight,\shots,\maxShotWeight} \times \NN \to \wdecomp{\supwt,\shots,\maxShotWeight}$ that for a given integer $\supwt\in\NN$ returns a rank profile such that
\begin{equation}
    \probProfmax{q,\maxShotWeight,\supwt}{\errWeightVec} = \probProf{q,\maxShotWeight}{\scomp_{\maxShotWeight}(\errWeightVec, \supwt), \errWeightVec}.
\end{equation}
The function $\scomp_{\maxShotWeight}(\errWeightVec, \supwt)$ can be implemented efficiently, see~\cite{puchingerGenericDecodingSumRank2022}.

With these definitions in place, Puchinger et al.~\cite[Theorem 16]{puchingerGenericDecodingSumRank2022} provide the following bounds on the expected runtime $\Wgen$ of Algorithm~\ref{alg:generic_sr_decoder} for Problem~\ref{prob:uniqueDecoding}
\begin{align}
    \WgenLB &= \WiterGen {|\wdecomp{\errWeight,\shots,\maxShotWeight}|}^{-1} Q_{\errWeight,\shots,\maxShotWeight} \label{eq:WgenLB}, \\
    \WgenUB &= \WiterGen Q_{\errWeight,\shots,\maxShotWeight} \label{eq:WgenUB}, \\
    \WgenUBsimple &= \WiterGen \binom{\shots+\errWeight-1}{\shots-1} \gamma_q^\shots q^{\errWeight(\maxShotWeight-\frac{\supwt}{\shots})} \label{eq:WgenUBsimple},
\end{align}
with
\begin{equation}\label{eq:def of Q}
     Q_{\errWeight,\shots,\maxShotWeight} \defeq \sum_{\errWeightVec\in\wdecomp{\errWeight,\shots,\maxShotWeight}} {\probProfmax{q,\maxShotWeight,\supwt}{\errWeightVec}}^{-1},
\end{equation}
where $\WiterGen$ represents the complexity of one iteration of the generic decoding algorithm.
To be more precise, $\WiterGen$ is the sum of the complexities of two main components: drawing an error super support and performing the row/column erasure decoding, as outlined in Algorithm~\ref{alg:generic_sr_decoder}. The complexity of drawing an error super support is in the order of $\softoh{n^3 m^2 \log_2(q)}$ bit operations. Meanwhile, the row/column-erasure decoding involves $\oh{(n-k)^3 m^3}$ operations over $\mathbb{F}_q$. 
While these complexities are asymptotic approximations and neglect constant factors, they provide useful estimates for large input sizes. For finite lengths, the exact complexities depend on the specific implementation details of the underlying algorithms. Therefore, for the purpose of plotting and practical considerations, we approximate $\WiterGen \approx n^3 m^3$, combining the dominant terms of both components.

The previous upper bound $\WgenUBsimple$ on the expected runtime of Algorithm~\ref{alg:generic_sr_decoder} can be rather loose when the parameters are closer to the Hamming metric, i.e., when $\shots \to n$ for fixed $\shotLength$ and/or $\shotLength \to 1$ (see Figure~\ref{fig:genNewSimple1}). To address this, we introduce a new, tighter upper bound in Theorem~\ref{thm:improvedUB}.

\begin{theorem}\label{thm:improvedUB}
    Let $\c$ be a codeword of a sum-rank-metric code $\LinearCode{\n, k}{\Fqm}$ with minimum sum-rank distance $\dmin$. Additionally, let $\e$ be an error of sum-rank weight $\errWeight < \dmin$ with a rank profile corresponding to the worst-case rank profile $\wcRP$. Then Algorithm~\ref{alg:generic_sr_decoder} in the context of Problem~\ref{prob:SRSDecProblemUniformSyndrome} returns a solution w.r.t. an error $\e' \in \Fqm^n$ with the weight $\errWeight$. Each iteration of Algorithm~\ref{alg:generic_sr_decoder} has complexity $\WiterGen$. The overall expected worst-case runtime, also referred to as the complexity $\Wgenwc$ of Algorithm~\ref{alg:generic_sr_decoder} is upper bounded by
    \begin{align}\label{eq:improvedUB}
        \Wgenwc&\leq \WgenUBsimpleImproved,
    \end{align}
    with
    \begin{align}
        \WgenUBsimpleImproved &\defeq \WiterGen \binom{\shots+\errWeight-1}{\shots-1} q^{\errWeight(\maxShotWeight-\frac{\supwt}{\shots})} \cdot \min\left(\gamma_{q}^{\shots},  {\left(\frac{1 - q^{-\maxShotWeight}}{1-q^{-1}}\right)}^\errWeight\right).
    \end{align}
\end{theorem}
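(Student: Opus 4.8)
The plan is to start from the exact worst-case bound $\Wgenwc \le \WgenUB = \WiterGen Q_{\errWeight,\shots,\maxShotWeight}$ recorded in \eqref{eq:WgenUB} and to re-derive the closed form of $\WgenUBsimple$ while replacing the crude per-block estimate responsible for the factor $\gamma_q^\shots$ in \eqref{eq:WgenUBsimple} by a sharper one. Since $\WgenUBsimple$ is already a valid upper bound on $\Wgenwc$, it suffices to establish the second branch of the minimum, namely
\[
    \Wgenwc \le \WiterGen\binom{\shots+\errWeight-1}{\shots-1}\, q^{\errWeight(\maxShotWeight-\frac{\supwt}{\shots})}\left(\frac{1-q^{-\maxShotWeight}}{1-q^{-1}}\right)^{\errWeight};
\]
the asserted bound then follows by taking the minimum of the two valid upper bounds.

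As in the derivation of $\WgenUBsimple$, I would bound $Q_{\errWeight,\shots,\maxShotWeight}\le|\wdecomp{\errWeight,\shots,\maxShotWeight}|\cdot\max_{\errWeightVec}\probProfmax{q,\maxShotWeight,\supwt}{\errWeightVec}^{-1}$ and use $|\wdecomp{\errWeight,\shots,\maxShotWeight}|\le\binom{\shots+\errWeight-1}{\shots-1}$ from \eqref{eq:wdecomp_bound}. Everything then reduces to estimating a single factor $\subspaceprob{\maxShotWeight}{\errWeightVec_i}{\v_i}^{-1}$ of $\probProf{q,\maxShotWeight}{\v,\errWeightVec}^{-1}$ per block. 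Here, instead of the loose Gaussian-binomial sandwich $q^{(a-b)b}\le\quadbinomq{a}{b}\le\gamma_q q^{(a-b)b}$, I would use the exact telescoped identity
\[
    \subspaceprob{\mu}{a}{b}^{-1}=\prod_{i=0}^{a-1}\frac{q^{\mu-i}-1}{q^{b-i}-1}=q^{a(\mu-b)}\prod_{i=0}^{a-1}\frac{1-q^{-(\mu-i)}}{1-q^{-(b-i)}},
\]
which follows by writing $\subspaceprob{\mu}{a}{b}=\quadbinomq{b}{a}/\quadbinomq{\mu}{a}$, expanding both Gaussian binomials, and cancelling the common factors $\prod_i(q^{a-i}-1)$.

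The key new estimate is to bound the residual product factor by factor. For $0\le i\le a-1$ with $a=\errWeightVec_i\le b=\v_i\le\mu=\maxShotWeight$, the numerator $1-q^{-(\mu-i)}$ is largest at $i=0$ and hence at most $1-q^{-\maxShotWeight}$, whereas the denominator $1-q^{-(b-i)}$ is smallest at $i=a-1$, where $b-i\ge b-a+1\ge1$, and hence at least $1-q^{-1}$. Each factor is therefore at most $(1-q^{-\maxShotWeight})/(1-q^{-1})$, giving
\[
    \subspaceprob{\maxShotWeight}{\errWeightVec_i}{\v_i}^{-1}\le q^{\errWeightVec_i(\maxShotWeight-\v_i)}\left(\frac{1-q^{-\maxShotWeight}}{1-q^{-1}}\right)^{\errWeightVec_i}.
\]
Multiplying over the $\shots$ blocks and using $\sum_i\errWeightVec_i=\errWeight$ collapses the correction to $\bigl((1-q^{-\maxShotWeight})/(1-q^{-1})\bigr)^{\errWeight}$, in contrast to the $\gamma_q^\shots$ produced by the crude per-block bound.

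It remains to control the $q$-exponent $\sum_i\errWeightVec_i(\maxShotWeight-\v_i)=\maxShotWeight\errWeight-\sum_i\errWeightVec_i\v_i$ by a suitable choice of super-support profile $\v$; a Chebyshev sum (rearrangement) argument on like-sorted sequences gives $\sum_i\errWeightVec_i\v_i\ge\errWeight\supwt/\shots$, so the exponent is at most $\errWeight(\maxShotWeight-\supwt/\shots)$. This exponent, the binomial prefactor, and the feasibility of $\v$ (namely $\v_i\ge\errWeightVec_i$, $\sum_i\v_i=\supwt$, $\v_i\le\maxShotWeight$, guaranteed by $\errWeight\le\supwt$ via the construction of $\scomp_{\maxShotWeight}(\errWeightVec,\supwt)$ in \cite{puchingerGenericDecodingSumRank2022}) are exactly as in $\WgenUBsimple$, so I can lean on the existing argument and change only the correction factor. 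The main obstacle is thus not the new per-block estimate, which is elementary once the telescoped identity is in hand, but cleanly isolating the correction factor so that nothing else in the $\WgenUBsimple$ derivation is disturbed; a useful sanity check is the Hamming limit $\maxShotWeight=1$, where the new factor equals $1$ and correctly removes the spurious $\gamma_q^\shots$.
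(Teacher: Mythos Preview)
Your proposal is correct and follows essentially the same approach as the paper's proof: both start from $\WgenUB$, bound $Q_{\errWeight,\shots,\maxShotWeight}$ by $|\wdecomp{\errWeight,\shots,\maxShotWeight}|\cdot\max_{\errWeightVec}\probProfmax{q,\maxShotWeight,\supwt}{\errWeightVec}^{-1}$, derive the per-block estimate $\quadbinomq{\maxShotWeight}{\errWeight_i}/\quadbinomq{\supwt_i}{\errWeight_i}\le q^{\errWeight_i(\maxShotWeight-\supwt_i)}\bigl((1-q^{-\maxShotWeight})/(1-q^{-1})\bigr)^{\errWeight_i}$ by factoring out the leading $q$-power and bounding each residual factor by $(1-q^{-\maxShotWeight})/(1-q^{-1})$, and then defer the exponent bound $\sum_i \errWeight_i(\maxShotWeight-\supwt_i)\le \errWeight(\maxShotWeight-\supwt/\shots)$ to the argument behind $\WgenUBsimple$ in \cite{puchingerGenericDecodingSumRank2022}. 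Your telescoped identity and the explicit Chebyshev-type justification for the exponent are slightly more detailed than the paper's presentation, but the substance is identical.
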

\begin{proof}
To prove the theorem, we will show that
\begin{equation}
    \Wgenwc\leq \WiterGen \binom{\shots+\errWeight-1}{\shots-1} q^{\errWeight(\maxShotWeight-\frac{\supwt}{\shots})} {\left(\frac{1 - q^{-\maxShotWeight}}{1-q^{-1}}\right)}^\errWeight.
\end{equation}
Starting from the bound in~\eqref{eq:WgenUB}, it suffices to show that
\begin{equation}
     Q_{\errWeight,\shots,\maxShotWeight} \leq \binom{\shots+\errWeight-1}{\shots-1} q^{\errWeight(\maxShotWeight-\frac{\supwt}{\shots})} {\left(\frac{1 - q^{-\maxShotWeight}}{1-q^{-1}}\right)}^\errWeight.
\end{equation}
By the definition of $Q_{\errWeight,\shots,\maxShotWeight}$ in~\eqref{eq:def of Q}, we can bound $Q_{\errWeight,\shots,\maxShotWeight}$ as 
\begin{equation}
    Q_{\errWeight,\shots,\maxShotWeight} \leq \underbrace{|\wdecomp{\errWeight,\shots,\maxShotWeight}|}_{\mathclap{ \leq \binom{\shots+\errWeight-1}{\shots-1} \text{ (cf.~\cite{puchingerGenericDecodingSumRank2022})}}} \cdot \max_{\errWeightVec\in\wdecomp{\errWeight,\shots,\maxShotWeight}}  {\probProfmax{q,\maxShotWeight,\supwt}{\errWeightVec}}^{-1},  
\end{equation}
where 
\begin{equation}\label{eq:some equation in proof}
    \max_{\errWeightVec\in\wdecomp{\errWeight,\shots,\maxShotWeight}} {\probProfmax{q,\maxShotWeight,\supwt}{\errWeightVec}}^{-1} = \max_{\errWeightVec\in\wdecomp{\errWeight,\shots,\maxShotWeight}} \left\{{\probProf{q,\maxShotWeight}{\supwtVec', \errWeightVec}}^{-1} \st \supwtVec' = \scomp_{\maxShotWeight}(\errWeightVec, \supwt) \right\}.
\end{equation}
Next, we can bound ${\probProf{q,\maxShotWeight}{\supwt', \errWeightVec}}^{-1}$ as
\begin{equation}\label{eq:proof bound step 2}
    {\probProf{q,\maxShotWeight}{\supwtVec', \errWeightVec}}^{-1} = \prod_{i=1}^{\errWeight} \frac{\quadbinomq{\maxShotWeight}{\errWeight_i}}{\quadbinomq{\supwt_i'}{\errWeight_i}} \leq {\left(\frac{1-q^{-\maxShotWeight}}{1-q^{-1}}\right)}^{\errWeight} \cdot \prod_{i=1}^{\errWeight} q^{\errWeight_i(\maxShotWeight-\supwt_i')} ,
\end{equation}
where the inequality follows from
\begin{align}
    \frac{\quadbinomq{a}{b}}{\quadbinomq{c}{b}} = \frac{q^{b(a-b)}}{q^{b(c-b)}}\cdot\prod_{i=1}^{\errWeight}\frac{(1-q^{-a})(1-q^{-a+1})\cdots(1-q^{-a+b-1})}{(1-q^{-c})(1-q^{-c+1})\cdots(1-q^{-c+b-1})} \leq q^{b(a-c)} \cdot \frac{1-q^{-a}}{1-q^{-1}}.
\end{align}
Substituting~\eqref{eq:proof bound step 2} into~\eqref{eq:some equation in proof} yields
\begin{equation}
    Q_{\errWeight,\shots,\maxShotWeight} \leq \binom{\shots+\errWeight-1}{\shots-1}\cdot {\left(\frac{1-q^{-\maxShotWeight}}{1-q^{-1}}\right)}^{\errWeight} 
    \cdot \max_{\errWeightVec\in\wdecomp{\errWeight,\shots,\maxShotWeight}} \left\{ q^{\sum_{i=1}^{\shots} \errWeight_i (\maxShotWeight-\supwt_i')} \st  \supwtVec' = \scomp_{\maxShotWeight}(\errWeightVec, \supwt)  \right\}.
\end{equation}
The remainder of the proof follows from the proof in~\cite[Proposition 21]{puchingerGenericDecodingSumRank2022}, which leads to the desired result
\begin{equation}
    \Wgenwc\leq \WgenUBsimpleImproved.
\end{equation}
\end{proof}

\begin{figure}[ht]
  \centering
 \begin{tikzpicture}
      \begin{axis}[
          width=\linewidth, %
          height=8cm,
          grid=major, 
          grid style={dotted,gray!50},
          xlabel={$\shots$}, %
          ylabel={$\log_2(\text{Complexity})$},
          xlabel style={font=\scriptsize}, %
          ylabel style={font=\scriptsize}, %
          xmin=1,
          xmax=60, %
          ymin=0,
          ymax=200,
          xtick={1,2,3,4,5,6,10,12,15,20,30,60}, %
          xticklabels={1,2,3,4,5,6,10,12,15,20,30,60}, %
          xticklabel style={font=\scriptsize}, %
          yticklabel style={font=\scriptsize}, %
          legend style={
              at={(0.80,0.99)}, 
              anchor=north east, 
              legend columns=-1, 
              font=\scriptsize
          }, %
          legend cell align=left,
          clip=false, %
      ]

          \fill [pattern=north east lines, pattern color=gray!20] 
              (axis cs:1,0) rectangle (axis cs:3,200);

          \addplot[WgenUBsimple] 
              table[x=L,y=wold,col sep=space] {./data/wbounds_n60_k30_q2_m20_t9_s10.txt}; 
          \addlegendentry{$\WgenUBsimple$}
          
          \addplot[WgenUBsimpleImproved] 
              table[x=L,y=wbest,col sep=space] {./data/wbounds_n60_k30_q2_m20_t9_s10.txt}; 
          \addlegendentry{$\WgenUBsimpleImproved$}
          
          \addplot[WgenUB] 
              table[x=L,y=wup,col sep=space] {./data/wbounds_n60_k30_q2_m20_t9_s10.txt}; 
          \addlegendentry{$\WgenUB$}
          
          \addplot[WgenLB] 
              table[x=L, y=wlow, col sep=space] {./data/wbounds_n60_k30_q2_m20_t9_s10.txt};
          \addlegendentry{$\WgenLB$}

          \addplot [black, thick] coordinates {(3,0) (3,200)};
          
          \node (tr) at (axis cs:6,200) {}; %
          \node (br) at (axis cs:1,75) {};  %

          \draw[black, thick, dotted] (axis cs:1,75) rectangle (axis cs:6,200);

          \node[rotate=90, anchor=west, font=\scriptsize] at (axis cs:2,5) {Column Support};
          \node[rotate=90, anchor=west, font=\scriptsize] at (axis cs:4,5) {Row Support};

      \end{axis}
      
      \begin{scope}[shift={(0cm,8.2cm)}] %
          \begin{axis}[
              width=\linewidth,
              height=8cm,
              grid=major,
              grid style={dotted,gray!50},
              xlabel={$\shots$},
              ylabel={$\log_2(\text{Complexity})$},
              xlabel style={font=\scriptsize},
              ylabel style={font=\scriptsize},
              xmin=1, xmax=6,
              ymin=75, ymax=200,
              xtick={1,2,3,4,5,6},
              ytick={100,125,150,175},
              xticklabel style={font=\scriptsize},
              yticklabel style={font=\scriptsize},
              legend style={draw=none}, %
              legend cell align=left,
              clip=true, %
          ]

              \fill [pattern=north east lines, pattern color=gray!20] 
                  (axis cs:1,75) rectangle (axis cs:3,200);

              \addplot [black, thick] coordinates {(3,75) (3,200)};

              \addplot[WgenUBsimple] 
                  table[x=L,y=wold,col sep=space] {./data/wbounds_n60_k30_q2_m20_t9_s10.txt};
              
              \addplot[WgenUBsimpleImproved]
                  table[x=L,y=wbest,col sep=space] {./data/wbounds_n60_k30_q2_m20_t9_s10.txt};
              
              \addplot[WgenUB]
                  table[x=L,y=wup,col sep=space] {./data/wbounds_n60_k30_q2_m20_t9_s10.txt};
              
              \addplot[WgenLB]
                  table[x=L, y=wlow, col sep=space] {./data/wbounds_n60_k30_q2_m20_t9_s10.txt};

              \node (inset_tr) at (axis cs:6,75) {}; %
              \node (inset_br) at (axis cs:1,75) {};  %

          \end{axis}
      \end{scope}

      \draw[dotted, thick] 
          (tr) -- (inset_tr); %

      \draw[dotted, thick] 
          (br) -- (inset_br); %

    \end{tikzpicture}
    \caption{
        Illustration of the improved upper bound on the average complexity of Algorithm~\ref{alg:generic_sr_decoder} for $q=2$, $m=20$, $n=60$, $k=30$, $t=9$, $s=10$. At $\shots=3$, the algorithm transitions from guessing the column support to guessing the row support.
    }
    \label{fig:genNewSimple1}
\end{figure}
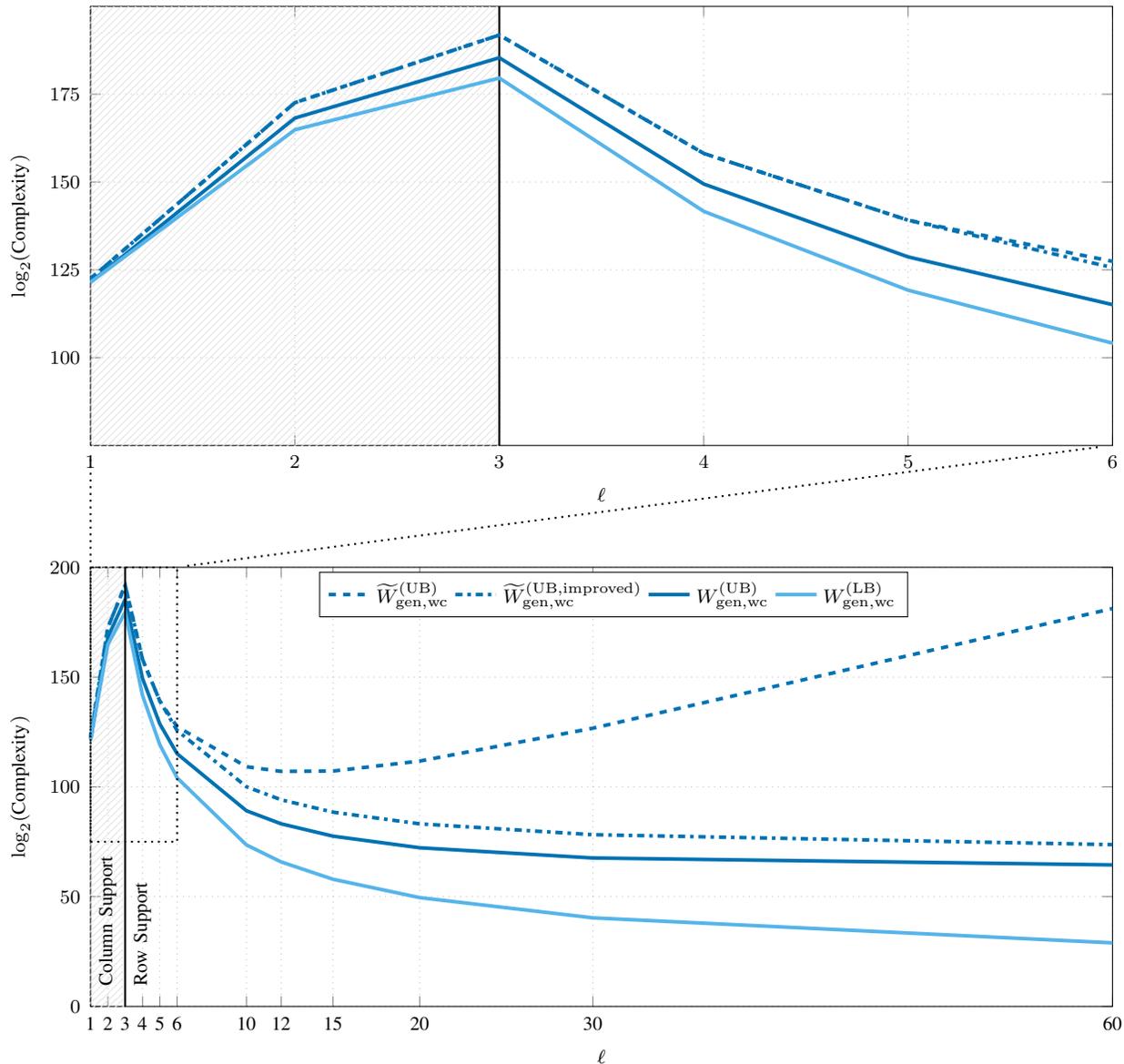

Figure~\ref{fig:genNewSimple1} illustrates the existing bounds from Puchinger et al.~\cite{puchingerGenericDecodingSumRank2022} alongside the new improved bound $\WgenUBsimpleImproved$ given by~\eqref{eq:improvedUB}. The figure shows the complexity as a function of the number of blocks $\shots$ for a fixed code length $n$, where the code length is defined as $n = \shots \shotLength$. This representation allows for the analysis of how the bounds and the improved bound behave as the number of blocks varies while keeping the code length and rate constant.

The existing bounds include the lower bound $\WgenLB$, the upper bound $\WgenUB$, and the simplified upper bound $\WgenUBsimple$, as defined in~\eqref{eq:WgenLB}, \eqref{eq:WgenUB}, and \eqref{eq:WgenUBsimple}, respectively. For a fair comparison, we used the same parameters as those presented in one of the figures from the original paper. The improved bound $\WgenUBsimpleImproved$ is significantly tighter and closer to the upper bound $\WgenUB$, particularly in the region near the Hamming metric, which corresponds to cases where $\shots \to n$ and $\shotLength \to 1$. This suggests that the new simplified bound provides a better approximation of the algorithm's complexity compared to the previous simplified upper bound $\WgenUBsimple$, especially when the sum-rank metric closely resembles the Hamming metric.

\subsection{Success Probability Analysis for the Average Case}\label{sec:generic_decoding_beyond}

We now consider the channel model described in Section~\ref{subsec:channel-model} and begin by deriving the success probability for the case of unique decoding, where exactly one solution exists. Additionally, we derive an upper bound on the success probability for decoding beyond the unique decoding radius, using \ac{RCU} arguments. This upper bound accounts for alternative solutions that the decoder in Algorithm~\ref{alg:generic_sr_decoder} may return in this scenario. 

In this analysis, we assume that the support drawing distribution is known. In the subsequent section, we will use the probabilities derived here to formalize an optimization problem with respect to the support drawing distribution.

In Line~\ref{alg:generic_sr_decoder:drawsupport} of Algorithm~\ref{alg:generic_sr_decoder}, we need to draw a suitable super support $\FSv = \FSv_1 \times \dots \times \FSv_\shots$, where $\FSv \subseteq \Fq^\maxShotWeight \times \dots \times \Fq^\maxShotWeight$, each $\FSv_i$ has dimension $\supwt_i$ for $i\inrange{1}{\shots}$, and $\sum_{i=1}^\shots \supwt_i = \supwt$. The distribution from which these super supports are drawn is a critical design parameter of the algorithm and requires careful optimization to maximize the algorithm's performance.

To draw the super support $\FSv$, we first draw a suitable rank profile $\v = [\supwt_1, \supwt_2, \ldots, \supwt_\shots] \in \wdecomp{\supwt,\shots,\maxShotWeight}$. Let $S$ be a discrete random variable over $\wdecomp{\supwt,\shots,\maxShotWeight}$, and denote the probability distribution of $S$ as $\ps{\v}$, i.e.,
\begin{equation}
    \ps{\v} \defeq \Pr[S=\v].
\end{equation}
Moreover, let $\psvec$ denote the probability vector for $S$, such that $\psvec = [\ps{\v_1},\ldots,\ps{\v_{|\wdecomp{\supwt,\shots,\maxShotWeight}|}}]$, where $\v_1, \ldots, \v_{|\wdecomp{\supwt,\shots,\maxShotWeight}|} \in \wdecomp{\supwt,\shots,\maxShotWeight}$ and $\psvec \in \pmfset{\wdecomp{\supwt,\shots,\maxShotWeight}}$.

After drawing the rank profile $\v$ according to the distribution $\ps{\v}$, the next step is to construct the super support $\FSv$. We draw $\FSv$ uniformly from the set $\Xi_{q,\maxShotWeight}(\v)$, which contains all valid super supports for the given rank profile $\v$. Each block $\FSv_i$ is drawn independently from the set of subspaces of $\F_q^{\maxShotWeight}$ with dimension $\supwt_i$.
By controlling the distribution $\ps{\v}$, we influence the distribution of the super support $\FSv$ and aim to minimize the expected complexity of the decoding process.

The following theorem gives the success probability for uniquely decoding a solution using a single iteration of Algorithm \ref{alg:generic_sr_decoder} under the average-case setting. 

\begin{theorem}\label{thm:genDecSP_unique}
Let $\mycode{C}\subseteq \Fqm^n$ be a linear sum-rank metric code with length $n$ and dimension $k$. Let $\c \in \mycode{C}$ be a codeword and consider a channel model as described in Section~\ref{sec:pre:channel model}, where the error $\e$ is drawn uniformly at random from $\errorSet_{q,\shotLength,m,\shots}(\errWeight)$, as defined in~\eqref{eq:error set definition}. Let $\SumRankWeight(\e) = \errWeight$, and assume that $\y=\c+\e$. Let $\smax$ denote the maximum sum-rank weight of the guessed super support, as defined in~\eqref{eq:smax}. Define the event $\uniqueEventGen$ as the event that Algorithm~\ref{alg:generic_sr_decoder} outputs $\c$ in a single iteration for the scenario of Problem~\ref{prob:SRSDecProblemY}. The probability of $\uniqueEventGen$ is given by
\begin{align}\label{eq:genDecSP_unique}
\ProbOf{\uniqueEventGen} &= \frac{1}{|\errorSet_{q,\shotLength,m,\shots}(\errWeight)|}\sum_{\supwt'=\errWeight}^{\smax} \sum_{\errWeightVec \in \wdecomp{\errWeight,\shots,\maxShotWeight}}  \probProfavg{q,\maxShotWeight,\supwt'}{\errWeightVec}\cdot  \prod_{i=1}^{\shots} \NMq{q}{m,\shotLength,\errWeight_i},
\end{align}
where $\probProfavg{q,\maxShotWeight,\supwt'}{\errWeightVec}$ is the average probability defined in~\eqref{eq:probProfavg}.
\end{theorem}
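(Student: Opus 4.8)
The plan is to reduce the event $\uniqueEventGen$ to a support-containment event, then average successively over the random super support and over the random error. First I would invoke the erasure-decoding guarantee of~\cite[Remark 17]{puchingerGenericDecodingSumRank2022}: in a single iteration with a guessed super support $\FSv$ of sum dimension $\supwt'$, with $\errWeight\leq\supwt'\leq\smax$, the row/column erasure decoder recovers the true error $\e$ exactly---hence outputs $\y-\e=\c$---precisely when $\ESv\subseteq\FSv$, where $\ESv$ is the (row or column) error support of rank profile $\wdcmp(\e)$. The ``only if'' direction holds because, if $\e$ is not confined to $\FSv$, it lies outside the decoder's search space and cannot be returned, so the output differs from $\c$. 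Thus $\uniqueEventGen$ is equivalent to the event $\ESv\subseteq\FSv$.

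Next, using that $\e\sample\errorSet_{q,\shotLength,m,\shots}(\errWeight)$ is uniform, I would apply the law of total probability,
\begin{equation}
\ProbOf{\uniqueEventGen}=\frac{1}{|\errorSet_{q,\shotLength,m,\shots}(\errWeight)|}\sum_{\e\in\errorSet_{q,\shotLength,m,\shots}(\errWeight)}\ProbOf{\ESv\subseteq\FSv\given\e},
\end{equation}
where the inner probability is over the draw of $\FSv$ alone. I would then show this inner probability depends on $\e$ only through its rank profile $\errWeightVec=\wdcmp(\e)$. Fixing $\e$, the super support is obtained by drawing a profile $\v=[\supwt_1,\dots,\supwt_\shots]$ with probability $\ps{\v}$ and then $\FSv$ uniformly from $\Xi_{q,\maxShotWeight}(\v)$, so the blocks $\FSv_i$ are independent with each $\FSv_i$ uniform among the $\supwt_i$-dimensional subspaces of $\Fq^\maxShotWeight$. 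Since each $\ESv_i$ is a fixed $\errWeight_i$-dimensional subspace, block $i$ contributes the containment probability $\subspaceprob{\maxShotWeight}{\errWeight_i}{\supwt_i}$ of~\eqref{eq:notation subspace prob}---valid for a fixed subspace against a uniformly random one---and independence across blocks gives $\ProbOf{\ESv\subseteq\FSv\given\v,\e}=\probProf{q,\maxShotWeight}{\v,\errWeightVec}$. Decomposing the draw of $\FSv$ by its sum dimension $\supwt'=\sumDim(\FSv)$, which partitions the sample space into disjoint events, and averaging over $\v$ yields $\ProbOf{\ESv\subseteq\FSv\given\e}=\sum_{\supwt'=\errWeight}^{\smax}\probProfavg{q,\maxShotWeight,\supwt'}{\errWeightVec}$.

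Finally I would group the errors by rank profile. By~\eqref{eq:error_set_cardinality} there are exactly $\prod_{i=1}^{\shots}\NMq{q}{m,\shotLength,\errWeight_i}$ errors with a given profile $\errWeightVec\in\wdecomp{\errWeight,\shots,\maxShotWeight}$, and all of them share the conditional success probability computed above. Substituting this count and exchanging the finite sums over $\supwt'$ and $\errWeightVec$ produces exactly~\eqref{eq:genDecSP_unique}.

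The hard part will be the middle step: rigorously justifying that the conditional success probability collapses to the product $\prod_i\subspaceprob{\maxShotWeight}{\errWeight_i}{\supwt_i}$, depending on $\e$ only through $\errWeightVec$. This rests on (i) the block-wise independence of the draw from $\Xi_{q,\maxShotWeight}(\v)$ and (ii) applying~\eqref{eq:notation subspace prob} with the error support fixed and the guessed support uniformly random. One must also check that decomposing by $\supwt'$ causes no double counting, since each drawn $\FSv$ has a single sum dimension, and that the assumptions ensuring unique recovery over the whole range $\errWeight\leq\supwt'\leq\smax$ (cf.\ the full-rank-submatrix assumption discussed after Algorithm~\ref{alg:generic_sr_decoder}) hold, so that a successful iteration returns exactly $\e$ and hence $\c$, rather than merely some error confined to $\FSv$.
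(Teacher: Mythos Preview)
Your proposal is correct and follows essentially the same approach as the paper's proof: condition on the error's rank profile, use that the conditional success probability is $\probProfavg{q,\maxShotWeight,\supwt'}{\errWeightVec}$, and substitute $\ProbOf{\errWeightVec}$ from~\eqref{eq:probability of errWeightVec for uniform error}. Your write-up is in fact considerably more detailed than the paper's own two-line proof, explicitly justifying the reduction to support containment and the rank-profile dependence that the paper leaves implicit.
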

\begin{proof}
The average probability of decoding success can be expressed as a sum over all possible super space dimensions $\supwt'$ and error weight decompositions $\errWeightVec$, weighted by the probability of each error weight decomposition
\begin{align}
\ProbOf{\uniqueEventGen}&= \sum_{\supwt'=\errWeight}^{\smax} \sum_{\errWeightVec \in \wdecomp{\errWeight,\shots,\maxShotWeight}} \ProbOf{\errWeightVec} \cdot \probProfavg{q,\maxShotWeight,\supwt'}{\errWeightVec}.
\end{align}
Substituting the expression for $\ProbOf{\errWeightVec}$ from \eqref{eq:probability of errWeightVec for uniform error}, we arrive at the expression given in~\eqref{eq:genDecSP_unique}
\begin{align}
\ProbOf{\uniqueEventGen} &= \frac{1}{|\errorSet_{q,\shotLength,m,\shots}(\errWeight)|}\sum_{\supwt'=\errWeight}^{\smax} \sum_{\errWeightVec \in \wdecomp{\errWeight,\shots,\maxShotWeight}}  \probProfavg{q,\maxShotWeight,\supwt'}{\errWeightVec} \cdot  \prod_{i=1}^{\shots} \NMq{q}{m,\shotLength,\errWeight_i},
\end{align}
completing the proof.
\end{proof}

An upper bound on the probability of obtaining alternative solutions when using random linear sum-rank-metric codes is provided by the upcoming theorem.
For the analysis, we define $\probProfavg{q,\maxShotWeight,\supwt}{\errWeightVec}$ as the average probability over all super-support rank profiles
\begin{equation}\label{eq:probProfavg}
    \probProfavg{q,\maxShotWeight,\supwt}{\errWeightVec} \defeq \sum_{\supwtVec \in \wdecomp{\supwt,\shots,\maxShotWeight}} \ps{\supwtVec} \cdot \probProf{q,\maxShotWeight,\supwt}{\supwtVec, \errWeightVec},
\end{equation}
where $\ps{\v}$ is the probability distribution over the super-support rank profiles. 

\begin{theorem}[Random Coding Union Bound]\label{thm:rc_bound_gen}
Let $\mycode{C}$ be a random code of length $n$ and cardinality $|\mycode{C}|=q^{mk}$ over $\Fqm$, where each codeword is drawn uniformly at random from the ambient space $\Fqm^n$. Suppose that the received word $\y \in \Fqm^n$ is a noisy version of a codeword $\c \in \mycode{C}$, corrupted by an error vector $\e \in \Fqm^n$ of sum-rank weight $\errWeight$, i.e., $\y = \c + \e$. Let $\supwt$ be an integer satisfying $\errWeight \leq \supwt \leq \smax$.

The probability $\ProbOf{\altEventGen}$ of one iteration of Algorithm~\ref{alg:generic_sr_decoder} to output an alternative solution $\c'$ with $\c' \neq \c$ is upper bounded by
\begin{equation}
\ProbOf{\altEventGen} \leq \pRCUBgen,
\end{equation}
where
\begin{align}
    \pRCUBgen &\defeq q^{m(k-n)} \sum_{\errWeightVec \in \wdecomp{\errWeight,\shots,\maxShotWeight}} \probProfavg{q,\maxShotWeight,\supwt}{\errWeightVec} \prod_{i=1}^{\shots} \NMq{q}{m, \shotLength, \errWeight_i},
\end{align}
and $\probProfavg{q,\maxShotWeight,\supwt}{\errWeightVec}$ is the average probability defined in~\eqref{eq:probProfavg}.
\end{theorem}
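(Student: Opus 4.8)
The plan is to run a standard random-coding union (RCU) argument that decouples the randomness of the support drawing in Line~\ref{alg:generic_sr_decoder:drawsupport} of Algorithm~\ref{alg:generic_sr_decoder} from the randomness of the code $\mycode{C}$. The first step is to characterize the event $\altEventGen$ combinatorially. If a single iteration returns a codeword $\c' \neq \c$, then the associated vector $\e' \defeq \y - \c'$ necessarily has $\SumRankWeight(\e') = \errWeight$, has its (row or column) support $\ESv'$ contained in the drawn super support $\FSv$, and satisfies $\e' \neq \e$ (because $\c' \neq \c$ and $\y$ is fixed once $\c$ is fixed). Hence the output-$\c'$ event is contained in the event that \emph{some} alternative $\e'$ with these properties exists, and a union bound over all weight-$\errWeight$ candidates yields
\begin{equation*}
\ProbOf{\altEventGen} \leq \sum_{\substack{\e' : \SumRankWeight(\e') = \errWeight \\ \e' \neq \e}} \ProbOf{\ESv' \subseteq \FSv \ \text{and}\ \y - \e' \in \mycode{C}\setminus\{\c\}}.
\end{equation*}

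Next I would exploit that the super support $\FSv$ is drawn independently of $\mycode{C}$, so the joint probability factors into a support factor and a code factor. For the code factor, condition on the transmitted codeword $\c$; then $\y - \e' = \c + (\e - \e')$ is a fixed point differing from $\c$, and since the remaining $q^{mk}-1$ codewords are drawn i.i.d.\ uniformly over $\Fqm^n$, the probability that $\y - \e'$ coincides with one of them is at most $(q^{mk}-1)q^{-mn} \leq q^{m(k-n)}$, uniformly in $\e'$. Pulling this factor out gives $\ProbOf{\altEventGen} \leq q^{m(k-n)} \sum_{\e' \neq \e} \ProbOf{\ESv' \subseteq \FSv}$.

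It then remains to evaluate the support factor and to show it depends on $\e'$ only through its rank profile $\errWeightVec = \wdcmp(\e')$. Expanding the drawing process (first sample $\v$ with probability $\ps{\v}$, then $\FSv$ uniformly from $\Xi_{q,\maxShotWeight}(\v)$), block independence gives $\ProbOf{\ESv' \subseteq \FSv \mid \v} = \prod_{i=1}^{\shots} \subspaceprob{\maxShotWeight}{\errWeight_i}{\supwt_i} = \probProf{q,\maxShotWeight}{\v, \errWeightVec}$ by~\eqref{eq:notation subspace prob} and~\eqref{eq:success probability for generic given profile of v and w}, and averaging over $\v$ recovers exactly $\probProfavg{q,\maxShotWeight,\supwt}{\errWeightVec}$ from~\eqref{eq:probProfavg}. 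Since the number of weight-$\errWeight$ vectors with rank profile $\errWeightVec$ equals $\prod_{i=1}^{\shots} \NMq{q}{m,\shotLength,\errWeight_i}$, I would group the sum by rank profile, upper-bounding by reinstating the excluded term $\e' = \e$ (which only enlarges the nonnegative sum), to obtain
\begin{equation*}
\sum_{\substack{\e' : \SumRankWeight(\e') = \errWeight \\ \e' \neq \e}} \ProbOf{\ESv' \subseteq \FSv} \leq \sum_{\errWeightVec \in \wdecomp{\errWeight,\shots,\maxShotWeight}} \probProfavg{q,\maxShotWeight,\supwt}{\errWeightVec} \prod_{i=1}^{\shots} \NMq{q}{m,\shotLength,\errWeight_i},
\end{equation*}
so that multiplying by $q^{m(k-n)}$ gives precisely $\pRCUBgen$ and completes the bound.

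The main obstacle I anticipate is the careful bookkeeping of the first two steps rather than any deep estimate: one must justify that the decoder's output event is genuinely contained in the existence-of-alternative-$\e'$ event (so that the union bound is valid even when the erasure decoder returns only one of several admissible $\e'$), and that the support-drawing and code randomness are truly independent so that the joint probability factors cleanly. The $q^{m(k-n)}$ code factor is the routine RCU ingredient, but the $\c' \neq \c$ restriction must be handled so that the transmitted codeword is correctly excluded from the i.i.d.\ count; everything else reduces to identifying $\ProbOf{\ESv' \subseteq \FSv}$ with the already-defined average probability $\probProfavg{q,\maxShotWeight,\supwt}{\errWeightVec}$.
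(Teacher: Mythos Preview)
Your proposal is correct and follows essentially the same random-coding-union argument as the paper. The only superficial difference is the ordering of the union bound: the paper first union-bounds over the $q^{mk}-1$ alternative codewords $\c_j$ and then, for each $\c_j$, sums over the candidate errors $\e'$ (using that $\y-\c_j$ is uniform in $\Fqm^n$), whereas you first union-bound over candidate errors $\e'\neq\e$ and then over codewords; both orderings yield the identical final expression $q^{m(k-n)}\sum_{\errWeightVec}\probProfavg{q,\maxShotWeight,\supwt}{\errWeightVec}\prod_i\NMq{q}{m,\shotLength,\errWeight_i}$.
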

\begin{proof}
    By assumption, each codeword in the codebook $\mycode{C}$ is drawn uniformly at random over $\Fqm^n$. Let $\c_j \in \mycode{C}$ with $\c_j \neq \c$ be one such alternative codeword with $j\inshotsarg{q^{mk}-1}$, and define $\mathcal{X}_j$ as the event that Algorithm~\ref{alg:generic_sr_decoder} can decode this codeword. Then
    \begin{equation}
        \ProbOf{\mathcal{X}_j} = \sum_{\substack{\e'\in\Fqm^n \\ \SumRankWeight(\e')=\errWeight}} \frac{1}{q^{mn}} \cdot \probProfavg{q,\maxShotWeight,\supwt}{\wdcmp(\e')}.
    \end{equation}
    Since $\probProfavg{q,\maxShotWeight,\supwt}{\wdcmp(\e')}$ only depends on the rank profile of $\e'$, we can change the sum to be over all rank profiles $\errWeightVec \in \wdecomp{\errWeight,\shots,\maxShotWeight}$ and multiply by the number of error vectors that have the same rank profile
    \begin{equation}
        \ProbOf{\mathcal{X}_j} = \sum_{\errWeightVec \in \wdecomp{\errWeight,\shots,\maxShotWeight}} \frac{1}{q^{mn}} \cdot \probProfavg{q,\maxShotWeight,\supwt}{\errWeightVec} \prod_{i=1}^{\shots} \NMq{q}{m, \shotLength, \errWeight_i}.
    \end{equation}
    The total probability of successful decoding is given by the union of the events $\mathcal{X}_1,\ldots,\mathcal{X}_{q^{mk}-1}$, which can be upper bounded by
    \begin{equation}
        \ProbOf{\bigcup_{j=1}^{q^{mk}-1} \mathcal{X}_j} \leq \sum_{j=1}^{q^{mk}-1} \ProbOf{\mathcal{X}_j} \leq q^{mk} \ProbOf{\mathcal{X}_j} = \pRCUBgen.
    \end{equation}
    Substituting the expression for $\ProbOf{\mathcal{X}_j}$ yields the desired upper bound on the success probability.
\end{proof}

Combining Theorem~\ref{thm:rc_bound_gen} and Theorem~\ref{thm:genDecSP_unique}, we can derive bounds on the success probability of Algorithm~\ref{alg:generic_sr_decoder} for one iteration to return at least one solution. We state these bounds in the following lemma.

\begin{theorem}
\label{thm:success_probability_bounds}
Let $\mycode{C}$ be a random code of length $n$ and size $q^{mk}$ over $\Fqm$, where each codeword is drawn uniformly at random from the ambient space $\Fqm^n$. Suppose that the received word $\y \in \Fqm^n$ is a noisy version of a codeword $\c \in \mycode{C}$, corrupted by an error vector $\e \in \Fqm^n$ with $\SumRankWeight(\e) = \errWeight$, i.e., $\y = \c + \e$. The success probability of Algorithm~\ref{alg:generic_sr_decoder} to output at least one solution satisfies
\begin{equation}
    \Pr[\text{success}] \geq \frac{1}{|\errorSet_{q,\shotLength,m,\shots}(\errWeight)|}\sum_{\supwt'=\errWeight}^{\smax} \sum_{\errWeightVec \in \wdecomp{\errWeight,\shots,\maxShotWeight}}  \probProfavg{q,\maxShotWeight,\supwt'}{\errWeightVec}\cdot  \prod_{i=1}^{\shots} \NMq{q}{m,\shotLength,\errWeight_i},
\end{equation}
and
\begin{equation}
    \Pr[\text{success}] \leq \left( \frac{1}{|\errorSet_{q,\shotLength,m,\shots}(\errWeight)|} + q^{m(k-n)} \right) \sum_{\supwt'=\errWeight}^{\smax} \sum_{\errWeightVec \in \wdecomp{\errWeight,\shots,\maxShotWeight}}  \probProfavg{q,\maxShotWeight,\supwt'}{\errWeightVec}\cdot  \prod_{i=1}^{\shots} \NMq{q}{m,\shotLength,\errWeight_i}.
\end{equation}
\end{theorem}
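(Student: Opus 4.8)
The plan is to combine the two previously established results in the natural way. By construction, a single iteration of Algorithm~\ref{alg:generic_sr_decoder} ``succeeds'' if it returns at least one valid solution, and this happens precisely when either the intended codeword $\c$ is recovered (the event $\uniqueEventGen$) or some alternative codeword $\c'\neq\c$ is returned (the event $\altEventGen$). Thus $\Pr[\text{success}] = \ProbOf{\uniqueEventGen \cup \altEventGen}$, and I would bound this union from both sides.

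For the lower bound, I would simply note that $\Pr[\text{success}] \geq \ProbOf{\uniqueEventGen}$, since recovering the true codeword is one way to succeed. Substituting the exact expression from Theorem~\ref{thm:genDecSP_unique} immediately yields the stated lower bound. For the upper bound, I would apply the elementary union bound $\ProbOf{\uniqueEventGen \cup \altEventGen} \leq \ProbOf{\uniqueEventGen} + \ProbOf{\altEventGen}$, then substitute the exact value from Theorem~\ref{thm:genDecSP_unique} for the first term and the \ac{RCU} upper bound $\pRCUBgen$ from Theorem~\ref{thm:rc_bound_gen} for the second. The key observation that makes the final expression collapse neatly is that both terms share the common summation factor $\sum_{\supwt'=\errWeight}^{\smax}\sum_{\errWeightVec} \probProfavg{q,\maxShotWeight,\supwt'}{\errWeightVec}\prod_i \NMq{q}{m,\shotLength,\errWeight_i}$, so that factoring it out leaves the prefactor $\bigl(|\errorSet_{q,\shotLength,m,\shots}(\errWeight)|^{-1} + q^{m(k-n)}\bigr)$, which matches the claimed bound.

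There is one point of care: the expression for $\pRCUBgen$ in Theorem~\ref{thm:rc_bound_gen} is stated for a single super-support weight $\supwt$, whereas the bound here sums $\supwt'$ from $\errWeight$ to $\smax$. I would therefore argue that over the full run, the alternative-solution contribution is likewise summed over all dimensions $\supwt'$ that the algorithm tries, so that the factored summation structure is genuinely identical across the two terms. The main obstacle, and the only real subtlety, is ensuring that the events $\uniqueEventGen$ and $\altEventGen$ are set up consistently with respect to this summation over $\supwt'$, so that the shared summand factors out cleanly; once that alignment is confirmed, the remainder is a direct substitution and the union-bound inequality, with no delicate estimation required.
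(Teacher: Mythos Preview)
Your proposal is correct and follows essentially the same approach as the paper: establish the lower bound via $\Pr[\text{success}]\geq\ProbOf{\uniqueEventGen}$ and substitute Theorem~\ref{thm:genDecSP_unique}, then obtain the upper bound via the union bound $\ProbOf{\uniqueEventGen}+\ProbOf{\altEventGen}$, substitute Theorems~\ref{thm:genDecSP_unique} and~\ref{thm:rc_bound_gen}, and factor out the common double sum. Your explicit remark about aligning the summation over $\supwt'$ in the two terms is a point the paper's proof passes over silently, so your treatment is, if anything, slightly more careful.
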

\begin{proof}
    First, we prove the lower bound on the success probability. Recall that we assume the received word $\y \in \Fqm^n$ is a noisy version of a codeword $\c \in \mycode{C}$, corrupted by an error vector $\e \in \Fqm^n$ with $\SumRankWeight(\e) = \errWeight$, i.e., $\y = \c + \e$. This implies that the codeword $\c$ is always within the decoding radius of the received word $\y$. Using the expression for $\Pr[\uniqueEventGen]$ from Theorem~\ref{thm:genDecSP_unique}, we have
    \begin{equation}
        \Pr[\text{success}] \geq \Pr[\uniqueEventGen] = \frac{1}{|\errorSet_{q,\shotLength,m,\shots}(\errWeight)|}\sum_{\supwt'=\errWeight}^{\smax} \sum_{\errWeightVec \in \wdecomp{\errWeight,\shots,\maxShotWeight}}  \probProfavg{q,\maxShotWeight,\supwt'}{\errWeightVec}\cdot  \prod_{i=1}^{\shots} \NMq{q}{m,\shotLength,\errWeight_i}.
    \end{equation}
    
    Next, we prove the upper bound on the success probability. Using union bound arguments and the expressions for $\Pr[\altEventGen]$ and $\Pr[\uniqueEventGen]$ from Theorem~\ref{thm:rc_bound_gen} and Theorem~\ref{thm:genDecSP_unique}, respectively, we obtain
    \begin{align}
        \Pr[\text{success}] &\leq \Pr[\uniqueEventGen] + \Pr[\altEventGen] \\
        &= \frac{1}{|\errorSet_{q,\shotLength,m,\shots}(\errWeight)|}\sum_{\supwt'=\errWeight}^{\smax} \sum_{\errWeightVec \in \wdecomp{\errWeight,\shots,\maxShotWeight}}  \probProfavg{q,\maxShotWeight,\supwt'}{\errWeightVec}\cdot  \prod_{i=1}^{\shots} \NMq{q}{m,\shotLength,\errWeight_i} \\
        &\quad + q^{m(k-n)} \sum_{\supwt'=\errWeight}^{\smax} \sum_{\errWeightVec \in \wdecomp{\errWeight,\shots,\maxShotWeight}}  \probProfavg{q,\maxShotWeight,\supwt'}{\errWeightVec}\cdot  \prod_{i=1}^{\shots} \NMq{q}{m,\shotLength,\errWeight_i} \\
        &= \left( \frac{1}{|\errorSet_{q,\shotLength,m,\shots}(\errWeight)|} + q^{m(k-n)} \right) \sum_{\supwt'=\errWeight}^{\smax} \sum_{\errWeightVec \in \wdecomp{\errWeight,\shots,\maxShotWeight}}  \probProfavg{q,\maxShotWeight,\supwt'}{\errWeightVec}\cdot  \prod_{i=1}^{\shots} \NMq{q}{m,\shotLength,\errWeight_i},
    \end{align}
    which concludes the lemma.
\end{proof}

From Theorem~\ref{thm:success_probability_bounds} we get that to find an optimal distribution $\ps{\v}$ to draw $\v$ from $\wdecomp{\supwt,\shots,\maxShotWeight}$ we need to maximize the term
\begin{equation}
\max_{\psvec \in \pmfset{\wdecomp{\supwt,\shots,\maxShotWeight}}} \sum_{\errWeightVec \in \wdecomp{\errWeight,\shots,\maxShotWeight}} \sum_{\v \in \wdecomp{\supwt,\shots,\maxShotWeight}} \ps{\v} \cdot \probProf{q,\maxShotWeight}{\v, \errWeightVec} \cdot \prod_{i=1}^{\shots} \NMq{q}{m, \shotLength, \errWeight_i},
\end{equation}
where $\pmfset{\wdecomp{\supwt,\shots,\maxShotWeight}}$ is the set of all valid \acp{PMF} over $\wdecomp{\supwt,\shots,\maxShotWeight}$ as defined in \eqref{eq:pmfset}.

\subsection{Optimizing the Support-Drawing Distribution via Linear Programming}\label{sec:generic-lp-solution-inefficient}

The process of drawing a super support from a known distribution can be further broken down. Instead of drawing a rank profile $\v \in \wdecomp{\supwt,\shots,\maxShotWeight}$ according to $\ps{\v}$, we can draw an ordered rank profile $\sord \in \intparts{\supwt,\shots,\maxShotWeight}$ according to a distribution $\psord{\sord}$, where $\sord = \ord(\v)$ is obtained by sorting the elements of $\v$ in non-increasing order. This simplification is possible due to symmetry, as the probability of drawing a particular rank profile remains the same for all permutations of that profile.

After drawing the ordered rank profile $\sord$, we perform a uniformly random permutation to obtain the final rank profile $\v$. The relation between the two probability distributions is given by
\begin{equation}\label{eq:sortedpvsunsortedp}
    \ps{\v} = \frac{\psord{\ord(\v)}}{ \PermCount{\ord(\v)}} =  \frac{\psord{\ord(\v)}}{ \PermCount{\v}}.
\end{equation}

By reducing the problem to optimizing the distribution $\psord{\sord}$ of ordered rank profiles, we have reduced the number of unknowns since we have $|\intparts{\supwt,\shots,\maxShotWeight}| \leq |\wdecomp{\supwt,\shots,\maxShotWeight}|$.

In summary, the process of drawing a suitable super support $\FSv$ can be broken down into three steps:
\begin{itemize}
    \item[1)] Draw an ordered rank profile $\sord \in \intparts{\supwt,\shots,\maxShotWeight}$ according to a distribution $\psord{\sord}$, which is the criterion we need to optimize, and then apply a uniformly random permutation to obtain the rank profile $\v$.
    \item[2)] For each $i\inrange{1}{\shots}$, draw $\FSv_i$ from the set of all spaces of dimension $\supwt_i$, independently for all blocks.
    \item[3)] Combine the individual blocks $\FSv_i$ to form the overall super support $\FSv = \FSv_1 \times \dots \times \FSv_\shots$.
\end{itemize}

By making use of~\eqref{eq:sortedpvsunsortedp} we can reduce the number of unknowns and instead maximize
\begin{equation}
    \begin{aligned}
        \max_{\psordvec \in \pmfset{\intparts{\supwt,\shots,\maxShotWeight}}} \sum_{\errWeightVec \in \intparts{\errWeight,\shots,\maxShotWeight}} & \sum_{\v \in \wdecomp{\supwt,\shots,\maxShotWeight}^{\geq \errWeightVec} } \frac{\PermCount{\errWeightVec}}{\PermCount{\v}} \psord{\ord(\v)}\probProf{q,\maxShotWeight}{\v, \errWeightVec} \prod_{i=1}^{\shots} \NMq{q}{m, \shotLength, \errWeight_i} \\
        = \max_{\psord{\v'} \in \pmfset{\intparts{\supwt,\shots,\maxShotWeight}}} & \sum_{\v' \in \intparts{\supwt,\shots,\maxShotWeight}} \psord{\v'} \cdot f(\v') \label{eq:lp-objective},
    \end{aligned}
\end{equation}
where
\begin{equation}
    f(\v') \coloneqq \sum_{\errWeightVec \in \intparts{\errWeight,\shots,\maxShotWeight}}  \frac{\PermCount{\errWeightVec}}{\PermCount{\v'}} \left(\sum_{\v''\in\SG{\v'}}\probProf{q,\maxShotWeight,\supwt}{\v'', \errWeightVec} \right) \prod_{i=1}^{\shots} \NMq{q}{m, \shotLength, \errWeight_i}.
\end{equation}

This optimization problem can be solved via \ac{LP} methods, where the objective function is~\eqref{eq:lp-objective} with $|\intparts{\supwt,\shots,\maxShotWeight}|$ unknowns. Although we have reduced the number of unknowns by restricting the optimization to the ordered set $\intparts{\supwt,\shots,\maxShotWeight}$, it is important to note that the cardinality of this set, and consequently the number of unknowns, can still grow super-polynomially with the parameters $\supwt$, $\shots$, and $\maxShotWeight$. Furthermore, computing the coefficients of the constraints requires summing over the set $\intparts{\errWeight,\shots,\maxShotWeight}$, which can be computationally demanding due to its potentially large cardinality.
Even if we successfully derive the optimal support-drawing distribution through this process, implementing an efficient algorithm to sample from this distribution poses another significant challenge. This limitation motivates the need for alternative approaches to simplify the optimization problem and develop more practical sampling algorithms.

\subsection{Efficient Optimization of the Support-Drawing Distribution in Generic Decoding}\label{sec:new heuristic approach for generic decoding for rcu bound}

In this section, we propose an efficient method to optimize the support-drawing distribution, addressing the computational challenges discussed earlier. By assuming independence between the sum-rank metric blocks, we greatly simplify the problem. Instead of drawing a complete rank profile vector $\v \in \wdecomp{\errWeight,\shots,\maxShotWeight}$ with a fixed total rank $\supwt$, we independently draw the rank $\supwt_i$ for each of the $\shots$ blocks. As a result, the sum rank $\supwt = \sum_{i=1}^\shots \supwt_i$ becomes a random variable. To prevent it from becoming unbounded, we constrain its expected value, $\mathbb{E}[\supwt]$, to match a predetermined relative sum-rank weight $\supwt / \shots$. This assumption reduces the complexity of the optimization problem by focusing on the distributions for individual blocks, and it enables efficient sampling from the optimized distribution, overcoming the practical limitations of the previous approach.

Although this heuristic approach may not always yield the optimal solution that accounts for the dependencies between the ranks of the guessed supports across different blocks, it still provides a good approximation. We demonstrate this numerically in Section~\ref{sec:generic-numerical-results} by comparing the performance of the heuristic solution with solutions that consider these dependencies, for parameters where the more complex optimization method is feasible.

Let $\psm{i}$ denote the marginal probability of drawing a super support $\myspace{F}_i$ with dimension $\supwt_i$, where $0 \leq \supwt_i \leq \maxShotWeight$ for $i\inshots$, and let $\psmvec = [\psm{0}, \ldots, \psm{\maxShotWeight}]$ represent the marginal probability vector. Assuming that the dimension of each subspace $\myspace{F}_i$ is drawn independently according to  $\psmvec$, the probability of a given rank profile $\v = [\supwt_1, \ldots, \supwt_\shots] \in \wdecomp{\supwt,\shots,\maxShotWeight}$ is given by
\begin{equation}\label{eq:super space marginal assumption}
    \ps{\v} = \prod_{i=1}^{\shots} \psm{\supwt_i}.
\end{equation}

We define the following two quantities
\begin{equation}\label{eq:B_def_0}
\tilde{B}_{q, m, \shotLength}(\errWeight, \supwt, \shots) \defeq \sum_{\w \in \wdecomp{\errWeight,\shots,\maxShotWeight}} \sum_{\v \in \wdecomp{\supwt,\shots,\maxShotWeight}} \ps{\v} \NMq{q}{m, \shotLength, w_i} \subspaceprob{\maxShotWeight}{w_i}{\supwt_i},
\end{equation}
and
\begin{equation}\label{eq:B_def}
B_{q, m, \shotLength}(\errWeight, \supwt, \shots) \defeq \sum_{\w \in \wdecomp{\errWeight,\shots,\maxShotWeight}} \sum_{\v \in \wdecomp{\supwt,\shots,\maxShotWeight}}\prod_{i=1}^{\shots} \psm{\supwt_i} \NMq{q}{m, \shotLength, w_i} \subspaceprob{\maxShotWeight}{w_i}{\supwt_i},
\end{equation}
where~\eqref{eq:B_def} is a special case of~\eqref{eq:B_def_0} using our independence assumption.
In Appendix~\ref{sec:appendix-1}, we show that \eqref{eq:B_def} is efficiently computable in polynomial time.
Using the definition of $B_{q, m, \shotLength}(\errWeight, \supwt, \shots)$ from~\eqref{eq:B_def} and the relaxation in~\eqref{eq:super space marginal assumption}, we can restate Theorem~\ref{thm:genDecSP_unique}, Theorem~\ref{thm:rc_bound_gen}, and Theorem~\ref{thm:success_probability_bounds} in the following corollaries, respectively.

\begin{corollary}\label{cor:rc_bound_gen_relaxed}
The probability $\Pr[\altEventGen]$ of having an alternative solution in Algorithm~\ref{alg:generic_sr_decoder} for a random linear code of length $n$ and cardinality $M=q^{mk}$ over $\Fqm$ can be upper bounded as
\begin{equation}
    \Pr[\altEventGen] \leq q^{m(k-n)} \sum_{\supwt=\errWeight}^{\supwt_{\max}} \tilde{B}_{q, m, \shotLength}(\errWeight, \supwt, \shots).
\end{equation}
\end{corollary}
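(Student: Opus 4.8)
The plan is to read the claimed inequality as Theorem~\ref{thm:rc_bound_gen} applied once for every admissible super-support weight and then merged by a union bound.

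First I would fix a single super-support weight $\supwt$ with $\errWeight \leq \supwt \leq \smax$ and unfold the quantity $\pRCUBgen$ appearing in Theorem~\ref{thm:rc_bound_gen}. Inserting the average probability $\probProfavg{q,\maxShotWeight,\supwt}{\errWeightVec}$ from its definition~\eqref{eq:probProfavg}, and then the per-block factorization of $\probProf{q,\maxShotWeight}{\v,\errWeightVec}$ from~\eqref{eq:success probability for generic given profile of v and w}, converts the single $\errWeightVec$-sum into the double sum over $\errWeightVec \in \wdecomp{\errWeight,\shots,\maxShotWeight}$ and $\v \in \wdecomp{\supwt,\shots,\maxShotWeight}$ with summand $\ps{\v}\prod_{i=1}^{\shots}\NMq{q}{m,\shotLength,w_i}\subspaceprob{\maxShotWeight}{w_i}{\supwt_i}$. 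By the definition in~\eqref{eq:B_def_0} this is exactly $q^{m(k-n)}\tilde{B}_{q,m,\shotLength}(\errWeight,\supwt,\shots)$, so Theorem~\ref{thm:rc_bound_gen} already delivers the per-weight statement $\ProbOf{\altEventGen}\leq q^{m(k-n)}\tilde{B}_{q,m,\shotLength}(\errWeight,\supwt,\shots)$ for each fixed $\supwt$.

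Next I would remove the restriction to a single weight. Across its iterations the decoder in Algorithm~\ref{alg:generic_sr_decoder} draws super supports whose sum dimension ranges over all admissible values from $\errWeight$ up to $\smax$ (the same range that already appears in Theorem~\ref{thm:genDecSP_unique}). Consequently the event $\altEventGen$ of returning an alternative codeword $\c'\neq\c$ is contained in the union, over $\supwt \in \{\errWeight,\ldots,\smax\}$, of the corresponding single-weight events, and a union bound sums the per-weight bounds to give
\begin{equation*}
    \ProbOf{\altEventGen} \leq q^{m(k-n)}\sum_{\supwt=\errWeight}^{\smax}\tilde{B}_{q,m,\shotLength}(\errWeight,\supwt,\shots),
\end{equation*}
which is the asserted inequality.

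I expect the identification of the expanded $\pRCUBgen$ with definition~\eqref{eq:B_def_0} to be routine substitution. The only genuine modeling step — and the place I would argue most carefully — is the union bound over $\supwt$: I would spell out that the outer sum over weights originates from the decoder trying every admissible sum dimension, and that upper-bounding the union of alternative-solution events by the sum of per-weight \ac{RCU} bounds is valid because it merely discards the correlations between the attempts, which preserves the inequality direction required by the corollary.
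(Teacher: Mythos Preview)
Your Step~1 is correct and is essentially the entire content of the corollary: expanding $\probProfavg{q,\maxShotWeight,\supwt}{\errWeightVec}$ via~\eqref{eq:probProfavg} and~\eqref{eq:success probability for generic given profile of v and w} shows that $\pRCUBgen = q^{m(k-n)}\tilde{B}_{q,m,\shotLength}(\errWeight,\supwt,\shots)$ for a fixed $\supwt$.

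Step~2, however, is based on a misreading of both Algorithm~\ref{alg:generic_sr_decoder} and the relaxation in~\eqref{eq:super space marginal assumption}. The algorithm does not ``try every admissible sum dimension'': $\supwt$ is a fixed input parameter. The outer sum over $\supwt$ in the corollary arises because, under the independence relaxation~\eqref{eq:super space marginal assumption}, each block dimension $\supwt_i$ is drawn according to $\psm{\supwt_i}$ and the total $\supwt=\sum_i \supwt_i$ becomes a \emph{random variable within a single draw}. Thus the support-drawing distribution $\ps{\v}$ now lives on all of $\{0,\ldots,\maxShotWeight\}^{\shots}$, and partitioning that domain by the value of $\supwt$ gives
\[
\sum_{\v}\ps{\v}\,(\cdots)=\sum_{\supwt=\errWeight}^{\smax}\ \sum_{\v\in\wdecomp{\supwt,\shots,\maxShotWeight}}\ps{\v}\,(\cdots)=\sum_{\supwt=\errWeight}^{\smax}\tilde{B}_{q,m,\shotLength}(\errWeight,\supwt,\shots),
\]
with equality, by the law of total probability --- not by a union bound.

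Your union-bound route does not actually reproduce the claimed right-hand side. In your Step~1 you invoke Theorem~\ref{thm:rc_bound_gen} with $\ps{\v}$ a genuine \ac{PMF} on $\wdecomp{\supwt,\shots,\maxShotWeight}$ for each fixed $\supwt$; summing those bounds over $\supwt$ yields $\sum_{\supwt}\pRCUBgen^{(\supwt)}$ with each summand normalised to total mass one on its own $\wdecomp{\supwt,\shots,\maxShotWeight}$. The corollary's $\tilde{B}(\supwt)$, by contrast, uses a single global $\ps{\v}$ whose restriction to $\wdecomp{\supwt,\shots,\maxShotWeight}$ has mass $\Pr[\text{sum dim}=\supwt]<1$. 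So your sum is strictly larger than $\sum_{\supwt}\tilde{B}(\supwt)$ in general, and the argument does not establish the stated inequality. The fix is simply to run the \ac{RCU} argument from Theorem~\ref{thm:rc_bound_gen}'s proof once, with $\v$ ranging over $\{0,\ldots,\maxShotWeight\}^{\shots}$, and then split the $\v$-sum by its total weight.
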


\begin{corollary}\label{cor:genDecSP_unique_relaxed}
The probability $\Pr[\uniqueEventGen]$ that Algorithm~\ref{alg:generic_sr_decoder} outputs a unique solution $\c$ for a random linear code of length $n$ and cardinality $M=q^{mk}$ over $\Fqm$ is given by
\begin{equation}
    \Pr[\uniqueEventGen] = \frac{1}{|\errorSet_{q,\shotLength,m,\shots}(\errWeight)|} \sum_{\supwt=\errWeight}^{\smax} \tilde{B}_{q, m, \shotLength}(\errWeight, \supwt, \shots).
\end{equation}
\end{corollary}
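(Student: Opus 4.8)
The plan is to obtain the corollary as a direct algebraic rewriting of Theorem~\ref{thm:genDecSP_unique}, with no new probabilistic content. Observe first that the event $\uniqueEventGen$ concerns only the recovery of the transmitted codeword $\c$, which is present in the received word by construction; hence $\ProbOf{\uniqueEventGen}$ does not depend on whether $\mycode{C}$ is random, and the expression of Theorem~\ref{thm:genDecSP_unique} applies verbatim. I would therefore start from that expression, namely $\ProbOf{\uniqueEventGen} = |\errorSet_{q,\shotLength,m,\shots}(\errWeight)|^{-1}\sum_{\supwt'=\errWeight}^{\smax} \sum_{\errWeightVec \in \wdecomp{\errWeight,\shots,\maxShotWeight}} \probProfavg{q,\maxShotWeight,\supwt'}{\errWeightVec} \prod_{i=1}^{\shots} \NMq{q}{m,\shotLength,\errWeight_i}$, and reduce the inner double sum to the single quantity $\tilde{B}_{q, m, \shotLength}(\errWeight, \supwt', \shots)$ for each fixed $\supwt'$.

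Concretely, I would first unfold the average probability via its definition in~\eqref{eq:probProfavg}, replacing $\probProfavg{q,\maxShotWeight,\supwt'}{\errWeightVec}$ by $\sum_{\v \in \wdecomp{\supwt',\shots,\maxShotWeight}} \ps{\v}\, \probProf{q,\maxShotWeight}{\v, \errWeightVec}$, and then expand $\probProf{q,\maxShotWeight}{\v, \errWeightVec}$ into the block product $\prod_{i=1}^{\shots}\subspaceprob{\maxShotWeight}{\errWeightVec_i}{\v_i}$ using~\eqref{eq:success probability for generic given profile of v and w}. Since $\prod_{i=1}^{\shots}\NMq{q}{m,\shotLength,\errWeight_i}$ is independent of $\v$, it can be pushed inside the sum over $\v$ and merged with the subspace-probability product into the single product $\prod_{i=1}^{\shots}\NMq{q}{m,\shotLength,\errWeight_i}\,\subspaceprob{\maxShotWeight}{\errWeightVec_i}{\v_i}$, which is precisely the per-block factor appearing in the definition~\eqref{eq:B_def_0} of $\tilde{B}$.

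After this rearrangement, for each fixed $\supwt'$ the inner expression $\sum_{\errWeightVec} \sum_{\v} \ps{\v} \prod_{i=1}^{\shots}\NMq{q}{m,\shotLength,\errWeight_i}\,\subspaceprob{\maxShotWeight}{\errWeightVec_i}{\v_i}$ matches~\eqref{eq:B_def_0} exactly, with the summation index $\errWeightVec$ playing the role of $\w$ (so that $\errWeightVec_i$ corresponds to $w_i$ and $\v_i$ to $\supwt_i$) and the per-block factors collected into the product over $i\in\{1,\ldots,\shots\}$. Substituting $\tilde{B}_{q, m, \shotLength}(\errWeight, \supwt', \shots)$ and relabeling $\supwt'$ as $\supwt$ yields the claimed formula. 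I expect no genuine obstacle here: the argument is a mechanical substitution, and the only point requiring care is the bookkeeping that ensures it is the general (not-yet-factorized) distribution $\ps{\v}$ that enters, so that the correct object is $\tilde{B}$ rather than its independence-specialization $B$ from~\eqref{eq:B_def}. The analogous derivation underlies Corollary~\ref{cor:rc_bound_gen_relaxed} as well.
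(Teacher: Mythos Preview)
Your proposal is correct and matches the paper's approach: the corollary is presented there as a direct restatement of Theorem~\ref{thm:genDecSP_unique} obtained by substituting the definitions~\eqref{eq:probProfavg}, \eqref{eq:success probability for generic given profile of v and w}, and~\eqref{eq:B_def_0}, exactly as you outline. Your observation that the general distribution $\ps{\v}$ yields $\tilde{B}$ rather than its factorized specialization $B$ is the only point requiring care, and you handle it correctly.
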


\begin{corollary}\label{cor:success_probability_bounds_relaxed}
The success probability of Algorithm~\ref{alg:generic_sr_decoder} to output at least one solution satisfies
\begin{equation}
    \Pr[\text{success}] \geq \frac{1}{|\errorSet_{q,\shotLength,m,\shots}(\errWeight)|}\sum_{\supwt=\errWeight}^{\smax}\tilde{B}_{q, m, \shotLength}(\errWeight, \supwt, \shots),
\end{equation}
and
\begin{equation}
    \Pr[\text{success}] \leq \left( \frac{1}{|\errorSet_{q,\shotLength,m,\shots}(\errWeight)|} + q^{m(k-n)} \right) \sum_{\supwt=\errWeight}^{\supwt_{\max}} \tilde{B}_{q, m, \shotLength}(\errWeight, \supwt, \shots).
\end{equation}
\end{corollary}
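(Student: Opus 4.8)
The plan is to derive Corollary~\ref{cor:success_probability_bounds_relaxed} as a direct reformulation of Theorem~\ref{thm:success_probability_bounds}, the only new content being the identification of the inner double sum over $\supwt'$ and $\errWeightVec$ with the quantity $\tilde{B}_{q,m,\shotLength}$ of~\eqref{eq:B_def_0}. Concretely, I would first prove, for each fixed $\supwt'$ with $\errWeight \le \supwt' \le \smax$, the single identity
\[
\sum_{\errWeightVec \in \wdecomp{\errWeight,\shots,\maxShotWeight}}  \probProfavg{q,\maxShotWeight,\supwt'}{\errWeightVec}\cdot  \prod_{i=1}^{\shots} \NMq{q}{m,\shotLength,\errWeight_i} = \tilde{B}_{q, m, \shotLength}(\errWeight, \supwt', \shots),
\]
and then feed it into the already-established bounds of Theorem~\ref{thm:success_probability_bounds}. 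Once this identity is in place, summing over $\supwt'$ from $\errWeight$ to $\smax$ turns both the lower and the upper bound of that theorem verbatim into the two inequalities claimed here.

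To establish the identity I would substitute the definition~\eqref{eq:probProfavg} of the average profile probability, writing $\probProfavg{q,\maxShotWeight,\supwt'}{\errWeightVec} = \sum_{\v \in \wdecomp{\supwt',\shots,\maxShotWeight}} \ps{\v}\,\probProf{q,\maxShotWeight,\supwt'}{\v, \errWeightVec}$, and then insert the product form $\probProf{q,\maxShotWeight}{\v,\errWeightVec} = \prod_{i=1}^{\shots} \subspaceprob{\maxShotWeight}{w_i}{\supwt_i}$ from~\eqref{eq:success probability for generic given profile of v and w}. Interchanging the (finite) order of summation over $\errWeightVec$ and $\v$, the matrix-count factor $\prod_{i=1}^{\shots}\NMq{q}{m,\shotLength,\errWeight_i}$ and the subspace-probability factor $\prod_{i=1}^{\shots}\subspaceprob{\maxShotWeight}{w_i}{\supwt_i}$ combine into the single per-block product appearing under the double sum in~\eqref{eq:B_def_0}, which is exactly $\tilde{B}_{q,m,\shotLength}(\errWeight,\supwt',\shots)$.

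With the identity available, Corollary~\ref{cor:genDecSP_unique_relaxed} follows from Theorem~\ref{thm:genDecSP_unique} and Corollary~\ref{cor:rc_bound_gen_relaxed} follows from Theorem~\ref{thm:rc_bound_gen}, the only difference being the outer prefactor ($|\errorSet_{q,\shotLength,m,\shots}(\errWeight)|^{-1}$ in the former, $q^{m(k-n)}$ in the latter). The present corollary then closes by reproducing the union-bound step of Theorem~\ref{thm:success_probability_bounds}: the lower bound is $\Pr[\text{success}] \ge \Pr[\uniqueEventGen]$, while the upper bound is $\Pr[\text{success}] \le \Pr[\uniqueEventGen] + \Pr[\altEventGen]$, and since both terms now share the common double sum $\sum_{\supwt'=\errWeight}^{\smax}\tilde{B}_{q,m,\shotLength}(\errWeight,\supwt',\shots)$, the two prefactors collect into $\bigl(|\errorSet_{q,\shotLength,m,\shots}(\errWeight)|^{-1} + q^{m(k-n)}\bigr)$.

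I do not expect a genuine obstacle here; the entire argument is bookkeeping. The one step requiring care is the interchange of the summations over $\errWeightVec$ and $\v$ together with the regrouping of the two $i$-indexed products into the single product used in~\eqref{eq:B_def_0}. I would also flag explicitly that this restatement uses only the general form $\tilde{B}$ of~\eqref{eq:B_def_0} and therefore holds \emph{without} the block-independence relaxation~\eqref{eq:super space marginal assumption}; that relaxation is invoked separately only to specialize $\tilde{B}$ to the efficiently computable $B$ of~\eqref{eq:B_def}.
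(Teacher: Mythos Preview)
Your proposal is correct and matches the paper's approach: the paper presents this corollary (together with Corollaries~\ref{cor:rc_bound_gen_relaxed} and~\ref{cor:genDecSP_unique_relaxed}) as a direct restatement of Theorem~\ref{thm:success_probability_bounds} via the definition~\eqref{eq:B_def_0}, without spelling out the bookkeeping you describe. Your observation that the statement with $\tilde{B}$ already holds for an arbitrary support-drawing distribution $\ps{\v}$, and that the independence relaxation~\eqref{eq:super space marginal assumption} is only needed to pass from $\tilde{B}$ to the efficiently computable $B$, is a useful clarification that the paper's lead-in sentence somewhat obscures.
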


From Corollary~\ref{cor:success_probability_bounds_relaxed}, under our independence assumption, the success probability is proportional to the term
\begin{equation}\label{eq:maxP_iid}
\sum_{\supwt=\errWeight}^{\supwt_{\max}} \tilde{B}_{q, m, \shotLength}(\errWeight, \supwt, \shots) = \sum_{\supwt=\errWeight}^{\supwt_{\max}} B_{q, m, \shotLength}(\errWeight, \supwt, \shots),
\end{equation}
which we aim to maximize over all possible $\psmvec \in \pmfset{\{0, \dots, \maxShotWeight\}}$.

In the following, we further upper bound the expression in~\eqref{eq:maxP_iid} and propose a method to maximize this upper bound, to obtain a valid solution for $\psmvec \in \pmfset{\{0, \dots, \maxShotWeight\}}$. That is
\begin{align}
    \sum_{\supwt=\errWeight}^{\smax}B_{q, m, \shotLength}(\errWeight, \supwt, \shots) &= \sum_{\supwt=\decRadius}^{\smax}\sum_{\w \in \wdecomp{\decRadius,\shots,\maxShotWeight}} \sum_{\v \in \wdecomp{\supwt,\shots,\maxShotWeight}}\prod_{i=1}^{\shots} \psm{\supwt_i}  \NMq{q}{m, \shotLength, w_i} \subspaceprob{\maxShotWeight}{w_i}{\supwt_i} \\
    &\leq \sum_{\w \in {\{0,\ldots,\maxShotWeight\}}^\shots} \sum_{\v \in {\{0,\ldots,\maxShotWeight\}}^\shots}\prod_{i=1}^{\shots} \psm{\supwt_i}  \NMq{q}{m, \shotLength, w_i} \subspaceprob{\maxShotWeight}{w_i}{\supwt_i} \\
    &= {\left(\sum_{w=0}^{\maxShotWeight} \sum_{\supwt=0}^{\maxShotWeight}\psm{\supwt}  \NMq{q}{m, \shotLength, w} \subspaceprob{\maxShotWeight}{w}{v} \right)}^{\shots},\label{eq:iid_upper_max}
\end{align}
where the last equality follows from the fact that we're summing over all possible $\shots$-tuples of $w_i$ and $v_i$, and for each tuple, we're computing the product of functions that depend only on $w_i$ and $v_i$. Since each $(w_i, v_i)$ pair is independent and the summations are over the same finite ranges, the combined sum over all vectors can be expressed as the $\shots$-th power of a single sum over $w$ and $v$. This is due to the distributive property of multiplication over addition and the independence of each component in the tuples, allowing us to factor the multiple sums and products into a single term raised to the power of $\shots$.

To maximize the right-hand side of \eqref{eq:iid_upper_max}, it suffices to maximize the expression

\begin{equation}\label{eq:maximize_term}
    \sum_{\errWeight'=0}^{\maxShotWeight} \sum_{\supwt'=0}^{\maxShotWeight} \psm{\supwt'} \NMq{q}{m, \shotLength, \errWeight'} \subspaceprob{\maxShotWeight}{\errWeight'}{\supwt'}.
\end{equation}

This expression is closely related to the average probability that a randomly drawn super space $\FS_i$ contains the error space $\ES_i$ in a single block, averaged over all possible rank weights $\errWeight'$. The average single-block success probability is given as
\begin{equation}\label{eq:single_block_prob}
    \sum_{\errWeight'=0}^{\maxShotWeight} \Pr[\errWeight'] \sum_{\supwt'=0}^{\maxShotWeight} \psm{\supwt'} \subspaceprob{\maxShotWeight}{\errWeight'}{\supwt'},
\end{equation}
where $\Pr[\errWeight']$ is the marginal probability of an error of rank weight $\errWeight'$ occurring in a single block.

In the asymptotic setting, where $\shotLength$ and $m$ are fixed and the number of blocks $\shots \to \infty$, the assumption of independence between blocks becomes valid due to the law of large numbers and the concept of typical sequences from statistical mechanics. In this regime, the empirical distribution of error weights in the blocks converges to the marginal distribution $\Pr[\errWeight']$, which can be approximated by the Boltzmann distribution~\cite{couveeBoundsSphereSizes2024}

\begin{equation}\label{eq:boltzmann_approx}
    \Pr[\errWeight'] = \frac{\NMq{q}{m,\shotLength, \errWeight'}}{\sum_{\errWeight''=0}^{\maxShotWeight} \NMq{q}{m,\shotLength, \errWeight''} e^{-\boltzmParam \errWeight''}} \cdot  e^{-\boltzmParam \errWeight'},
\end{equation}
where $\boltzmParam$ is the unique solution to the weight constraint
\begin{equation}\label{eq:weight_constraint}
    \mathbb{E}[\errWeight'] = \sum_{\errWeight''=0}^{\maxShotWeight}\errWeight'' \cdot \Pr[\errWeight'']  = \frac{\errWeight}{\shots}.
\end{equation}
By substituting \eqref{eq:boltzmann_approx} into \eqref{eq:single_block_prob}, we obtain the single-block success probability under the asymptotic error-weight distribution.

Maximizing the single-block success probability in \eqref{eq:single_block_prob} effectively maximizes the overall success probability in the asymptotic regime. Although \eqref{eq:maximize_term} represents an upper bound on the success probability, this upper bound becomes tight as $\shots \to \infty$ due to the convergence properties established by the law of large numbers. Therefore, optimizing this upper bound is justified because it aligns with maximizing the actual success probability in the asymptotic setting.

This connection reveals that optimizing \eqref{eq:maximize_term} to obtain an optimal marginal distribution $\psm{\supwt'}$ for the guessed super-support dimensions is beneficial for maximizing~\eqref{eq:maxP_iid}.

To optimize~\eqref{eq:maximize_term}, our approach focuses on the marginal distribution $\psmvec \in \pmfset{\{0,\ldots,\maxShotWeight\}}$ rather than directly optimizing $\psvec$, aiming to approximate the optimal average rank profile for the super support. Since directly optimizing $\psmvec$ results in a distribution independent of the number of blocks $\shots$, we impose the constraint $\psm{i} = \frac{x_i}{\shots}$, where $x_i \in \NN$ represents the number of occurrences of rank $i$ across the $\shots$ blocks.

We then maximize the objective in~\eqref{eq:iid_upper_max} using linear integer programming with appropriate constraints and non-negativity conditions. This method assumes independence of rank weights across the $\shots$ blocks, which holds asymptotically as $\shots \to \infty$ for fixed $\shotLength$ and $m$.

By applying this method, we obtain a solution $\x = {[x_0, \ldots, x_{\maxShotWeight}]}$, from which we construct the ordered rank profile $\hat{\v}\in\intparts{\errWeight,\shots,\maxShotWeight}$ as
\begin{equation}\label{eq:heur_svec_avgrcu}
    \hat{\supwtVec} = [\underbrace{\maxShotWeight,\ldots,\maxShotWeight}_{x_{\maxShotWeight}\text{ times}},\underbrace{\maxShotWeight-1,\ldots,\maxShotWeight-1}_{x_{\maxShotWeight-1}\text{ times}},\ldots,\underbrace{1,\ldots,1}_{x_{1}\text{ times}},\underbrace{0,\ldots,0}_{x_{0}\text{ times}}],
\end{equation}
where each element $i \in \{0,\ldots,\maxShotWeight\}$ appears exactly $x_i$ times in the vector $\hat{\v}$.
We then have that
\begin{equation}\label{eq:heu-distr-gen}
    \psord{\v'}^{(\text{heu})} \defeq 
    \begin{cases}
        1 & \text{if } \v' = \hat{\v}, \\
        0 & \text{otherwise}.
    \end{cases}
\end{equation}
Using the relation in~\eqref{eq:sortedpvsunsortedp}, we obtain the overall probability for $\v$, i.e., $\ps{\v}^{(\text{heu})}$.
For this specific \ac{PMF} we can write~\eqref{eq:B_def_0} as
\begin{equation}\label{eq:B_def_0_heu}
\tilde{B}_{q, m, \shotLength}(\errWeight, \supwt, \shots) = \sum_{\w \in \wdecomp{\errWeight,\shots,\maxShotWeight}} \prod_{i=1}^{\shots} \NMq{q}{m, \shotLength, w_i} \subspaceprob{\maxShotWeight}{w_i}{\hat{\supwt}_i}.
\end{equation}
As shown in Appendix~\ref{sec:appendix-1}, this expression can be efficiently computed in polynomial time since it is a special case of~\eqref{eq:B_def} for a fixed vector $\supwtVec = \hat{\supwtVec}$.

Thus, the bounds on the overall expected runtime provided in the following theorem, which are general for any support-guessing distribution, can be efficiently computed for our specific support-guessing distribution given by~\eqref{eq:heur_svec_avgrcu} and~\eqref{eq:heu-distr-gen}.

\begin{theorem}
\label{thm:complexity_bounds}
Under the same assumptions as in Corollary~\ref{cor:success_probability_bounds_relaxed}, the overall expected runtime $\WgenRCU$ of Algorithm~\ref{alg:generic_sr_decoder} to output at least one solution is bounded by
\begin{equation}
    \WgenRCULB \leq \WgenRCU \leq \WgenRCUUB,
\end{equation}
with
\begin{equation}\label{eq:WgenRCULB}
    \WgenRCULB \defeq \WiterErDec \left(\left( \frac{1}{|\errorSet_{q,\shotLength,m,\shots}(\errWeight)|} + q^{m(k-n)} \right) \sum_{\supwt=\errWeight}^{\supwt_{\max}} \tilde{B}_{q, m, \shotLength}(\errWeight, \supwt, \shots)\right)^{-1},
\end{equation}
and
\begin{equation}\label{eq:WgenRCUUB}
    \WgenRCUUB \defeq \WiterErDec \left(\frac{1}{|\errorSet_{q,\shotLength,m,\shots}(\errWeight)|}\sum_{\supwt=\errWeight}^{\smax}\tilde{B}_{q, m, \shotLength}(\errWeight, \supwt, \shots)\right)^{-1},
\end{equation}
where $\WiterErDec$ denotes the cost of one iteration of Algorithm~\ref{alg:generic_sr_decoder} and $\WiterErDec \in \oh{(n-k)^3 m^3}$ operations over $\Fq$ and we neglect the complexity of drawing from $\ps{\v}$.
\end{theorem}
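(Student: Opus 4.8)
The plan is to treat Algorithm~\ref{alg:generic_sr_decoder} as a Las Vegas procedure that repeatedly draws an independent super support from the fixed distribution $\ps{\v}$ and exits the \textbf{while}-loop as soon as the current guess yields a valid codeword of weight $\errWeight$. Since every pass through the loop samples its support independently from the same distribution and the code is fixed once chosen, the per-iteration event ``the algorithm outputs at least one solution'' is an independent Bernoulli trial with a common success probability $p \defeq \Pr[\text{success}]$. Hence the number of iterations $N$ until the first success is geometrically distributed with $\mathbb{E}[N] = 1/p$.

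First I would record that each iteration costs $\WiterErDec$ operations over $\Fq$ (the row/column erasure decoding step), neglecting the cost of sampling from $\ps{\v}$ as stated in the theorem. Because the cost per iteration is deterministic and the number of iterations is geometric, linearity of expectation gives the clean factorization
\begin{equation}
    \WgenRCU = \WiterErDec \cdot \mathbb{E}[N] = \frac{\WiterErDec}{p}.
\end{equation}
Next I would invoke Corollary~\ref{cor:success_probability_bounds_relaxed}, which under the marginal-independence assumption bounds $p$ from both sides in terms of the efficiently computable quantity $\tilde{B}_{q, m, \shotLength}(\errWeight, \supwt, \shots)$. The closing step is pure monotonicity: since $\WiterErDec/p$ is strictly decreasing in $p$, the \emph{lower} bound on $p$ yields the \emph{upper} bound $\WgenRCUUB$ on the runtime, and the \emph{upper} bound on $p$ yields the \emph{lower} bound $\WgenRCULB$. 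Substituting the two expressions from Corollary~\ref{cor:success_probability_bounds_relaxed} and taking reciprocals then reproduces exactly~\eqref{eq:WgenRCULB} and~\eqref{eq:WgenRCUUB}.

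The main obstacle I would flag is conceptual rather than computational: justifying that the per-iteration success probability is constant across iterations and that the trials are independent, so that the geometric expectation $1/p$ is legitimate. This rests on the supports being drawn i.i.d.\ and the code being fixed, so the averaging over the random-code ensemble in Corollary~\ref{cor:success_probability_bounds_relaxed} should be read as supplying the single per-iteration probability $p$ entering the geometric model. The only other point requiring care is preserving the correct inequality directions when passing from probability bounds to runtime bounds, since the reciprocal map inverts the ordering.
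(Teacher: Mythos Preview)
Your proposal is correct and follows essentially the same approach as the paper's proof: both argue that the expected runtime is the per-iteration cost $\WiterErDec$ times the reciprocal of the per-iteration success probability, and then plug in the two bounds on that probability from Corollary~\ref{cor:success_probability_bounds_relaxed}, with the inequality directions flipping under the reciprocal. Your explicit invocation of the geometric distribution and the independence discussion are slightly more detailed than the paper's terse ``the expected number of iterations until success is the reciprocal of the success probability,'' but the underlying argument is identical.
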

\begin{proof}
    For the lower bound on the complexity, we consider the worst-case scenario for the complexity of each iteration, denoted by $\WiterErDec$. The expected number of iterations until success is the reciprocal of the success probability. Using the upper bound on the success probability from Corollary~\ref{cor:success_probability_bounds_relaxed}, the lower bound on the overall expected runtime satisfies
    \begin{equation}
        \WgenRCULB = \WiterErDec \left(\left( \frac{1}{|\errorSet_{q,\shotLength,m,\shots}(\errWeight)|} + q^{m(k-n)} \right) \sum_{\supwt=\errWeight}^{\supwt_{\max}} \tilde{B}_{q, m, \shotLength}(\errWeight, \supwt, \shots)\right)^{-1}.
    \end{equation}

    For the upper bound on the complexity, we use the cost of one iteration $\WiterErDec$, which is $\oh{(n-k)^3 m^3}$ operations over $\Fq$ according to~\cite[Theorem 13 and Theorem 14]{puchingerGenericDecodingSumRank2022}. Using the heuristic probability distribution for the guessing super support as in~\eqref{eq:heur_svec_avgrcu}, we can neglect the complexity of drawing the rank profile of the guessing support. The expected number of iterations until success is the reciprocal of the success probability. Using the lower bound on the success probability from Corollary~\ref{cor:success_probability_bounds_relaxed}, the upper bound on the overall expected runtime satisfies
    \begin{equation}
        \WgenRCUUB = \WiterErDec \left(\frac{1}{|\errorSet_{q,\shotLength,m,\shots}(\errWeight)|}\sum_{\supwt=\errWeight}^{\smax}B_{q, m, \shotLength}(\errWeight, \supwtVec, \shots)\right)^{-1},
    \end{equation}
    which concludes the theorem.
\end{proof}

Note that the assumption made in Theorem~\ref{thm:complexity_bounds}, which neglects the complexity of drawing from $\ps{\v}$, is valid since, in our solution, we only need to permute the support-guessing rank profile $\hat{\supwtVec}$ uniformly at random.

\subsection{Numerical Results for the Generic Decoding Algorithm}\label{sec:generic-numerical-results}

In this section, we compare the complexity analysis of using a support-drawing distribution derived from the method described in Section~\ref{sec:new heuristic approach for generic decoding for rcu bound} for the average case against the worst-case bounds from~\cite{puchingerGenericDecodingSumRank2022}. We evaluate the average complexity over all error patterns for a specific sum-rank weight and plot the logarithmic complexity (base 2) versus the number of blocks $\shots$, while keeping the code parameters and field size $q^m$ constant. The length of each individual block $\eta$ is adjusted as $\shots$ varies.

Figure~\ref{fig:complexity_comparison_1} shows the complexity for generic decoding beyond the unique decoding radius with parameters $q=2$, $m=20$, $n=60$, $k=30$, $t=9$, and $v=10$ while in Figure~\ref{fig:complexity_comparison_2}, we increase $v$ to $\smax$.
We include the upper bound $\WgenRCUUB$~\eqref{eq:WgenRCUUB} and the lower bound $\WgenRCULB$~\eqref{eq:WgenRCULB} for the expected complexity of Algorithm~\ref{alg:generic_sr_decoder}. The lower bound reflects the effect of decoding beyond the unique decoding radius and accounts for alternative solutions.

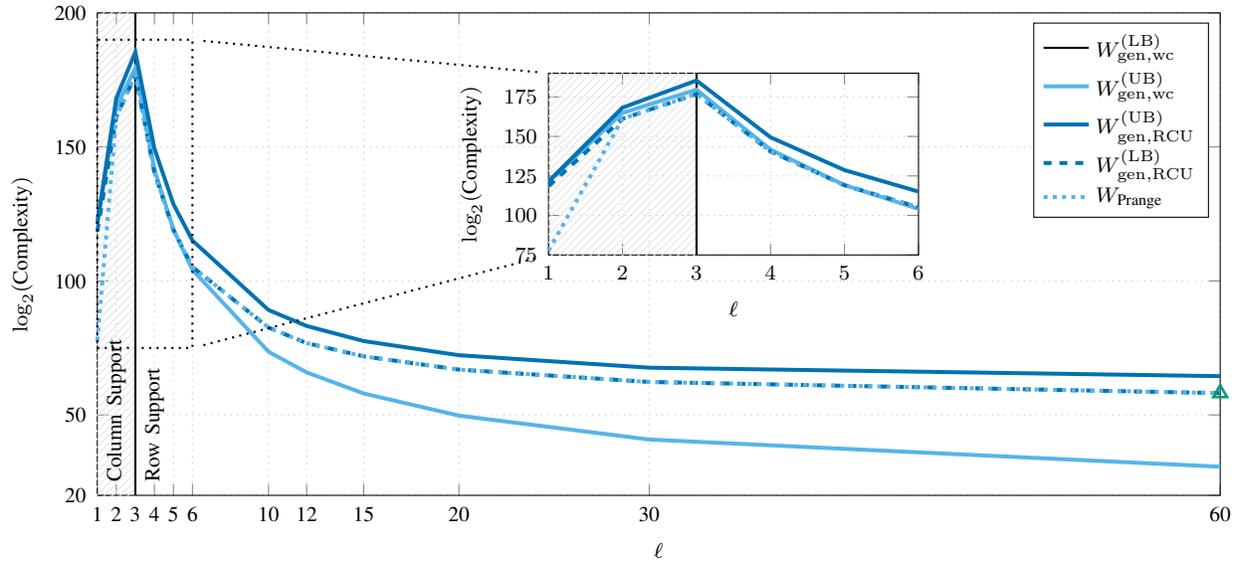
\begin{figure}[htbp]
  \centering
        \begin{tikzpicture}
        \begin{axis}[
            width=\linewidth, %
            height=8cm,
            grid=major, 
            grid style={dotted,gray!50},
            xlabel={$\shots$}, %
            ylabel={$\log_2(\text{Complexity})$},
            xlabel style={font=\scriptsize}, %
            ylabel style={font=\scriptsize}, %
            xmin=1,
            xmax=60, %
            ymin=20,
            ymax=200,
            xtick={1,2,3,4,5,6,10,12,15,20,30,60}, %
            xticklabels={1,2,3,4,5,6,10,12,15,20,30,60}, %
            ytick={20,50,100,150,200}, %
            yticklabels={20,50,100,150,200}, %
            xticklabel style={font=\scriptsize}, %
            yticklabel style={font=\scriptsize}, %
            legend style={
                at={(0.99,0.98)}, 
                anchor=north east, 
                legend columns=1, 
                font=\scriptsize
            }, %
            legend cell align=left,
            clip=false, %
        ]

            \fill [pattern=north east lines, pattern color=gray!20] 
                (axis cs:1,20) rectangle (axis cs:3,200);

            \addplot [black, thick] coordinates {(3,20) (3,200)};
            
            \addplot[WgenLB] 
                table[x=L, y=WF, col sep=comma] {data/results/W_QwcLB_gen_N60_K30_q2_m20_t9_s10.txt};
            \addlegendentry{$\WgenLB$}

            \addplot[WgenUB]
                table[x=L, y=WF, col sep=comma] {data/results/W_QwcUB_gen_N60_K30_q2_m20_t9_s10.txt};
            \addlegendentry{$\WgenUB$}    

            \addplot[WgenRCUUB]
                table[x=L, y=WF, col sep=comma] {data/results/W_heuvecavg_gen_N60_K30_q2_m20_t9_s10.txt};
            \addlegendentry{$\WgenRCUUB$}

            \addplot[WgenRCULB]
                table[x=L, y=WF, col sep=comma] {data/results/W_rcveclb_gen_N60_K30_q2_m20_t9_s10.txt};
            \addlegendentry{$\WgenRCULB$}

          \addplot[Wprange]
              coordinates {(60, 58.1477619214834)};
          \addlegendentry{$W_\text{Prange}$}

            \node (tr) at (axis cs:6,190) {}; %
            \node (br) at (axis cs:6,75) {};  %

            \draw[black, thick, dotted] (axis cs:1,75) rectangle (axis cs:6,190);

            \node[rotate=90, anchor=west, font=\scriptsize] at (axis cs:2,21) {Column Support};
            \node[rotate=90, anchor=west, font=\scriptsize] at (axis cs:4,21) {Row Support};

        \end{axis}
        
        \begin{scope}[shift={(6cm,3.2cm)}] %
            \begin{axis}[
                width=6.5cm,
                height=4cm,
                grid=major,
                grid style={dotted,gray!50},
                xlabel={$\shots$},
                ylabel={$\log_2(\text{Complexity})$},
                xlabel style={font=\scriptsize},
                ylabel style={font=\scriptsize},
                xmin=1, xmax=6,
                ymin=75, ymax=190,
                xtick={1,2,3,4,5,6},
                ytick={75,100,125,150,175,200},
                xticklabel style={font=\scriptsize},
                yticklabel style={font=\scriptsize},
                legend style={draw=none}, %
                legend cell align=left,
                clip=true, %
            ]

                \fill [pattern=north east lines, pattern color=gray!20] 
                    (axis cs:1,75) rectangle (axis cs:3,190);

                \addplot [black, thick] coordinates {(3,75) (3,190)};

                \addplot[WgenLB] 
                    table[x=L, y=WF, col sep=comma] {data/results/W_QwcLB_gen_N60_K30_q2_m20_t9_s10.txt};
                
                \addplot[WgenUB]
                    table[x=L, y=WF, col sep=comma] {data/results/W_QwcUB_gen_N60_K30_q2_m20_t9_s10.txt};
                
                \addplot[WgenRCUUB]
                    table[x=L, y=WF, col sep=comma] {data/results/W_heuvecavg_gen_N60_K30_q2_m20_t9_s10.txt};
                
                \addplot[WgenRCULB]
                    table[x=L, y=WF, col sep=comma] {data/results/W_rcveclb_gen_N60_K30_q2_m20_t9_s10.txt};

                \node (inset_tr) at (axis cs:1,190) {}; %
                \node (inset_br) at (axis cs:1,75) {};  %

            \end{axis}
        \end{scope}

        \draw[dotted, thick] 
            (tr) -- (inset_tr); %

        \draw[dotted, thick] 
            (br) -- (inset_br); %

    \end{tikzpicture}
    \vspace{-3em} %
    \caption{Complexity comparison for generic decoding with parameters: $q=2$, $m=20$, $n=60$, $k=30$, $t=9$, and $v=10$.}
    \label{fig:complexity_comparison_1}
\end{figure}

\begin{figure}[htbp]
  \centering
        \begin{tikzpicture}
      \begin{axis}[
          width=\linewidth, %
          height=8cm,
          grid=major, 
          grid style={dotted,gray!50},
          xlabel={$\shots$}, %
          ylabel={$\log_2(\text{Complexity})$},
          xlabel style={font=\scriptsize}, %
          ylabel style={font=\scriptsize}, %
          xmin=1,
          xmax=60,
          ymin=20,
          ymax=130,
          xtick={1,2,3,4,5,6,10,12,15,20,30,60}, %
          xticklabels={1,2,3,4,5,6,10,12,15,20,30,60}, %
          xticklabel style={font=\scriptsize}, %
          yticklabel style={font=\scriptsize}, %
          legend style={
              at={(0.99,0.98)}, 
              anchor=north east, 
              legend columns=1, 
              font=\scriptsize
          }, %
          legend cell align=left,
          clip=false, %
      ]

          \fill [pattern=north east lines, pattern color=gray!20] 
              (axis cs:1,20) rectangle (axis cs:3,130);

          \addplot[WgenLB] 
              table[x=L, y=WF, col sep=comma] {data/generic_results/W_QwcLB_gen_N60_K30_q2_m20_t9_s30.txt};
          \addlegendentry{$\WgenLB$}
          
          \addplot[WgenUB]
              table[x=L, y=WF, col sep=comma] {data/generic_results/W_QwcUB_gen_N60_K30_q2_m20_t9_s30.txt};
          \addlegendentry{$\WgenUB$}
          
          \addplot[WgenRCUUB]
              table[x=L, y=WF, col sep=comma] {data/generic_results/W_heuvecavg_gen_N60_K30_q2_m20_t9_s30.txt};
          \addlegendentry{$\WgenRCUUB$}
              
          \addplot[WgenRCULB]
              table[x=L, y=WF, col sep=comma] {data/generic_results/W_rcveclb_gen_N60_K30_q2_m20_t9_s30.txt};
          \addlegendentry{$\WgenRCULB$}
          
          \addplot[Wprange]
              coordinates {(60, 37.6994570379119)};
          \addlegendentry{$W_\text{Prange}$}

          \addplot [black, thick] coordinates {(3,20) (3,130)};

          \node[rotate=90, anchor=west, font=\scriptsize] at (axis cs:2,21) {Column Support};
          \node[rotate=90, anchor=west, font=\scriptsize] at (axis cs:4,21) {Row Support};

          \node (tr) at (axis cs:6,130) {}; %
          \node (br) at (axis cs:6,40) {};  %

          \draw[black, thick, dotted] (axis cs:1,40) rectangle (axis cs:6,130);

      \end{axis}    

      \begin{axis}[
          width=\linewidth, %
          height=8cm,
          grid=none, %
          axis y line=none, %
          axis x line=top, %
          xmin=1,
          xmax=60,
          ymin=20, %
          ymax=130,
          xtick={1,2,3,4,5,6,10,12,15,20,30,60}, %
          xticklabels={10,20,30,30,30,30,30,30,30,30,30,30}, %
          xticklabel style={
              font=\tiny, %
              /pgf/number format/fixed,
              /pgf/number format/precision=0,
              anchor=south,
              rotate=45, %
          }, %
          xlabel={{$v$}}, %
          xlabel style={
              font=\scriptsize,
              at={(axis description cs:0.5,1.05)}, %
              anchor=south,
          },
          axis line style={-}, %
      ]

      \end{axis}
      
      \begin{scope}[shift={(4.5cm,3.4cm)}] %
          \begin{axis}[
              width=6.5cm,
              height=4cm,
              grid=major,
              grid style={dotted,gray!50},
              xlabel={$\shots$},
              ylabel={$\log_2(\text{Complexity})$},
              xlabel style={font=\scriptsize},
              ylabel style={font=\scriptsize},
              xmin=1, xmax=6,
              ymin=40, ymax=130, %
              xtick={1,2,3,4,5,6},
              ytick={40,60,80,100,120},
              xticklabel style={font=\scriptsize},
              yticklabel style={font=\scriptsize},
              legend style={draw=none}, %
              legend cell align=left,
              clip=true, %
          ]

              \fill [pattern=north east lines, pattern color=gray!20] 
                  (axis cs:1,40) rectangle (axis cs:3,130);

              \addplot [black, thick] coordinates {(3,40) (3,130)};

              \addplot[WgenLB] 
                  table[x=L, y=WF, col sep=comma] {data/generic_results/W_QwcLB_gen_N60_K30_q2_m20_t9_s30.txt};
              
              \addplot[WgenUB]
                  table[x=L, y=WF, col sep=comma] {data/generic_results/W_QwcUB_gen_N60_K30_q2_m20_t9_s30.txt};
              
              \addplot[WgenRCUUB]
                  table[x=L, y=WF, col sep=comma] {data/generic_results/W_heuvecavg_gen_N60_K30_q2_m20_t9_s30.txt};
              
              \addplot[WgenRCULB]
                  table[x=L, y=WF, col sep=comma] {data/generic_results/W_rcveclb_gen_N60_K30_q2_m20_t9_s30.txt};

              \node (inset_tr) at (axis cs:1,130) {}; %
              \node (inset_br) at (axis cs:1,40) {};  %

          \end{axis}
      \end{scope}

      \draw[dotted, thick] 
          (tr) -- (inset_tr); %

      \draw[dotted, thick] 
          (br) -- (inset_br); %

    \end{tikzpicture}
    \vspace{-3em} %
    \caption{Complexity comparison for generic decoding with parameters: $q=2$, $m=20$, $n=60$, $k=30$, $t=9$ and $v=\smax$.}
    \label{fig:complexity_comparison_2}
\end{figure}

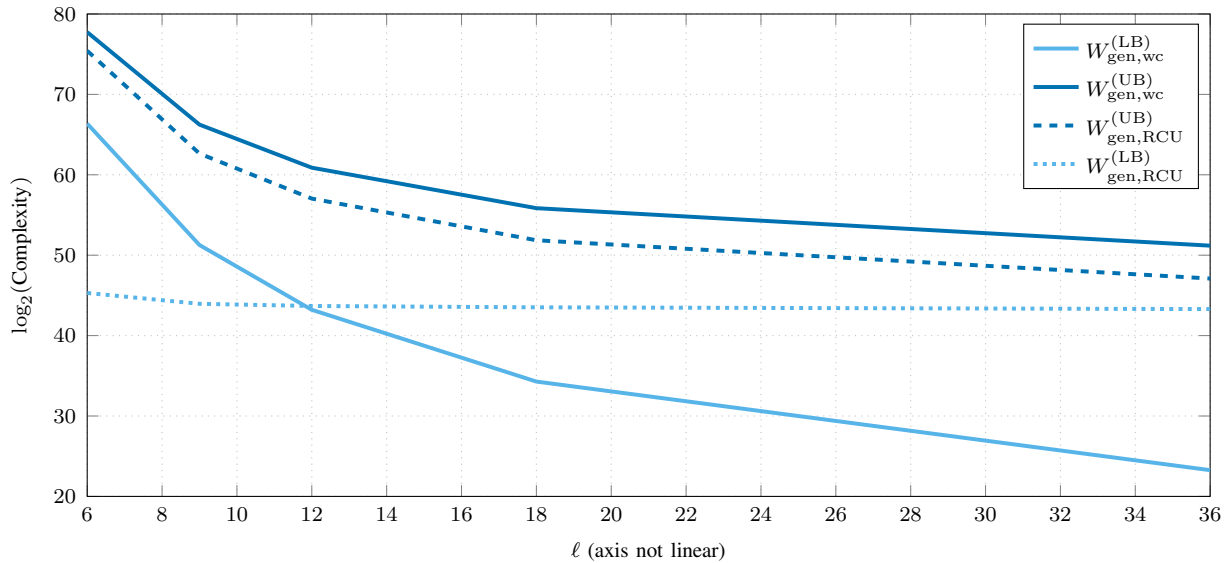
\begin{figure}[htbp]
  \centering
    \begin{tikzpicture}
  \begin{axis}[
      width=\linewidth, %
      height=8cm,
      grid=major, 
      grid style={dotted,gray!50},
      xlabel={$\shots$ (axis not linear)}, %
      ylabel={$\log_2(\text{Complexity})$},
      xlabel style={font=\scriptsize}, %
      ylabel style={font=\scriptsize}, %
      xmin=6,
      xmax=36,
      ymax=80,
      ymin=20,
      xticklabel style={font=\scriptsize}, %
      yticklabel style={font=\scriptsize}, %
      legend style={at={(0.99,0.98)}, anchor=north east, %
                    legend columns=1, %
                    font=\scriptsize}, %
      legend cell align=left,
    ]

    \addplot[WgenLB] 
    table[x=L, y=WF, col sep=comma] {data/generic_results/W_QwcLB_gen_N36_K22_q2_m6_t10_s10.txt};
    \addlegendentry{$\WgenLB$}
    
    \addplot[WgenUB]
    table[x=L, y=WF, col sep=comma] {data/generic_results/W_QwcUB_gen_N36_K22_q2_m6_t10_s10.txt};
    \addlegendentry{$\WgenUB$}
    
    \addplot[WgenRCUUB]
    table[x=L, y=WF, col sep=comma] {data/generic_results/W_heuvecavg_gen_N36_K22_q2_m6_t10_s10.txt};
    \addlegendentry{$\WgenRCUUB$}
        
    \addplot[WgenRCULB]
    table[x=L, y=WF, col sep=comma] {data/generic_results/W_rcveclb_gen_N36_K22_q2_m6_t10_s10.txt};
    \addlegendentry{$\WgenRCULB$}

  \end{axis}    
\end{tikzpicture}
    \vspace{-3em} %
    \caption{Complexity comparison for generic decoding with parameters: $q=2$, $m=6$, $n=36$, $k=22$, $t=10$ and $v=10$.}
    \label{fig:complexity_comparison_3}
\end{figure}

The figures show that the effect of alternative solutions, indicated by the divergence between the upper and lower bounds, becomes more prominent near the rank metric (i.e., when $\shots$ approaches 1) for the chosen parameters. However, this behavior can vary depending on the parameters, as seen in Figure~\ref{fig:complexity_comparison_3}, which uses the parameters $q=2$, $m=6$, $n=36$, $k=22$, $t=10$, and $v=10$. In this case, a significant difference between the lower and upper bounds persists even when the number of blocks does not correspond to the rank metric.

With the increase of $v$ to $\smax$ in Figure~\ref{fig:complexity_comparison_2}, the upper and lower bounds for the worst-case scenario become even looser. The upper bound, in particular, deviates significantly from the complexity of the Prange algorithm, denoted as $W_\text{Prange}$, and given by~\cite{prangeUseInformationSets1962}
\begin{equation}
    W_\text{Prange} = \WiterGen \frac{\binom{n}{\errWeight}}{\binom{v}{\errWeight}},
\end{equation}
especially at $\shots=60$, which corresponds to the special case of the Hamming metric. In this case, not only does the complexity of our solution coincide with the Prange algorithm, but the algorithm itself is exactly the Prange algorithm. 

Additionally, in Figure~\ref{fig:complexity_comparison_2}, for $\shots < 12$, the complexity of our solution falls below the lower bound of the worst-case scenario. This demonstrates that worst-case bounds may not always provide accurate estimates for real-world scenarios. For instance, when selecting parameters for cryptosystems based on sum-rank metric codes, relying solely on worst-case bounds may lead to underestimating the actual complexity of practical attacks in certain regimes. The lower bound $\WgenRCULB$ provides a closer approximation to the actual complexity in practice than the worst-case bounds from~\cite{puchingerGenericDecodingSumRank2022}. 

We performed extensive computations using \ac{LP} to account for dependencies between the blocks, as described in Section~\ref{sec:generic-lp-solution-inefficient}, for the parameters used in Figures~\ref{fig:complexity_comparison_1} and~\ref{fig:complexity_comparison_2}. For these parameters this approach is feasible for $\shots$ up to $10$ and we obtained the exact same support-guessing distribution corresponding to the rank profiles as our efficient solution. This is particularly interesting, as our efficient method is expected to yield tighter results for larger block sizes, i.e., as $\shots \to n$. Thus, our findings indicate that for these parameters, the efficient approach remains effective even at lower block counts.

Table~\ref{tab:bounds_conditions} summarizes the individual bounds and their applicable decoding scenarios.

\begin{table}[htbp]
\caption{Overview of the bounds and their applicable scenarios.}
\label{tab:bounds_conditions}
\centering
\renewcommand{\arraystretch}{1.4}
\begin{tabular}{M{0.8cm} R{1cm} c R{9cm}}
 \toprule
 \multicolumn{2}{c}{\textbf{Bound}} & \textbf{Applies to} & \textbf{Conditions} \\
 \midrule
 \multirow[c]{2}{*}{\rotatebox[origin=c]{90}{Worst case}} & $\WgenLB$ & Problem~\ref{prob:uniqueDecoding} & Exactly one solution exists for the worst-case rank profile channel \\[0pt]
 \cmidrule{2-4}
 & $\WgenUB$ & Problem~\ref{prob:SRSDecProblemUniformSyndrome} & At least one solution exists for the worst-case rank profile channel \\[0pt]
 \midrule
 \multirow[c]{2}{*}{\rotatebox[origin=c]{90}{Average case}} & $\WgenRCUUB$ & Problem~\ref{prob:SRSDecProblemUniformSyndrome} & At least one solution exists; exact when exactly one solution exists \\[0pt]
 \cmidrule{2-4}
 & $\WgenRCULB$ & Problem~\ref{prob:SRSDecProblemUniformSyndrome} & At least one solution exists; accounts for alternative solutions \\[0pt]
 \bottomrule
\end{tabular}
\end{table}
\FloatBarrier

\section{Generic Decoding for Large Error Weights}\label{sec:generic-decoding-prange}

In the previous section, we focused on decoding for low error weights, specifically when the error weight $\errWeight$ satisfies $\errWeight \leq \dmin - 1$, as erasure decoding is not possible beyond this threshold, according to~\cite[Theorem 13 and Theorem 14]{puchingerGenericDecodingSumRank2022}.
We explored unique decoding for Problem~\ref{prob:uniqueDecoding} up to $\errWeight \leq \dmin - 1$ and going beyond unique decoding is possible for $\errWeight \leq \min\{n-k, \frac{m}{\shotLength}(n-k)\}$ (see~\cite{puchingerGenericDecodingSumRank2022}).

In this section, we introduce a generic decoding algorithm (Algorithm~\ref{alg:prangeSR}) that aims to solve Problem~\ref{prob:SRSDecProblemUniformSyndrome}. The proposed algorithm is inspired by the Prange-like algorithm for the Hamming metric, as presented in~\cite{debris-alazardWaveNewFamily2019}. Our analysis of the algorithm focuses on the asymptotic case $\shots\to\infty$ and its average performance. The following proposition establishes the range of relative weights for which solutions to Problem~\ref{prob:SRSDecProblemUniformSyndrome} can be found efficiently using Algorithm~\ref{alg:prangeSR}.

\begin{algorithm}
\caption{PrangeSumRank}\label{alg:prangeSR}
\SetKwInOut{Input}{Input}\SetKwInOut{Output}{Output}
\Input{$\H\in\Fqm^{\shotLength(\shots-\kappa)\times \shotLength\shots)}$, $\s\in\Fqm^{\shotLength(\shots-\kappa)}$ and $w\in\NN$}
\Output{$\e\H^\top = \s$ and $\SumRankWeight(\e)=w$}
$\maxShotWeight \gets \min\{m,\eta\}$ \;
$\e \gets \bm{0} \in \Fqm^{\shotLength\shots}$ \;
\While{$\SumRankWeight^{(\n)}(\e)\neq \errWeight$}{
    $\H' \gets \bm{0} \in \Fqm^{\shotLength(\shots-\kappa)\times \shotLength\shots}$ \;
    \While{$\rkqm(\H_{[1:\shotLength(\shots-\kappa)]}') \neq \shotLength(\shots-\kappa)$}{
        $\P \sample$ Any $\shots \times \shots$ permutation matrix \; \label{alg:prangeSR:lineA}
        $\P' \gets \P \otimes \I_{\shotLength}$ \;
        $\H' \gets \H \P'$ \; \label{alg:prangeSR:lineB}
    }
    $\A \gets \H_{[1:\shotLength(\shots-\kappa)]}'$ \;
    $\B \gets \H_{[\shotLength(\shots-\kappa)+1:\shotLength\shots]}'$ \;
    $\errWeight_1 \sample \{0,\ldots,\kappa\maxShotWeight\}$ \;
    $\e' \sample \{\x \in \Fqm^{\kappa\shots} \st \SumRankWeight^{[n_{\kappa+1},\ldots,n_\shots]}(\e') = \errWeight_1 \}$ \;
    $\e \gets ((\s-\e'\B)\A^{-\top}, \e') (\P')^\top$ \; \label{alg:prangeSR:lastLine}
}
\Return $\e$
\end{algorithm}

\begin{proposition}
Consider a $\Fqm$-linear sum-rank-metric code of length $n=\shotLength\shots$, dimension $k=\shotLength\kappa$ with parity check matrix $\H\in\Fqm^{(n-k)\times n}$ and $R\defeq\frac{k}{n}=\frac{\kappa}{\shots}$ where $\kappa\in\NN$ and $0 \leq \kappa \leq \shots$. Let the sum-rank weight be defined with respect to the length partition of constant block length, i.e., $\n = [n_1,\ldots,n_\shots] = [\shotLength, \ldots, \shotLength]$. Define
\begin{equation}
    \bar{a} \defeq \frac{\sum_{i=0}^{\maxShotWeight} i \cdot \NMq{q}{m, \shotLength, i}}{q^{m\shotLength}},
\end{equation}
as the average rank weight of a single block if drawn uniformly at random. Then, for the relative weight $\wRel \defeq \errWeight/n$ in the interval $[\wEasym,\wEasyp]$, where
\begin{align}
    \wEasym &\defeq \frac{1-R}{\shotLength} \cdot \bar{a}, \\
    \wEasyp &\defeq \frac{1-R}{\shotLength} \cdot \bar{a} + \frac{R\maxShotWeight}{\shotLength},
\end{align}
a solution to Problem~\ref{prob:SRSDecProblemUniformSyndrome} can be found in probabilistic polynomial time using the Prange-like Algorithm~\ref{alg:prangeSR}.
\end{proposition}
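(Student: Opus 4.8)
The plan is to prove three claims: every vector returned by Algorithm~\ref{alg:prangeSR} satisfies the syndrome constraint of Problem~\ref{prob:SRSDecProblemUniformSyndrome}; its sum-rank weight splits into a freely tunable summand and a summand that concentrates on $(\shots-\kappa)\bar{a}$ as $\shots\to\infty$; and for every target weight with $\wRel\in[\wEasym,\wEasyp]$ a single iteration succeeds with inverse-polynomial probability, so the whole procedure runs in probabilistic polynomial time. For correctness I would first observe that the inner while loop keeps redrawing the block permutation $\P'=\P\otimes\I_{\shotLength}$ until $\A\defeq\H'_{[1:\shotLength(\shots-\kappa)]}$, the leading block-columns of $\H'=\H\P'$, is invertible, so that $\A^{-\top}$ in Line~\ref{alg:prangeSR:lastLine} exists. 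Writing the permuted error as $\tilde{\e}=[\u\mid\e']$ with $\u\defeq(\s-\e'\B^\top)\A^{-\top}$, the construction gives $\tilde{\e}(\H')^\top=\u\A^\top+\e'\B^\top=\s$; since $\e=\tilde{\e}(\P')^\top$ and $(\H\P')^\top=(\P')^\top\H^\top$, this yields $\e\H^\top=\tilde{\e}(\H')^\top=\s$, so the outer loop only needs to gate on the weight condition.

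Next I would decompose the weight. Because $\P'$ permutes whole blocks of the common length $\shotLength$ and the sum-rank weight is additive over blocks, $\SumRankWeight(\e)=\SumRankWeight(\tilde{\e})=\SumRankWeight(\u)+w_1$, where $w_1$ is the weight imposed on $\e'$ and $\u$ occupies the remaining $\shots-\kappa$ blocks. The key step is to treat $\u$ as an approximately uniform vector of $\Fqm^{\shotLength(\shots-\kappa)}$, so that its $\shots-\kappa$ blocks are approximately i.i.d.\ uniform on $\Fqm^{\shotLength}$. Each block then contributes a bounded rank weight $\rkq(\cdot)\in\{0,\dots,\maxShotWeight\}$ whose mean is exactly $\bar{a}$, by the very definition of $\bar{a}$ as the expected rank of a uniform $m\times\shotLength$ matrix over $\Fq$. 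The weak law of large numbers gives $\SumRankWeight(\u)=(\shots-\kappa)\bar{a}+O_P(\sqrt{\shots})$, and dividing by $n=\shotLength\shots$ with $\shots-\kappa=\shots(1-R)$ shows that the determined part contributes relative weight $\tfrac{(1-R)\bar{a}}{\shotLength}=\wEasym$ in the limit.

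It remains to pin down the reachable range and the runtime. We have $\wRel=\tfrac{w_1}{\shotLength\shots}+\wEasym+o(1)$, and as $w_1$ runs through $\{0,\dots,\kappa\maxShotWeight\}$ the first term sweeps $[0,\tfrac{R\maxShotWeight}{\shotLength}]$, so the reachable relative weights fill exactly $[\wEasym,\wEasyp]$. For a prescribed $\errWeight$ in this window there is a $w_1^\star\in\{0,\dots,\kappa\maxShotWeight\}$ with $\errWeight-w_1^\star$ within $O(\sqrt{\shots})$ of the mean $(\shots-\kappa)\bar{a}$; the uniform draw of $w_1$ hits $w_1^\star$ with probability $1/(\kappa\maxShotWeight+1)=\Omega(1/\mathrm{poly})$, and conditioned on that, a local-limit estimate for the sum of $\shots-\kappa$ i.i.d.\ block ranks (which are lattice-valued with full integer support, hence aperiodic) gives $\ProbOf{\SumRankWeight(\u)=\errWeight-w_1^\star}=\Omega(1/\sqrt{\shots})$. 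Thus each iteration succeeds with probability $\Omega(1/\mathrm{poly}(\shots))$, the expected iteration count is polynomial, and since each iteration costs only Gaussian elimination (for the rank test and $\A^{-\top}$) plus block sampling, the total expected runtime is polynomial.

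The delicate point, and the main obstacle, is justifying the near-uniformity of $\u$. Because $\A$ is fixed and $\e'$ is constrained to sum-rank weight $w_1$ rather than being fully uniform, $\u$ is not exactly uniform, so I would either invoke the standard information-set-decoding heuristic—that for a random-like $\H$ the affine map $\e'\mapsto\u$ randomizes the per-block ranks—or argue directly that the empirical block-rank distribution of $\u$ still converges to that of a uniform vector as $\shots\to\infty$. Tightly bound to this is the anti-concentration step needed to pass from ``$\SumRankWeight(\u)$ concentrates near its mean'' to ``it equals the exact integer $\errWeight-w_1^\star$ with inverse-polynomial probability,'' which relies on a local limit theorem for the lattice-valued block ranks; this is where the remaining technical care is concentrated.
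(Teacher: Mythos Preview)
Your approach is essentially the same as the paper's: both split the permuted error into a freely chosen part $\e'$ of controllable weight $w_1$ and a determined part $\u=(\s-\e'\B^\top)\A^{-\top}$ whose blocks are treated (heuristically) as uniform on $\Fqm^\shotLength$, so that $\SumRankWeight(\u)$ concentrates on $(\shots-\kappa)\bar{a}$; the reachable relative weights then fill exactly $[\wEasym,\wEasyp]$. Your write-up is actually more careful than the paper's own proof, which only computes the expected weight and asserts that ``any weight in the interval \ldots{} can be reached probabilistically in polynomial time using a distribution for $\e'$ \ldots{} which is sufficiently concentrated around its expectation,'' without the explicit syndrome verification, the law-of-large-numbers step, or the local-limit-theorem argument you sketch; you also flag the uniformity heuristic for $\u$ explicitly, whereas the paper simply assumes it.
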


\begin{proof}
To solve Problem~\ref{prob:SRSDecProblemUniformSyndrome}, we want to find, for a given syndrome $\s$, an error $\e$ of sum-rank weight $\errWeight$ such that $\e\H^\top = \s$. The matrix $\H$ is a full-rank matrix and therefore contains an invertible submatrix $\A\in\Fqm^{(n-k)\times(n-k)}$. Without loss of generality, assume that this matrix is formed by the first $n-k=\eta(\shots-\kappa)$ positions, i.e. we have
\begin{equation}\label{eq:syndromeEquationPrange}
    \H = \left[\A \mid \B \right].
\end{equation}
Assume $\e = [\e'', \e'] \in \Fqm^{\shotLength\shots}$ with $\e'' \in \Fqm^{\shotLength(\shots-\kappa)}$ and $\e' \in \Fqm^{\shotLength\kappa}$. Then, by \eqref{eq:syndromeEquationPrange}, we have
\begin{equation}
    \e'' = (\s - \e'\B^\top){(\A^{-1})}^\top.
\end{equation}
The idea is to arbitrarily choose $\e'$ of length $k = \shotLength\kappa$. Then, on average, the expected (partial) sum-rank weight of the remaining $\shots-\kappa$ blocks is
\begin{equation}
    \mathrm{E}\left[\SumRankWeight^{([n_{\kappa+1},\ldots,n_\shots])}(\e'')\right] = \bar{a} \cdot (\shots - \kappa).
\end{equation}
The average probability of the sum-rank weight of $\e$ is then
\begin{equation}
    \mathrm{E}\left[\SumRankWeight^{(\n)}(\e)\right] = \underbrace{\mathrm{E}\left[\SumRankWeight^{([n_{1},\ldots,n_\kappa])}(\e')\right]}_{\eqdef \bar{\errWeight}_1} + \mathrm{E}\left[\SumRankWeight^{([n_{\kappa+1},\ldots,n_\shots])}(\e'')\right] = \bar{\errWeight}_1 + \bar{a} \cdot (\shots - \kappa),
\end{equation}
where $\bar{\errWeight}_1$ is determined by the distribution of $\e'$, which we can choose freely. Nonetheless, we have $0 \leq \bar{\errWeight}_1 \leq \maxShotWeight\kappa$, and therefore
\begin{equation}
    \underbrace{\bar{a} \cdot (\shots - \kappa)}_{= n\wEasym} \leq \mathrm{E}\left[\SumRankWeight^{(\n)}(\e)\right] \leq \underbrace{\maxShotWeight\kappa + \bar{a} \cdot (\shots - \kappa)}_{= n\wEasyp}.
\end{equation}
From this, we deduce that any weight in the interval $\errWeight \in [\wEasym n, \wEasyp n]$ can be reached probabilistically in polynomial time using a distribution for $\e'$ with $\bar{w}_1 = \errWeight - \wEasym n$ such that $\mathrm{E}\left[\SumRankWeight^{(\n)}(\e)\right] = \errWeight$ and which is sufficiently concentrated around its expectation.
Algorithm~\ref{alg:prangeSR} implements this approach, where in Line~\ref{alg:prangeSR:lineA} to Line~\ref{alg:prangeSR:lineB}, the parity-check matrix $\H$ is permuted block-wise among the $\shots$ blocks, i.e., the permutation is applied to the block indices but not within the blocks. Here, $\otimes$ denotes the Kronecker product, which is used to construct the block-wise permutation matrix. This permutation is reversed in Line~\ref{alg:prangeSR:lastLine}.
\end{proof}

The proposition above provides the interval $[\wEasym,\wEasyp]$ for which a solution to Problem~\ref{prob:SRSDecProblemUniformSyndrome} can be found in probabilistic polynomial time using Algorithm~\ref{alg:prangeSR}. Combining this with the Gilbert-Varshamov bound for the sum-rank metric, we have the following summary of the relative weight intervals:
\begin{itemize}
    \item $\wRel \in [\wGVm,\wGVp]$: A solution to Problem~\ref{prob:SRSDecProblemUniformSyndrome} is likely to exist (Gilbert-Varshamov bound for the sum-rank metric, see~\cite{byrneFundamentalPropertiesSumRankMetric2021})
    \item $\wRel \in [\wEasym,\wEasyp]$: A solution to Problem~\ref{prob:SRSDecProblemUniformSyndrome} can be found in probabilistic polynomial time using a Prange-like algorithm, as stated in the proposition above.
\end{itemize}

Figure~\ref{fig:gv_plot_1} and Figure~\ref{fig:gv_plot_2} show the regions of hardness for finding a solution using Algorithm \ref{alg:prangeSR} and the bounds on the relative weight intervals for successful decoding plotted against the code rate $R$ for different parameters. These results apply asymptotically ($\shots \to \infty$, for fixed $m$ and $\maxShotWeight$) and on average. The "no solution", "hard", and "easy" regions indicate the difficulty of finding a solution for different code rates.

In Figure~\ref{fig:gv_plot_1}, the parameters are set to $m = \shotLength = 2$, $q = 2$, while in Figure~\ref{fig:gv_plot_2}, the parameters are $m = \shotLength = 6$, $q = 2$. Comparing the two figures, we observe that the "hard" region for large relative weights becomes smaller as the values of $m$ and $\shotLength$ increase. This indicates that it is easier for Algorithm \ref{alg:prangeSR} to decode errors of large relative weight when $m$ and $\shotLength$ are larger.

\begin{figure}[ht!]
  \begin{center}
    \begin{tikzpicture}
    \begin{axis}[
        width=\linewidth, %
        height=8cm,
        grid=major, 
        grid style={dotted,gray!80},
        xlabel={$R$}, %
        ylabel={$\wRel$},
        xmax=1,
        xmin=0,
        ymax=1,
        ymin=0,
        tick label style={font=\scriptsize},
        legend style={font=\scriptsize, at={(0.98,0.98)}, anchor=north east}, 
        label style={inner sep=0, font=\small},
    ]
        \addplot[blueline, name path=GV] table[x=R,y=w,col sep=comma] {./data/prange_GV\string_SR\string_q2\string_m2\string_eta2.txt}; 
        \addplot[magentaline, name path=wM] table[x=R,y=w,col sep=comma] {./data/prange_wM\string_SR\string_q2\string_m2\string_eta2.txt}; 
        \addplot[redline, name path=wP] table[x=R,y=w,col sep=comma] {./data/prange_wP\string_SR\string_q2\string_m2\string_eta2.txt}; 
        
        \legend{$\wGVm$, $\wEasym$, $\wEasyp$}    
        
        \addplot[hardfill] fill between[of=GV and wM];
        \addplot[easyfill] fill between[of=wM and wP];
        
        \path[name path=axis-top] (axis cs:0,1) -- (axis cs:1,1);
        \addplot[hardfill] fill between[of=wP and axis-top];
        
        \path[name path=axis-bottom] (axis cs:0,0) -- (axis cs:1,0);
        \addplot[nosolutionfill] fill between[of=GV and axis-bottom];

        \node[regiontext] at (0.17,0.06) {no solution};
        \node[regiontext] at (0.2,0.35) {hard};
        \node[regiontext] at (0.5,0.55) {easy};
        \node[regiontext] at (0.2,0.9) {hard};
        
    \end{axis}
\end{tikzpicture}
    \caption{Regions of hardness for Algorithm \ref{alg:prangeSR} and bounds on the relative weight intervals for successful decoding vs code rate $R=k/n$ for parameters: $m = \shotLength = 2$, $q = 2$ ($\shots \to \infty$, average-case).}\label{fig:gv_plot_1}
  \end{center}
\end{figure}
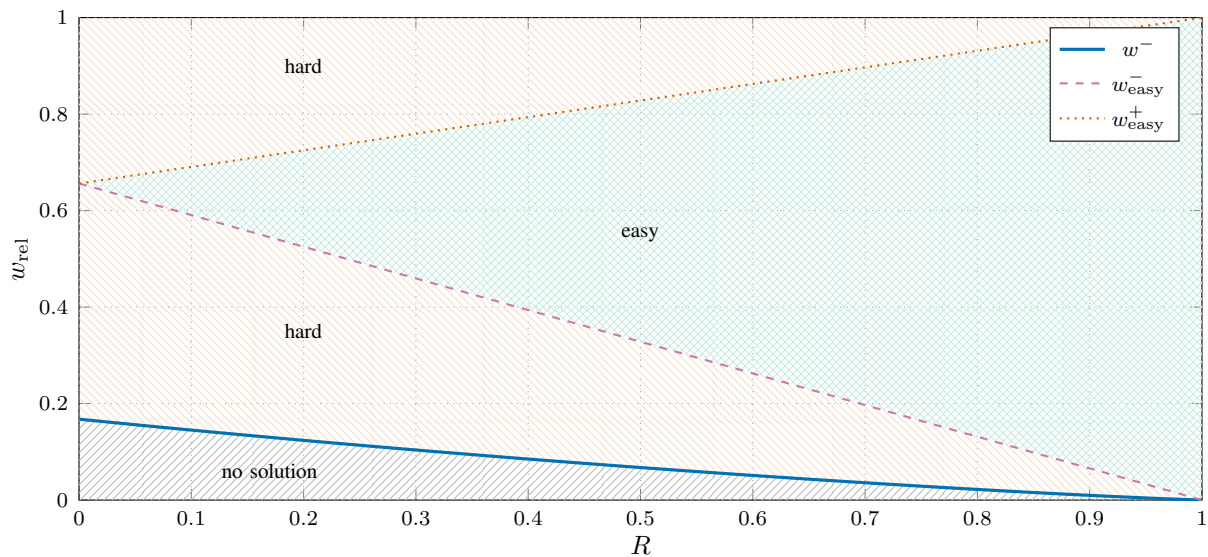
\FloatBarrier

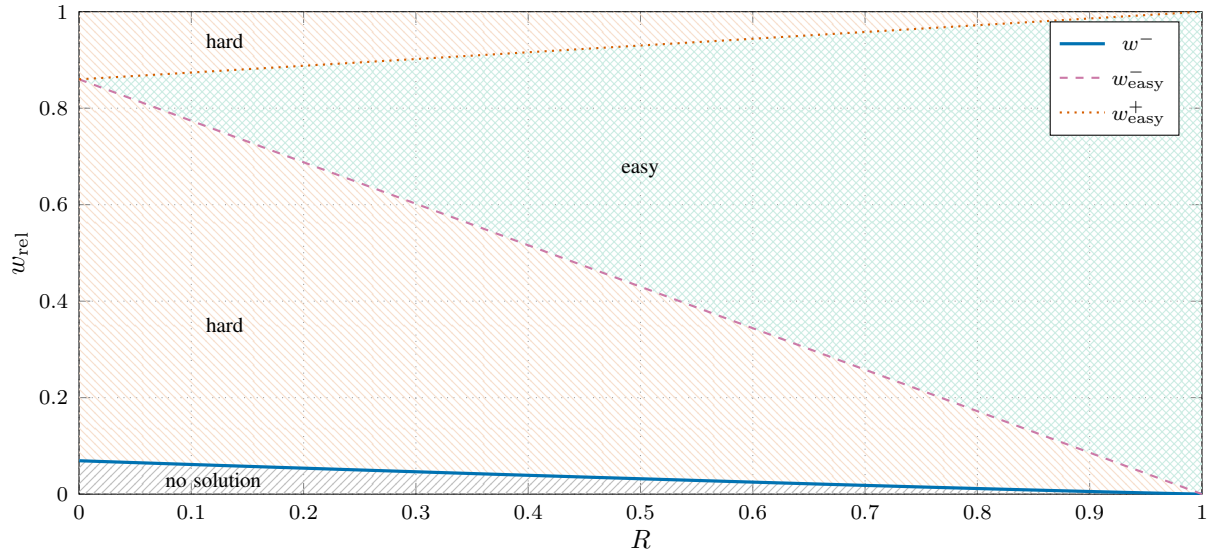
\begin{figure}[ht!]
  \begin{center}
    \begin{tikzpicture}
    \begin{axis}[
        width=\linewidth, %
        height=8cm,
        grid=major, 
        grid style={dotted,gray!80},
        xlabel={$R$}, %
        ylabel={$\wRel$},
        xmax=1,
        xmin=0,
        ymax=1,
        ymin=0,
        tick label style={font=\scriptsize},
        legend style={font=\scriptsize, at={(0.98,0.98)}, anchor=north east}, 
        label style={inner sep=0, font=\small},
    ]
        \addplot[blueline, name path=GV] table[x=R,y=w,col sep=comma] {./data/prange_GV\string_SR\string_q2\string_m6\string_eta6.txt}; 
        \addplot[magentaline, name path=wM] table[x=R,y=w,col sep=comma] {./data/prange_wM\string_SR\string_q2\string_m6\string_eta6.txt}; 
        \addplot[redline, name path=wP] table[x=R,y=w,col sep=comma] {./data/prange_wP\string_SR\string_q2\string_m6\string_eta6.txt}; 
        
        \legend{$\wGVm$, $\wEasym$, $\wEasyp$}    
        
        \addplot[hardfill] fill between[of=GV and wM];
        \addplot[easyfill] fill between[of=wM and wP];
        
        \path[name path=axis-top] (axis cs:0,1) -- (axis cs:1,1);
        \addplot[hardfill] fill between[of=wP and axis-top];
        
        \path[name path=axis-bottom] (axis cs:0,0) -- (axis cs:1,0);
        \addplot[nosolutionfill] fill between[of=GV and axis-bottom];

        \node[regiontext] at (0.12,0.03) {no solution};
        \node[regiontext] at (0.13,0.35) {hard};
        \node[regiontext] at (0.5,0.67) {easy};
        \node[regiontext] at (0.13,0.94) {hard};
        
    \end{axis}
\end{tikzpicture}
    \caption{Regions of hardness for Algorithm \ref{alg:prangeSR} and bounds on the relative weight intervals for successful decoding vs code rate $R=k/n$ for parameters: $m = \shotLength = 6$, $q = 2$ ($\shots \to \infty$, average-case).}\label{fig:gv_plot_2}
  \end{center}
\end{figure}
\FloatBarrier

\section{Randomized Decoding of Linearized Reed--Solomon Codes}\label{sec:random_decoding}

In this section, we consider the probabilistic decoding algorithm introduced in~\cite{jerkovitsRandomizedDecodingLinearized2023} to solve Problem~\ref{prob:LRSDecProblem}. In~\cite{jerkovitsRandomizedDecodingLinearized2023}, the complexity of this decoder was analyzed for the worst-case rank profile. This algorithm generalizes the randomized decoder for Gabidulin codes from~\cite{rennerRandomizedDecodingGabidulin2020} to \ac{LRS} codes.

We revisit the decoder from~\cite{jerkovitsRandomizedDecodingLinearized2023} and present two new contributions:
\begin{itemize}
    \item We adapt the methods from~\cite{puchingerGenericDecodingSumRank2022} to efficiently compute worst-case bounds and sample the support rank profile for the randomized decoder for \ac{LRS} codes.
    \item We extend the average-case analysis from Section~\ref{sec:generic-decoding} to the randomized decoder, deriving an objective function to optimize the support-drawing distribution, motivated by the asymptotic setting.
\end{itemize}

The proposed decoder relies on two key aspects. 
First, we consider an underlying \ac{LRS} code. Recall that \ac{LRS} codes are \ac{MSRD} codes with a minimum sum-rank distance $\dmin = n-k+1$. Efficient algorithms exist to decode \ac{LRS} codes up to the unique decoding radius $\UniqueDecRad = \frac{n-k}{2}$. 
We consider \ac{LRS} codes of length $n$ partitioned into constant block lengths $\n = [n_1,\ldots,n_\shots] = [\eta,\ldots,\eta]$ and dimension $k$ over $\Fqm$, denoted by $\LRSCode$. Additionally, \ac{LRS} codes are restricted to $\shots \leq q-1$ and $n_i \leq m$ for all $i\inshots$.

Second, we focus on decoding beyond the unique decoding radius, where the sum-rank weight of the error $\errWeightVec$ exceeds $\UniqueDecRad$, i.e., $\SumRankWeight(\errWeightVec)=\errWeight > \UniqueDecRad$. The error excess beyond the unique decoding radius is defined as
\begin{equation}\label{eq:error_excess}
    \errEx \defeq \errWeight - \UniqueDecRad.
\end{equation}
Note that while $2\errEx$ is always an integer, $\errEx$ itself does not necessarily need to be an integer.

As discussed in Section~\ref{sec:problem-description}, for errors with weight $\errWeight \leq \UniqueDecRad$, Problem~\ref{prob:LRSDecProblem} has at most one solution. However, for $\errWeight > \UniqueDecRad$, multiple solutions may exist. The number of solutions can vary, being either polynomially or exponentially bounded in terms of the code parameters and depending on the structure of the code. This behavior has been studied for \ac{LRS} codes in~\cite{puchingerBoundsListDecoding2021}. Following the reasoning in~\cite{rennerRandomizedDecodingGabidulin2020}, we analyze the complexity of finding at least one solution. If the code is list-decodable, this process can be repeated to obtain a list of solutions.

\subsection{Erasures in the Sum-Rank Metric}
We consider an error of the form as described in Section~\ref{sec:pre:channel model}. The error $\e$ can be further decomposed into a sum of three types of error vectors
\begin{equation}
\e = \e_{\mathrm{F}} + \e_{\mathrm{R}} + \e_{\mathrm{C}},
\end{equation}
where $\e_{\mathrm{F}}$ represents \emph{full errors}, $\e_{\mathrm{R}}$ represents \emph{row erasures}, and $\e_{\mathrm{C}}$ represents \emph{column erasures}. The sum-rank weights of these error vectors are denoted by $\wF$, $\wR$, and $\wC$, respectively, such that $\SumRankWeight(\e_{\mathrm{F}}) = \wF$, $\SumRankWeight(\e_{\mathrm{R}}) = \wR$, and $\SumRankWeight(\e_{\mathrm{C}}) = \wC$ (see~\cite{hormannErrorErasureDecodingLinearized2022}).

Each of the three error vectors can be decomposed as in~\eqref{eq:errorDecomp}
\begin{align}
\e_\mathrm{F} &= \a_\mathrm{F}\B_\mathrm{F} \quad \text{with} \quad \a_\mathrm{F} \in \Fqm^{\wF} \text{ and } \B_\mathrm{F} \in \Fq^{\wF \times n}, \\
\e_\mathrm{R} &= \a_\mathrm{R}\B_\mathrm{R} \quad \text{with} \quad \a_\mathrm{R} \in \Fqm^{\wR} \text{ and } \B_\mathrm{R} \in \Fq^{\wR \times n}, \\
\e_\mathrm{C} &= \a_\mathrm{C}\B_\mathrm{C} \quad \text{with} \quad \a_\mathrm{C} \in \Fqm^{\wC} \text{ and } \B_\mathrm{C} \in \Fq^{\wC \times n}.
\end{align}
For full errors, neither $\a_{\mathrm{F}}$ nor $\B_{\mathrm{F}}$ are known. For row erasures, $\a_{\mathrm{R}}$ is known but $\B_{\mathrm{R}}$ is unknown. For column erasures, $\a_{\mathrm{C}}$ is unknown but $\B_{\mathrm{C}}$ is known.

An efficient algorithm for \ac{LRS} codes was proposed in~\cite{hormannErrorErasureDecodingLinearized2022}, capable of correcting combinations of \emph{full errors}, \emph{row erasures}, and \emph{column erasures} up to
\begin{equation}\label{eq:deccond}
2\wF + \wC + \wR \leq n-k,
\end{equation}
with a complexity of $\oh{n^2}$ operations over $\Fqm$. We estimate this complexity, denoted as $\WiterEnE$, over $\Fq$ as $\oh{m^2 n^2}$ and approximate it, similar to the arguments in Section~\ref{sec:generic-decoding}, as
\begin{equation}\label{eq:defWiterEnE}
    \WiterEnE \approx m^2 n^2.
\end{equation}

We denote this error-erasure decoder by $\DEC(\y,\a_\mathrm{R},\B_\mathrm{C})$, which takes as input the received word $\y = \c + \e$, along with a basis $\a_\mathrm{R}$ of the column support of $\e_\mathrm{R}$ (row erasures) and/or a basis $\B_\mathrm{C}$ of the row support of $\e_\mathrm{C}$ (column erasures). The decoder outputs a valid codeword $\hat{c}$ if the condition in~\eqref{eq:deccond} is satisfied; otherwise, it returns $\emptyset$.

\subsection{Randomized Decoding Algorithm}\label{sec:randomized decoding algorithm}

We consider an error $\e$ with row support $\RSv$ and column support $\CSv$. Unlike the generic decoding algorithm, where we guess a super support that must completely contain the actual error support to succeed, the randomized approach aims to guess only parts of the error supports and utilizes an error-and-erasure decoder to succeed with a smaller number of guesses. The steps are as follows:

For each block $i \in \shots$, we have $\RS^{(i)} \subseteq \Fq^\shotLength$ and $\CS^{(i)} \subseteq \Fq^m$. As shown in~\cite{rennerRandomizedDecodingGabidulin2020}, for Gabidulin codes, guessing a combination of row and column supports does not improve success, and it is more effective to guess from a smaller ambient space. Since this result applies block-wise in the sum-rank metric, we set $\maxShotWeight = \min\{m, \eta\}$ and guess from $\RSv$ if $\maxShotWeight = \shotLength$, otherwise from $\CSv$. 

This ensures the guessed support is always a subspace of $\Fq^\maxShotWeight$. For simplicity, we use $\ESv$ to denote the error support, whether from $\RSv$ or $\CSv$.

To guess parts of $\ESv$, we first draw a corresponding rank profile $\subwtVec = [\subwt_1,\ldots,\subwt_\shots] \in \NN^\shots$ according to some \ac{PMF} denoted as $\pu{\subwtVec}\defeq\Pr[\subwtVec]$. 
Then, a support $\GSv$ is drawn uniformly at random from $\Xi_{q,\maxShotWeight}(\subwtVec)$ with $u\defeq \sumDim(\GSv)$. 
Define $\sdInter$ as the sum dimension of the intersection space between the guessed space $\GSv$ and the actual error support $\ESv$, i.e.,
\begin{equation}
    \sdInter \defeq \sumDim(\GSv \cap \ESv).
\end{equation}
The number of full errors $\wF$ is reduced by $\sdInter$, so $\wF = \errWeight - \sdInter$, while the number of column or row erasures increases by $\subwt$, corresponding to the guessed parts of $\GSv$. For these guessed parts, we assume knowledge of the column support but not the row support (or vice versa), effectively trading errors for erasures~\footnote{This approach is reminiscent of the generalized minimum distance (GMD) decoding strategy introduced by Forney~\cite{forneyGeneralizedMinimumDistance1966} for Hamming metric codes, and later extended to the rank metric in~\cite{bossertVerfahrenUndKommunikationsvorrichtung2003} by Bossert \textit{et al.}.}.

The error-and-erasure decoder takes as input a vector containing $\errWeight - \sdInter$ full errors and $\subwt$ erasures.
From the decoding condition in~\eqref{eq:deccond}, we have
\begin{equation}\label{eq:deltaBounds}
    2(\errWeight-\sdInter)+\subwt \leq n-k,
\end{equation}
which implies that for successful decoding, we need to have
\begin{equation}\label{eq:minEpsDef}
    \sdInter \geq \minEps \defeq \errWeight + \frac{\subwt - (n-k)}{2} = \errEx + \frac{\subwt}{2}.
\end{equation}
If the intersection between the guessed spaces and the actual error support is sufficiently large, an error-erasure decoder can successfully decode. From~\eqref{eq:minEpsDef}, the valid range for $\subwt$ is $\subwt \in \{2\xi, \ldots, n-k\}$, with the lower bound ensuring $\sdInter \geq 0$ and the upper bound corresponding to the most favorable case, where $\sdInter = \errWeight$. In the latter case, $\subwt \leq n-k$, which is the maximum erasure decoding capability for \ac{LRS} codes.
The performance of the randomized decoder depends on the choice of the \ac{PMF} $\pu{\subwtVec}$ used to draw the rank profile $\subwtVec$. The optimal choice of $\pu{\subwtVec}$ and the analysis of the algorithm’s success probability will be discussed in the following sections.

Algorithm~\ref{alg:randomized_sr_decoder} outlines the proposed approach. It can be easily generalized to variable block lengths and other sum-rank metric codes that support efficient error-and-erasure decoders.

\begin{algorithm}[ht!]
\caption{Randomized Sum-Rank Metric Decoder for \ac{LRS} codes}\label{alg:randomized_sr_decoder}

\Input{
    Parameters: $q$, $m$, $\shotLength$, $\shots$, $\errWeight$ and $\subwt$ with $2\errEx \leq \subwt \leq \sdInterMax$ \\
    \enspace Received vector $\y\in\Fqm^n$\\
    \enspace \Ac{LRS} code $\LRSCode$ of length $n=\shots\eta$ and dimension $k$  \\
    \enspace Error-erasure decoder $\DEC(\cdot,\cdot,\cdot)$ for $\LRSCode$ \\    
}
\Output{
    Vector $\c' \in \LRSCode$ such that $\SumRankWeight(\y-\c') = \errWeight$
}
$\c' \gets \emptyset$ \;
$\maxShotWeight \gets \min\{m,\shotLength\}$ \;
\While{$\c' = \emptyset$ \oror $\SumRankWeight(\y-\c') \neq \errWeight$}{
    $\subwtVec \gets $ Draw rank profile for the guess space according to $\pu{\subwtVec}$ \;
    $\GSv \sample \Xi_{q,\maxShotWeight}(\subwtVec)$\label{alg:randomized_sr_decoder:drawsupport} \;
    \If{$\shotLength < m$}{
        $\B_\mathrm{C} \gets $ Basis of $\GSv$ \;
        $\c' \gets \DEC(\y,\emptyset,\B_\mathrm{C})$ \makebox[3cm][l]{\scriptsize\tcc*{Error-erasure decoding with row erasures}}
    }
    \Else{
        $\a_\mathrm{R} \gets $ Basis of $\GSv$ \;
        $\c' \gets \DEC(\y,\a_\mathrm{R},\emptyset)$ \makebox[3cm][l]{\scriptsize\tcc*{Error-erasure decoding with column erasures}}
    }
}
\Return $\c'$
\end{algorithm}
\FloatBarrier

The following lemma provides a useful result that contributes to the derivation of the average complexity of the randomized approach in Algorithm~\ref{alg:randomized_sr_decoder}.

\begin{lemma}\label{lemma:success probability of randomized given w and u}
    For a fixed error $\e = [\e_1, \ldots, \e_\shots] \in\Fqm^n$  with rank profile $\errWeightVec = [\errWeight_1, \ldots, \errWeight_\shots] \in \NN^\shots$ and a given rank profile $\subwtVec = [\subwt_1,\ldots,\subwt_\shots] \in \NN^\shots$, let $\ESv$ and $\GSv$ be the error space and guessed space, respectively, where $\GSv$ is chosen uniformly from $\Xi_{q,\maxShotWeight}(\subwtVec)$. Further, let $S_j$ denote the event that $\sumDim(\ESv \cap \GSv) = j$.
    The probability of $S_j$ given $\e$ and $\subwtVec$ is then
    \begin{equation}\label{eq:success probability of randomized given w and u}
        \Pr[ S_{j} | \e, \subwtVec] = \Pr[ S_{j} | \errWeightVec, \subwtVec] =  \left(\,\Conv_{i=1}^{\shots} \intersectprob{\maxShotWeight}{\errWeight_i, \subwt_i} \right)(j),
    \end{equation}
    with
    \begin{equation}
     \left(\,\Conv_{i=1}^{\shots} \intersectprob{\maxShotWeight}{\errWeight_i, \subwt_i} \right)(j) \defeq \left( \intersectprob{\maxShotWeight}{\errWeight_1, \subwt_1} \conv \cdots \conv \intersectprob{\maxShotWeight}{\errWeight_\shots, \subwt_\shots} \right) (j),
    \end{equation}
    being the $\shots$-fold discrete convolution (denoted by $\conv$) of the probability distributions $\intersectprob{\maxShotWeight}{\errWeight_i, \subwt_i}$ evaluated at $j$ for all $i=1,\ldots,\shots$. Here, $\intersectprob{\maxShotWeight}{\errWeight_i, \subwt_i}$ represents the probability distribution of the intersection dimension between the error space and the guessed space in the $i$-th shot.
\end{lemma}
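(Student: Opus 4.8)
The plan is to decompose $\sumDim(\ESv \cap \GSv)$ into a sum of independent per-block intersection dimensions and then identify its law with the stated convolution. First I would use the block-wise definition of the intersection to write $\sumDim(\ESv \cap \GSv) = \sum_{i=1}^\shots \dim(\ES^{(i)} \cap \GS^{(i)})$, so that the event $S_j$ reduces to the event that these $\shots$ contributions sum to $j$.

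Second, I would justify the equality $\Pr[S_j \mid \e, \subwtVec] = \Pr[S_j \mid \errWeightVec, \subwtVec]$ by invoking the remark following \eqref{eq:notation intersect prob}: the intersection-dimension distribution $\intersectprob{\maxShotWeight}{a, b}(\cdot)$ is the same whether both subspaces are uniformly random or one is held fixed. In our setting each error block $\ES^{(i)}$ is a fixed subspace of dimension $\errWeight_i$ while each guessed block $\GS^{(i)}$ is uniform of dimension $\subwt_i$, so $\dim(\ES^{(i)} \cap \GS^{(i)})$ is distributed as $\intersectprob{\maxShotWeight}{\errWeight_i, \subwt_i}$ and depends on $\e$ only through $\errWeight_i$. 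Summing over the blocks shows the full conditional law depends on $\e$ only through $\errWeightVec$, giving the first claimed equality.

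Third, I would set $X_i \defeq \dim(\ES^{(i)} \cap \GS^{(i)})$ and establish independence: since $\GSv$ is drawn uniformly from the Cartesian product $\Xi_{q,\maxShotWeight}(\subwtVec)$, the blocks $\GS^{(1)}, \ldots, \GS^{(\shots)}$ are selected independently, making $X_1, \ldots, X_\shots$ mutually independent with $X_i \sim \intersectprob{\maxShotWeight}{\errWeight_i, \subwt_i}$. The probability mass function of a sum of independent integer-valued random variables is the convolution of the individual mass functions, so evaluating at $j$ yields $\left(\Conv_{i=1}^\shots \intersectprob{\maxShotWeight}{\errWeight_i, \subwt_i}\right)(j)$, which is exactly the claim.

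The step I expect to require the most care is the second one: one must be sure that fixing the error support (rather than drawing it at random) does not alter the per-block intersection statistics. This is precisely the content of the cited remark, so the argument is short, but it is the conceptual crux. Once it is accepted, the independence across blocks and the convolution identity for sums of independent variables are entirely standard.
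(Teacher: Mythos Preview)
Your proposal is correct and follows essentially the same approach as the paper: introduce per-block intersection dimensions, observe via the remark after~\eqref{eq:notation intersect prob} that each is distributed as $\intersectprob{\maxShotWeight}{\errWeight_i,\subwt_i}$ and depends on $\e$ only through $\errWeight_i$, and then use the convolution identity for a sum of independent integer-valued random variables. If anything, you are slightly more explicit than the paper in justifying independence from the Cartesian-product structure of $\Xi_{q,\maxShotWeight}(\subwtVec)$, which the paper uses implicitly.
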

\begin{proof}
Given the error $\e = [\e_1, \ldots, \e_\shots] \in\Fqm^n$ with rank profile $\errWeightVec = [\errWeight_1, \ldots, \errWeight_\shots] \in \NN^\shots$ and the rank profile $\subwtVec = [\subwt_1,\ldots,\subwt_\shots] \in \NN^\shots$ of the guessed support, let $V_i$ be a random variable that corresponds to the dimension of the intersection of the $i$-th guessed space $\GS^{(i)}$ with the $i$-th actual error space $\ES^{(i)}$ for $i\inshots$.
By~\eqref{eq:notation intersect prob}, we have that $\intersectprob{\maxShotWeight}{\errWeight_i, \subwt_i}(j)$ is the probability of that event, i.e.,
\begin{equation}
 \Pr[V_i = j| \e, \subwtVec]=\intersectprob{\maxShotWeight}{\errWeight_i, \subwt_i}(j).   
\end{equation}
Note that the probability $\Pr[V_i = j| \e, \subwtVec]$ depends only on the rank weights $\errWeight_i$ and $\subwt_i$ of the error and guessed support in the $i$-th shot, respectively, and not on the specific error vector $\e$. Thus, we can write
\begin{equation}
 \Pr[V_i = j| \e, \subwtVec] = \Pr[V_i = j| \errWeightVec, \subwtVec] = \intersectprob{\maxShotWeight}{\errWeight_i, \subwt_i}(j).   
\end{equation}
Since we are interested in the probability distribution of the sum of random variables, i.e., $V = \sum_{i=1}^{\shots}V_i$, the resulting probability distribution is given by the $\shots$-fold discrete convolution of the probability distributions of the random variables $V_i$ for $i\inshots$. Thus,
\begin{equation}
\Pr[V = j| \e, \subwtVec] = \Pr[V = j| \errWeightVec, \subwtVec] = \left(\,\Conv_{i=1}^{\shots} \intersectprob{\maxShotWeight}{\errWeight_i, \subwt_i}\right)(j),
\end{equation}
with 
\begin{equation}\label{eq:definition of discrete conv}
    \left(\, \intersectprob{\maxShotWeight}{\errWeight_1, \subwt_1} \Conv \intersectprob{\maxShotWeight}{\errWeight_2, \subwt_2}\right)(j) \defeq \sum_{r=-\infty}^{\infty} \intersectprob{\maxShotWeight}{\errWeight_1, \subwt_1}(r) \, \intersectprob{\maxShotWeight}{\errWeight_2, \subwt_2}(j-r).
\end{equation}
Finally, we have that $S_j$ is the event that $V = j$, which proves the claim.
\end{proof}

The probability of the event $S_j$ given the rank profiles of the error and the guessed support, as stated in Lemma~\ref{lemma:success probability of randomized given w and u}, can be further expanded using the concept of rank profiles. The following proposition expresses this probability. %

\begin{proposition}\label{prop:success probability of randomized given w and u as sum}
    Let $\e = [\e_1, \ldots, \e_\shots] \in\Fqm^n$ be a fixed error with rank profile $\errWeightVec = [\errWeight_1, \ldots, \errWeight_\shots] \in \NN^\shots$, and let $\subwtVec = [\subwt_1,\ldots,\subwt_\shots] \in \NN^\shots$ be a given rank profile. Further, let $\maxShotWeight = \min\{\shotLength, m\}$ and $\sdInter \in \{0,\ldots,\shots\maxShotWeight\}$. Then, we can write~\eqref{eq:success probability of randomized given w and u} from Lemma~\ref{lemma:success probability of randomized given w and u} as 
    \begin{equation}\label{eq:success probability of randomized given w and u as sum}
        \Pr[ S_{\sdInter} | \errWeightVec, \subwtVec] = \sum_{\sdInterVec\in\wdecomp{\sdInter,\shots,\maxShotWeight}} \prod_{i=1}^{\shots} \intersectprob{\maxShotWeight}{\errWeight_i, \subwt_i}(\sdInter_i).
    \end{equation}    
\end{proposition}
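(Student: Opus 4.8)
The plan is to expand the $\shots$-fold convolution from Lemma~\ref{lemma:success probability of randomized given w and u} into an explicit sum by unfolding the definition of discrete convolution block by block. The convolution $\left(\Conv_{i=1}^{\shots} \intersectprob{\maxShotWeight}{\errWeight_i, \subwt_i}\right)(\sdInter)$ evaluates the probability distribution of the sum $V = \sum_{i=1}^{\shots} V_i$ at the point $\sdInter$, where each $V_i$ is the intersection dimension in the $i$-th block with distribution $\intersectprob{\maxShotWeight}{\errWeight_i, \subwt_i}$. The key observation is that the probability mass function of a sum of independent random variables, evaluated at $\sdInter$, equals the sum over all ways of writing $\sdInter = \sdInter_1 + \cdots + \sdInter_\shots$ of the product of the individual probabilities. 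This is exactly the combinatorial content of iterated convolution.

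First I would recall that $V_i$ takes values in $\{0,\ldots,\maxShotWeight\}$, since the intersection of two subspaces of $\Fq^\maxShotWeight$ has dimension at most $\maxShotWeight$, and in fact at most $\min\{\errWeight_i, \subwt_i\}$. This means each admissible component $\sdInter_i$ lies in $\{0,\ldots,\maxShotWeight\}$, so any decomposition contributing nonzero mass is automatically an element of $\wdecomp{\sdInter,\shots,\maxShotWeight}$; decompositions violating the bound $\sdInter_i \leq \maxShotWeight$ contribute zero because $\intersectprob{\maxShotWeight}{\errWeight_i, \subwt_i}(\sdInter_i) = 0$ there. Next I would proceed by induction on $\shots$: the base case $\shots = 1$ is immediate, and the inductive step applies the two-fold convolution formula~\eqref{eq:definition of discrete conv} to split off the last block, writing
\begin{equation}
\Pr[ S_{\sdInter} | \errWeightVec, \subwtVec] = \sum_{r} \left(\Conv_{i=1}^{\shots-1} \intersectprob{\maxShotWeight}{\errWeight_i, \subwt_i}\right)(\sdInter - r)\cdot \intersectprob{\maxShotWeight}{\errWeight_\shots, \subwt_\shots}(r),
\end{equation}
then substituting the inductive hypothesis for the $(\shots-1)$-fold convolution and reindexing so that $r = \sdInter_\shots$ merges into a single sum over $\sdInterVec \in \wdecomp{\sdInter,\shots,\maxShotWeight}$.

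The main obstacle, though minor, is bookkeeping the index ranges correctly so that the final sum is exactly over $\wdecomp{\sdInter,\shots,\maxShotWeight}$ rather than over all of $\ZZ^\shots$ with the constraint $\sum_i \sdInter_i = \sdInter$. The point to verify is that extending the summation from the naive range (all integer compositions summing to $\sdInter$) to the constrained set $\wdecomp{\sdInter,\shots,\maxShotWeight}$ introduces no error: terms with any $\sdInter_i < 0$ or $\sdInter_i > \maxShotWeight$ vanish because the intersection probability $\intersectprob{\maxShotWeight}{\errWeight_i, \subwt_i}(\sdInter_i)$ is zero outside $\{0,\ldots,\maxShotWeight\}$. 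Once this support argument is made explicit, the equality~\eqref{eq:success probability of randomized given w and u as sum} follows directly, and the independence of the blocks (which justifies factoring the joint probability into the product $\prod_{i=1}^{\shots}\intersectprob{\maxShotWeight}{\errWeight_i, \subwt_i}(\sdInter_i)$) is inherited from the fact that each guessed block $\GS^{(i)}$ is drawn independently and uniformly, as established in the setup of Algorithm~\ref{alg:randomized_sr_decoder}.
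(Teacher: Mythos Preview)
Your proposal is correct and follows essentially the same approach as the paper: both arguments unfold the $\shots$-fold discrete convolution into an explicit sum over tuples $(\sdInter_1,\ldots,\sdInter_\shots)$ summing to $\sdInter$, and then restrict the summation range to $\wdecomp{\sdInter,\shots,\maxShotWeight}$ using the support constraint $\sdInter_i \in \{0,\ldots,\maxShotWeight\}$. The paper's proof proceeds in the reverse direction (starting from the right-hand side and recognizing the convolution) and appeals directly to ``recursive application of the definition of discrete convolution'' rather than spelling out the induction, but the content is the same.
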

\begin{proof}
    Starting from the right-hand side of~\eqref{eq:success probability of randomized given w and u as sum}, we have
    \begin{align}
        \sum_{\sdInterVec\in\wdecomp{\sdInter,\shots,\maxShotWeight}} \prod_{i=1}^{\shots} \intersectprob{\maxShotWeight}{\errWeight_i, \subwt_i}(\sdInter_i) &= \sum_{\substack{\maxShotWeight \geq \sdInter_1,\ldots,\sdInter_\shots \geq 0\\\sdInter_1+\cdots+\sdInter_\shots=\sdInter}} \prod_{i=1}^{\shots} \intersectprob{\maxShotWeight}{\errWeight_i, \subwt_i}(\sdInter_i)\\
        &= \left(\,\Conv_{i=1}^{\shots} \intersectprob{\maxShotWeight}{\errWeight_i, \subwt_i} \right)(\sdInter)\\
        &= \Pr[ S_{\sdInter} | \errWeightVec, \subwtVec],
    \end{align}
    where the first equality follows from the definition of the set of rank profiles, the second equality follows from the definition of discrete convolution as defined in~\eqref{eq:definition of discrete conv}, and the last equality follows from the recursive application of the definition of discrete convolution. This completes the proof.
\end{proof}

The probability of successful decoding is given by the sum of the probabilities of the events $S_{\sdInter}$ over all feasible values of the total intersection dimension $\sdInter$. Since these events are mutually exclusive, we define
\begin{equation}\label{eq:prob_mu_total_rand}
 \probProfRand{\maxShotWeight}{\subwtVec,\errWeightVec} \defeq \sum_{\sdInter=\minEps}^{\min\{\subwt,\errWeight\}} \Pr[S_{\sdInter} \mid \subwtVec, \errWeightVec],
\end{equation}
where $\probProfRand{\maxShotWeight}{\subwtVec,\errWeightVec}$ denotes the probability of successful decoding. The lower bound $\minEps$, defined in~\eqref{eq:minEpsDef}, ensures that the intersection support has enough dimensions for successful decoding. The upper bound $\min\{\subwt,\errWeight\}$ reflects that the intersection cannot have a greater dimension than either of the supports.

Given a probability distribution $\pu{\subwtVec}$ over the rank profiles $\subwtVec$, the overall probability of successful decoding for a fixed error rank profile $\errWeightVec$ is
\begin{equation}\label{eq:errPgivenU}
    \probProfavgRand{\maxShotWeight,\subwt}{\errWeightVec} \defeq \sum_{\subwtVec\in\wdecomp{\subwt,\shots,\maxShotWeight}} \pu{\subwtVec} \cdot \probProfRand{\maxShotWeight}{\subwtVec,\errWeightVec}.
\end{equation}

\subsection{Worst-Case Complexity}

When decoding beyond the unique decoding radius for \ac{LRS} codes (Problem~\ref{prob:LRSDecProblem}), multiple solutions may exist, and the proposed decoder (Algorithm~\ref{alg:randomized_sr_decoder}) lacks a mechanism to identify the original codeword among them. To make the complexity analysis exact, we assume a genie-aided version of the decoder that can identify the correct solution and allows us to stop the decoder at the iteration when the original codeword is found. Consequently, only the upper bound on the expected complexity applies to the non-genie-aided decoder, similar to the bounds derived in~\cite{puchingerGenericDecodingSumRank2022} for the generic decoder.

We analyze the worst-case expected complexity over all possible rank profiles of errors with $\SumRankWeight(\e) = \errWeight$. We derive lower and upper bounds on this complexity and provide algorithms to efficiently compute these bounds. Additionally, we present methods to optimize the guessing support distribution for this worst-case scenario.

Determining the optimal sum-rank weight $\subwt \in \{2\errEx, \ldots, \sdInterMax\}$ for the guessing support to minimize the expected complexity of Algorithm~\ref{alg:randomized_sr_decoder} is not straightforward. The success probability bounds from~\cite{rennerRandomizedDecodingGabidulin2020} for the rank metric are convex in $\subwt$, suggesting that the probability is maximized at $\subwt = 2\errEx$ or $\subwt = \sdInterMax$. However, this does not guarantee that these values always yield the optimal expected complexity. Similarly, for the sum-rank metric, we cannot be certain, though these extreme values for $\subwt$ remain important points of interest.
For these two specific values of $\subwt$, we can express the success probability from~\eqref{eq:prob_mu_total_rand} as follows. When $\subwt = 2\errEx \leq \errWeight$, we have
\begin{align}\label{eq:randomized success prob special case for u = n-k}
    \probProfRand{\maxShotWeight}{\subwtVec,\errWeightVec} = \sum_{\sdInter=\subwt}^{\subwt}\Pr[S_{\sdInter} | \subwtVec, \errWeightVec] = \prod_{i=1}^{\shots} \subspaceprob{\maxShotWeight}{\subwt_i}{\errWeight_i}.
\end{align}
On the other hand, when $\subwt = \sdInterMax \geq \errWeight$, we have
\begin{align}\label{eq:randomzied success prob special case for u = 2*xi}
    \probProfRand{\maxShotWeight}{\subwtVec,\errWeightVec} = \sum_{\sdInter=\errWeight}^{\errWeight}\Pr[S_{\sdInter} | \subwtVec, \errWeightVec] = \prod_{i=1}^{\shots} \subspaceprob{\maxShotWeight}{\errWeight_i}{\subwt_i}.
\end{align}

Notably, choosing either $\subwt = 2\errEx$ or $\subwt = \sdInterMax$ simplifies the expression for the success probability, as the convolution in~\eqref{eq:prob_mu_total_rand} reduces to a simple product over the blocks. This simplification is advantageous for our analysis of the decoder's performance.

When $\subwt = \sdInterMax$, the success probability matches that of the fully generic decoding algorithm, as the expressions in~\eqref{eq:randomized success prob special case for u = n-k} and~\eqref{eq:success probability for generic given profile of v and w} are identical. In this case, the bounds from~\cite{puchingerGenericDecodingSumRank2022} apply directly, with the only difference being the per-iteration complexity, which is replaced by that of the error-and-erasure decoder for \ac{LRS} codes. Therefore, setting $\subwt = \sdInterMax$ offers no improvement in success probability over the fully generic approach.

When $\subwt = 2\errEx$ and $\subwt \leq \errWeight$, the algorithms from~\cite{puchingerGenericDecodingSumRank2022} require adjustments. We present these modifications below, using the notation
\begin{align}\label{eq:phiPrime}
    \probProfRandP{\maxShotWeight}{\subwtVec,\errWeightVec} \defeq \prod_{i=1}^{\shots} \subspaceprob{\maxShotWeight}{\subwt_i}{\errWeight_i}.
\end{align}

From~\eqref{eq:randomzied success prob special case for u = 2*xi} follows the worst-case expected number of iterations of Algorithm~\ref{alg:randomized_sr_decoder} for a given \ac{PMF} of the guessing support, denoted as $\puwc{\subwtVec} \defeq \Pr[\subwtVec]$ by
\begin{equation}\label{eq:maximum expect. iterations equation}
    \max_{\errWeightVec\in\wdecomp{\errWeight,\shots,\maxShotWeight}} \mathbb{E}[\text{\#iterations}] = \max_{\errWeightVec\in\wdecomp{\errWeight,\shots,\maxShotWeight}}{\left( \sum_{\subwtVec \in \wdecomp{\subwt,\shots,\maxShotWeight}} \puwc{\subwtVec}\probProfRandP{\maxShotWeight}{\subwtVec,\errWeightVec} \right)}^{-1}.
\end{equation}

The problem of minimizing the worst-case expected number of iterations over all valid distributions $\puwc{\subwtVec}$ on $\wdecomp{\subwt,\shots,\maxShotWeight}$ can be formulated as a linear program. While this linear program can be solved numerically using standard methods for small values of $\maxShotWeight$, $\shots$, and $\subwt$, the number of unknowns, i.e., $\puwc{\subwtVec} \in [0, 1]$, grows rapidly as these parameters increase. Consequently, solving the linear program directly becomes computationally prohibitive for larger problem instances.

To address this computational challenge, we adopt the approach proposed by Puchinger et al.~\cite{puchingerGenericDecodingSumRank2022}, which introduces a randomized mapping $\ucomp_{\maxShotWeight} \colon \wdecomp{\errWeight,\shots,\maxShotWeight} \times \NN \to \wdecomp{\subwt,\shots,\maxShotWeight}$. This mapping aims to maximize the probability $\probProfRandP{\maxShotWeight}{\ucomp_{\maxShotWeight}(\errWeightVec, \subwt), \errWeightVec}$ for a given $\errWeightVec \in \wdecomp{\errWeight,\shots,\maxShotWeight}$ by randomly selecting an output vector from multiple possible candidates for each input, providing a more computationally tractable approach to the problem.

Rather than directly choosing a vector $\subwtVec \in \wdecomp{\subwt,\shots,\maxShotWeight}$, we first select a vector $\errWeightVec \in \wdecomp{\errWeight,\shots,\maxShotWeight}$ at random according to a designed distribution $\designPr{\errWeightVec}$ on $\wdecomp{\errWeight,\shots,\maxShotWeight}$, and then set $\subwtVec \gets \ucomp_{\maxShotWeight}(\errWeightVec, \subwt)$. For a fixed error $\e$, this allows us to bound the probability as follows
\begin{equation}\label{eq:upper_bound_iterations}
    \Pr(\GSv \subseteq \ESv_{\e}) = \sum_{\subwtVec \in \wdecomp{\subwt,\shots,\maxShotWeight}} \puwc{\subwtVec}\cdot \probProfRandP{\maxShotWeight}{\subwtVec,\errWeightVec_{\e}} \geq \designPr{\errWeightVec_{\e}} \cdot \probProfRandP{\maxShotWeight}{\ucomp_{\maxShotWeight}(\errWeightVec_{\e}, \subwt),\errWeightVec_{\e}}.
\end{equation}

Using this bound, we can minimize the following upper bound on the worst-case expected number of iterations, instead of directly minimizing \eqref{eq:maximum expect. iterations equation}
\begin{equation}
    \max_{\errWeightVec\in\wdecomp{\errWeight,\shots,\maxShotWeight}}  \mathbb{E}[\text{\#iterations}] \leq \max_{\errWeightVec\in\wdecomp{\errWeight,\shots,\maxShotWeight}} {\left(\designPr{\errWeightVec} \cdot \probProfRandP{\maxShotWeight}{\ucomp_{\maxShotWeight}(\errWeightVec, \subwt),\errWeightVec}\right)}^{-1},
\end{equation}
over all valid probability mass functions $\puwc{\subwtVec}$ on $\wdecomp{\subwt,\shots,\maxShotWeight}$.

The randomized mapping $\ucomp_\maxShotWeight$ is formally defined in Appendix~\ref{sec:appendix-2} and its correctness is proofed in Lemma~\ref{lem:ucomp_maximizes_prob}.

We adapt the support-drawing algorithm from~\cite{puchingerGenericDecodingSumRank2022} to handle cases where the sum dimension of the guessed support is smaller than that of the error support. The modified version, shown in Algorithm~\ref{alg:DrawRandomSupport}, retains the structure of the original algorithm but is adjusted to account for this dimension difference.
\begin{algorithm}[htbp]
\caption{$\mathrm{DrawRandomSupport}(\subwt, \errWeight, \maxShotWeight)$}\label{alg:DrawRandomSupport}
\SetKwInOut{Input}{Input}\SetKwInOut{Output}{Output}
\Input{Integers $\subwt, \maxShotWeight, \errWeight \in \NN$ with $\subwt \leq \errWeight$}
\Output{$\GSv$ of sum dimension $\subwt$}
Draw $\errWeightVec \in \wdecomp{\errWeight,\shots,\maxShotWeight}$ according to the distribution $\designPr{\errWeightVec}$ defined in \eqref{eq:design_distribution}. \;
$\subwtVec \gets \ucomp_\maxShotWeight(\errWeightVec, \subwt)$ \;
$\GSv \sample \Xi_{\maxShotWeight,\zeta}(\subwtVec)$ \;
\Return $\GSv$
\end{algorithm}

We define the probability distribution $\designPr{\errWeightVec}$ as follows
\begin{equation}
    \designPr{\errWeightVec} \coloneqq {\left(\probProfRandP{\maxShotWeight}{\ucomp_\maxShotWeight(\errWeightVec, \subwt), \errWeightVec}\cdot \qFactorSymRand_{\ell,\errWeight,\maxShotWeight}\right)}^{-1} \quad \forall \errWeightVec \in \wdecomp{\errWeight,\shots,\maxShotWeight},
    \label{eq:design_distribution}
\end{equation}
where $\qFactorSymRand_{\ell,\errWeight,\maxShotWeight}$ is defined as
\begin{equation}
    \qFactorSymRand_{\ell,\errWeight,\maxShotWeight} \coloneqq \sum_{\errWeightVec \in \wdecomp{\errWeight,\shots,\maxShotWeight}} \probProfRandP{\maxShotWeight}{\ucomp_\maxShotWeight(\errWeightVec, \subwt), \errWeightVec}^{-1}.
    \label{eq:q_factor_def}
\end{equation}

The following proposition presents bounds on the expected number of iterations.

\begin{proposition}\label{prop:expected_iterations_bounds}
Let $\e \in \Fqm^{n}$ be an error of sum-rank weight $\errWeight$ and let $\subwt$ be an integer with $\subwt \leq \errWeight$. If $\GSv$ is a sub-support that is drawn by Algorithm \ref{alg:DrawRandomSupport} with input $\subwt$ and $\errWeight$, then we have
\begin{equation}
    |\wdecomp{\errWeight,\shots,\maxShotWeight}|^{-1}\qFactorSymRand_{\ell,\errWeight,\maxShotWeight} \leq {\Pr(\GSv \subseteq \ESv_{\e})}^{-1} \leq \qFactorSymRand_{\ell,\errWeight,\maxShotWeight},
\end{equation}
where $\qFactorSymRand_{\ell,\errWeight,\maxShotWeight}$ is defined as in \eqref{eq:q_factor_def}, and $\ESv_{\e}$ denotes the error support corresponding to the error vector $\e$.
\end{proposition}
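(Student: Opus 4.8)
The plan is to bound the probability $\Pr(\GSv \subseteq \ESv_{\e})$ from both sides by unfolding the two-stage sampling performed by Algorithm~\ref{alg:DrawRandomSupport} and exploiting the specific form of the design distribution $\designPr{\errWeightVec}$ in~\eqref{eq:design_distribution}. First I would fix the error $\e$ with its rank profile $\errWeightVec_{\e}$ and write the success probability as a sum over all intermediate profiles that the algorithm may draw. Since Algorithm~\ref{alg:DrawRandomSupport} draws some $\errWeightVec' \in \wdecomp{\errWeight,\shots,\maxShotWeight}$ according to $\designPr{\cdot}$ and then deterministically sets $\subwtVec \gets \ucomp_\maxShotWeight(\errWeightVec', \subwt)$, the guessed support is a subspace of $\ESv_{\e}$ precisely with probability $\probProfRandP{\maxShotWeight}{\ucomp_\maxShotWeight(\errWeightVec', \subwt), \errWeightVec_{\e}}$, so
\begin{equation}
    \Pr(\GSv \subseteq \ESv_{\e}) = \sum_{\errWeightVec' \in \wdecomp{\errWeight,\shots,\maxShotWeight}} \designPr{\errWeightVec'} \cdot \probProfRandP{\maxShotWeight}{\ucomp_\maxShotWeight(\errWeightVec', \subwt), \errWeightVec_{\e}}.
\end{equation}

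For the \emph{upper} bound on $\Pr(\GSv \subseteq \ESv_{\e})^{-1}$ (equivalently, a lower bound on the probability), I would isolate the single term $\errWeightVec' = \errWeightVec_{\e}$ from the sum. Dropping all other nonnegative terms and substituting the definition~\eqref{eq:design_distribution} of $\designPr{\errWeightVec_{\e}}$, the chosen term becomes
\begin{equation}
    \designPr{\errWeightVec_{\e}} \cdot \probProfRandP{\maxShotWeight}{\ucomp_\maxShotWeight(\errWeightVec_{\e}, \subwt), \errWeightVec_{\e}} = \left(\probProfRandP{\maxShotWeight}{\ucomp_\maxShotWeight(\errWeightVec_{\e}, \subwt), \errWeightVec_{\e}} \cdot \qFactorSymRand_{\ell,\errWeight,\maxShotWeight}\right)^{-1} \cdot \probProfRandP{\maxShotWeight}{\ucomp_\maxShotWeight(\errWeightVec_{\e}, \subwt), \errWeightVec_{\e}} = \qFactorSymRand_{\ell,\errWeight,\maxShotWeight}^{-1},
\end{equation}
so that $\Pr(\GSv \subseteq \ESv_{\e}) \geq \qFactorSymRand_{\ell,\errWeight,\maxShotWeight}^{-1}$, which upon inversion gives the right-hand inequality. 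This matches the single-term bound already recorded in~\eqref{eq:upper_bound_iterations}.

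For the \emph{lower} bound on the probability (the left inequality), I would instead bound the full sum from above. The key observation is that for each fixed target profile $\errWeightVec_{\e}$, the quantity $\probProfRandP{\maxShotWeight}{\ucomp_\maxShotWeight(\errWeightVec', \subwt), \errWeightVec_{\e}}$ is at most $1$ since it is a product of subspace-containment probabilities in~\eqref{eq:phiPrime}, each lying in $[0,1]$. Hence every summand is bounded by $\designPr{\errWeightVec'}$, and because $\designPr{\cdot}$ is a valid \ac{PMF} summing to one over $\wdecomp{\errWeight,\shots,\maxShotWeight}$, the whole sum is at most $1$. A cleaner route, giving exactly the stated factor $|\wdecomp{\errWeight,\shots,\maxShotWeight}|^{-1}$, is to bound $\Pr(\GSv \subseteq \ESv_{\e}) \leq \max_{\errWeightVec'} \designPr{\errWeightVec'} \cdot \probProfRandP{\maxShotWeight}{\ucomp_\maxShotWeight(\errWeightVec', \subwt), \errWeightVec_{\e}}$ times $|\wdecomp{\errWeight,\shots,\maxShotWeight}|$, and to argue via the maximizing property of $\ucomp_\maxShotWeight$ (Lemma~\ref{lem:ucomp_maximizes_prob}) that this maximum is achieved at $\errWeightVec' = \errWeightVec_{\e}$, yielding $\Pr(\GSv \subseteq \ESv_{\e}) \leq |\wdecomp{\errWeight,\shots,\maxShotWeight}| \cdot \qFactorSymRand_{\ell,\errWeight,\maxShotWeight}^{-1}$ and hence the left inequality after inversion.

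The main obstacle I anticipate is the lower-bound direction: controlling the cross terms $\errWeightVec' \neq \errWeightVec_{\e}$ in the sum. The naive estimate $\probProfRandP{\maxShotWeight}{\cdot} \leq 1$ only yields $\Pr(\GSv \subseteq \ESv_{\e}) \leq 1$, which is too weak to produce the factor $|\wdecomp{\errWeight,\shots,\maxShotWeight}|^{-1}$. The decisive ingredient must therefore be the maximizing property of the mapping $\ucomp_\maxShotWeight$ proved in Lemma~\ref{lem:ucomp_maximizes_prob}, namely that $\probProfRandP{\maxShotWeight}{\ucomp_\maxShotWeight(\errWeightVec', \subwt), \errWeightVec_{\e}}$ is maximized over the relevant profiles when $\errWeightVec'$ aligns with the target $\errWeightVec_{\e}$; combining this with $\designPr{\errWeightVec'} \le \max \designPr{\cdot}$ and the crude cardinality bound on $\wdecomp{\errWeight,\shots,\maxShotWeight}$ should close the gap. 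I would verify carefully that the indices and the $\subwt \le \errWeight$ hypothesis make $\ucomp_\maxShotWeight$ well-defined throughout, so that both the term-isolation and the maximization steps are legitimate.
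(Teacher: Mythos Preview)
Your argument for the right-hand inequality is correct and matches the paper: isolating the summand $\errWeightVec' = \errWeightVec_{\e}$ and using~\eqref{eq:design_distribution} yields $\Pr(\GSv \subseteq \ESv_{\e}) \geq \qFactorSymRand_{\ell,\errWeight,\maxShotWeight}^{-1}$.

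For the left-hand inequality there is a genuine gap. You want to show that the summand $\designPr{\errWeightVec'} \cdot \probProfRandP{\maxShotWeight}{\ucomp_\maxShotWeight(\errWeightVec', \subwt), \errWeightVec_{\e}}$ is at most $\qFactorSymRand_{\ell,\errWeight,\maxShotWeight}^{-1}$ for every $\errWeightVec'$, and you appeal to Lemma~\ref{lem:ucomp_maximizes_prob}. But that lemma only says that $\ucomp_\maxShotWeight(\errWeightVec_{\e}, \subwt)$ maximizes $\probProfRandP{\maxShotWeight}{\,\cdot\,, \errWeightVec_{\e}}$ over sub-support profiles; it controls the \emph{second} factor of the product alone. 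Since $\designPr{\errWeightVec'}$ is inversely proportional to $\probProfRandP{\maxShotWeight}{\ucomp_\maxShotWeight(\errWeightVec', \subwt), \errWeightVec'}$, it can be much larger for $\errWeightVec' \neq \errWeightVec_{\e}$, and Lemma~\ref{lem:ucomp_maximizes_prob} does not rule out that the product is larger there. Your fallback of bounding the two factors separately by their respective maxima picks up an extra ratio $\max_{\errWeightVec}\probProfRandP{\maxShotWeight}{\ucomp_\maxShotWeight(\errWeightVec,\subwt),\errWeightVec}\big/\min_{\errWeightVec}\probProfRandP{\maxShotWeight}{\ucomp_\maxShotWeight(\errWeightVec,\subwt),\errWeightVec}$, which is in general $>1$ and destroys the stated bound. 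The paper instead uses a different inequality: for every $\errWeightVec'$,
\[
\probProfRandP{\maxShotWeight}{\ucomp_\maxShotWeight(\errWeightVec', \subwt), \errWeightVec_{\e}} \;\leq\; \probProfRandP{\maxShotWeight}{\ucomp_\maxShotWeight(\errWeightVec', \subwt), \errWeightVec'},
\]
which compares two \emph{targets} for the \emph{same} sub-support profile (varying the second argument of $\probProfRandP{\maxShotWeight}{\cdot,\cdot}$, not the first as in Lemma~\ref{lem:ucomp_maximizes_prob}). With this, each summand is at most $\designPr{\errWeightVec'}\cdot\probProfRandP{\maxShotWeight}{\ucomp_\maxShotWeight(\errWeightVec', \subwt), \errWeightVec'} = \qFactorSymRand_{\ell,\errWeight,\maxShotWeight}^{-1}$, and summing over $|\wdecomp{\errWeight,\shots,\maxShotWeight}|$ terms gives the left-hand inequality.
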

\begin{proof}
Denote by $\designPr{\errWeightVec}$ the distribution of $\errWeightVec = \ucomp_\maxShotWeight(\ESv_{\e})$, where $\errWeightVec$ is a random variable with probability mass function $\designPr{\errWeightVec}$ as defined in \eqref{eq:design_distribution}. By the law of total probability, we have
\begin{align*}
    \Pr( \GSv\subseteq \ESv_{\e} ) &= \sum_{\errWeightVec \in \wdecomp{\errWeight,\shots,\maxShotWeight}} \designPr{\errWeightVec}\cdot \probProfRandP{\maxShotWeight}{\ucomp_\maxShotWeight(\errWeightVec, \subwt), \errWeightVec_{\e}} \\
    &\geq \designPr{\errWeightVec_{\e}}\cdot \probProfRandP{\maxShotWeight}{\ucomp_\maxShotWeight(\errWeightVec_{\e}, \subwt), \errWeightVec_{\e}} \\
    &= \qFactorSymRand_{\ell,\errWeight,\maxShotWeight}^{-1},
\end{align*}
where the last equality follows from the definition of $\designPr{\errWeightVec}$ in \eqref{eq:design_distribution}. This proves the upper bound on $\Pr(\GSv \subseteq \ESv_{\e})^{-1}$. For the lower bound, we observe that for all $\errWeightVec \in \wdecomp{\errWeight,\shots,\maxShotWeight}$
\[
     \probProfRandP{\maxShotWeight}{\ucomp_\maxShotWeight(\errWeightVec, \subwt), \errWeightVec_{\e}} \leq \probProfRandP{\maxShotWeight}{\ucomp_\maxShotWeight(\errWeightVec, \subwt), \errWeightVec},
\]
which yields
\begin{align*}
    \Pr(\GSv \subseteq \ESv_{\e} ) &= \sum_{\errWeightVec \in \wdecomp{\errWeight,\shots,\maxShotWeight}} \designPr{\errWeightVec}\cdot \probProfRandP{\maxShotWeight}{\ucomp_\maxShotWeight(\errWeightVec, \subwt), \errWeightVec_{\e}} \\
    &\leq \sum_{\errWeightVec \in \wdecomp{\errWeight,\shots,\maxShotWeight}} \designPr{\errWeightVec}\cdot \probProfRandP{\maxShotWeight}{\ucomp_\maxShotWeight(\errWeightVec, \subwt), \errWeightVec} \\
    &= \qFactorSymRand_{\ell,\errWeight,\maxShotWeight},
\end{align*}
where the last equality follows from the definitions of $\qFactorSymRand_{\ell,\errWeight,\maxShotWeight}$ in \eqref{eq:q_factor_def} and the design distribution $\designPr{\errWeightVec}$ in \eqref{eq:design_distribution}, which proves the claim.
\end{proof}

Using Proposition \ref{prop:expected_iterations_bounds}, we can formulate the following theorem about the expected runtime of the genie-aided version of Algorithm \ref{alg:randomized_sr_decoder}.

\begin{theorem}\label{thm:expected_runtime_bounds_randomized}
Consider a genie-aided version of Algorithm \ref{alg:randomized_sr_decoder} for an \ac{LRS} code $\LRSCode$ of length $n$, dimension $k$, and length partition $\n$ with constant block length $n_i = \shotLength$ for all $i \inshots$. Let $\e \in \Fqm^{n}$ be an error of sum-rank weight $\UniqueDecRad < \errWeight \leq n-k$, and let $\c \in \LRSCode$ be a codeword. We consider the success event of the algorithm returning the originally transmitted codeword $\c$ when given input $\y = \e + \c$ and parameter $\subwt$ with $\subwt = 2\errEx$.

Each iteration of Algorithm \ref{alg:randomized_sr_decoder} costs $\WiterRand$. By including also the expected number of iterations, we can bound the overall expected runtime $\Wrand$ of the genie-aided version of Algorithm \ref{alg:randomized_sr_decoder} by
\begin{equation*}
    \WrandLB \leq \Wrand \leq \WrandUB,
\end{equation*}
where, for $\maxShotWeight = \min\{\shotLength, m\}$, we define (see \eqref{eq:q_factor_def} for $\qFactorSymRand_{\ell,\errWeight,\maxShotWeight}$)
\begin{align*}
    \WrandLB &\defeq |\wdecomp{\errWeight,\shots,\maxShotWeight}|^{-1}\cdot\qFactorSymRand_{\ell,\errWeight,\maxShotWeight},  \\
    \WrandUB &\defeq \WiterRand\cdot\qFactorSymRand_{\ell,\errWeight,\maxShotWeight}.
\end{align*}
\end{theorem}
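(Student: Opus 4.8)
The plan is to recognize that the genie-aided expected runtime factorizes as the per-iteration cost $\WiterRand$ times the expected number of iterations, and that the latter is precisely the quantity already controlled by Proposition~\ref{prop:expected_iterations_bounds}. Thus the theorem will follow almost immediately once I pin down the per-iteration success event and justify that the iteration count is geometrically distributed.

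First I would establish that, in the regime $\subwt = 2\errEx$, a single iteration succeeds exactly when the guessed support is contained in the error support, i.e. when $\GSv \subseteq \ESv_{\e}$. Substituting $\subwt = 2\errEx$ into the threshold $\minEps = \errEx + \subwt/2$ from~\eqref{eq:minEpsDef} gives $\minEps = 2\errEx = \subwt$, so the decoding-success condition $\sdInter \geq \minEps$ becomes $\sdInter \geq \subwt$. Since the intersection dimension can never exceed the dimension of the guessed support, $\sdInter \leq \subwt$, we are forced to have $\sdInter = \subwt$, which is equivalent to full containment $\GSv \subseteq \ESv_{\e}$. Consequently the per-iteration success probability equals $\Pr(\GSv \subseteq \ESv_{\e})$, matching the single-term evaluation of~\eqref{eq:prob_mu_total_rand} that gives $\probProfRandP{\maxShotWeight}{\subwtVec,\errWeightVec}$ in~\eqref{eq:phiPrime}.

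Next I would argue that, because each iteration of Algorithm~\ref{alg:randomized_sr_decoder} draws a fresh support independently and the genie stops the loop precisely at the first iteration recovering $\c$, the number of iterations is geometric with parameter $\Pr(\GSv \subseteq \ESv_{\e})$. Hence the expected iteration count equals $\Pr(\GSv \subseteq \ESv_{\e})^{-1}$, and the expected runtime is
\[
    \Wrand = \WiterRand \cdot \Pr(\GSv \subseteq \ESv_{\e})^{-1}.
\]
Applying the two-sided bound of Proposition~\ref{prop:expected_iterations_bounds},
\[
    |\wdecomp{\errWeight,\shots,\maxShotWeight}|^{-1}\qFactorSymRand_{\ell,\errWeight,\maxShotWeight} \leq \Pr(\GSv \subseteq \ESv_{\e})^{-1} \leq \qFactorSymRand_{\ell,\errWeight,\maxShotWeight},
\]
and multiplying through by the positive factor $\WiterRand$ yields the claimed bounds $\WrandLB \leq \Wrand \leq \WrandUB$.

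The bulk of the technical effort has already been absorbed into Proposition~\ref{prop:expected_iterations_bounds} and the construction of the mapping $\ucomp_\maxShotWeight$ and design distribution $\designPr{\errWeightVec}$ behind it, so the main thing to get right is the reduction of the decoding-success event to the containment event $\GSv \subseteq \ESv_{\e}$ in the $\subwt = 2\errEx$ regime; the remaining geometric-count and substitution steps are routine. One point worth stating explicitly is why the genie-aided assumption is indispensable: without it an iteration could return an alternative codeword $\c' \neq \c$, so the stopping rule would no longer coincide with the event $\GSv \subseteq \ESv_{\e}$, and the clean geometric count—hence the exactness of the expected-iteration reciprocal—would break, leaving only the upper bound valid for the non-genie-aided decoder.
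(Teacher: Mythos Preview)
Your proposal is correct and follows essentially the same route as the paper's proof, which simply states that the bounds follow from Proposition~\ref{prop:expected_iterations_bounds} by multiplying the per-iteration cost by the expected number of iterations. You supply the details the paper omits---namely the reduction of the success event to $\GSv \subseteq \ESv_{\e}$ via $\minEps = \subwt$ when $\subwt = 2\errEx$, and the geometric-distribution justification---so your argument is a faithful expansion of the paper's one-line proof.
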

\begin{proof}
The bounds follow directly from Proposition \ref{prop:expected_iterations_bounds} by multiplying the cost of a single iteration $\WiterWC$ by the expected number of iterations:
\begin{align*}
    \WrandLB &\defeq |\wdecomp{\errWeight,\shots,\maxShotWeight}|^{-1}\cdot\qFactorSymRand_{\ell,\errWeight,\maxShotWeight} \leq \Wrand \\
    &\leq \WiterWC\cdot\qFactorSymRand_{\ell,\errWeight,\maxShotWeight} \eqdef \WrandUB.
\end{align*}

\end{proof}

\begin{remark}
The complexity of one iteration $\WiterRand$ in Theorem~\ref{thm:expected_runtime_bounds_randomized} is determined by two main components. First, the support drawing algorithm from~\cite{puchingerGenericDecodingSumRank2022} can be easily adapted to our case, which yields a complexity of $\ComplSupportDrawingHeu$ bit operations. 

Second, the overall complexity $\WiterRand$ is then the sum of the complexity of the support drawing algorithm and the complexity of the error and erasure decoder, which is of the order of $\oh{n^2 m^2}$ operations over $\Fq$ (see~\eqref{eq:defWiterEnE}). Similar to the complexity analysis for the worst-case scenario in the generic decoding algorithm, we approximate $\WiterRand$ with the two dominating terms in each of the complexities as $\WiterRand \approx n^3 m^2$ when we plot or evaluate the complexities.
\end{remark}

To evaluate the bounds from Theorem \ref{thm:expected_runtime_bounds_randomized}, we must compute $\qFactorSymRand_{\ell,\errWeight,\maxShotWeight}$. Direct computation using \eqref{eq:q_factor_def} is infeasible, as the number of summands $|\wdecomp{\errWeight,\shots,\maxShotWeight}|$ can grow super-polynomially with $\errWeight$, depending on $\shots$ and $\maxShotWeight$. In Appendix~\ref{sec:appendix-3}, we show, following \cite[Lemma 22]{puchingerGenericDecodingSumRank2022}, how to compute this efficiently in $\softoh{\errWeight\subwt n^3 \maxShotWeight^3 \log_2(q)}$ bit operations.

\subsection{Average Complexity}\label{sec:average complexity for randomized decoder}

We analyze the average complexity of the randomized decoding algorithm over all possible error vectors $\e$ with sum-rank weight $\errWeight$, similar to the analysis for the generic decoder in Section~\ref{sec:generic_decoding_beyond}.
Although Algorithm~\ref{alg:randomized_sr_decoder} operates on an \ac{LRS} code with significant structure, we use random coding arguments akin to the generic decoding approach to estimate the average success probability when decoding beyond the unique decoding radius. This accounts for the additional codeword solutions that may appear in this regime. Note that since \ac{LRS} codes are not random codes, applying random coding arguments only yields an approximation for the lower bound.

Adapting Theorem~\ref{thm:success_probability_bounds}, we replace the expression for the success probability of one iteration of the decoding loop to obtain bounds for our randomized decoder.

\begin{corollary}
\label{cor:success_probability_bounds_rand}
Let $\mycode{C}$ be a random $\Fqm$-linear code of length $n$ and dimension $k$ over $\Fqm$, where each codeword is drawn uniformly at random from $\Fqm^n$. Suppose the received word $\y = \c + \e$, where $\c \in \mycode{C}$ and $\e \in \Fqm^n$ with $\SumRankWeight(\e) = \errWeight$. Assume we have an error-and-erasure decoder that can correct combinations of errors and erasures up to the condition in \eqref{eq:deccond}. Then, the success probability of Algorithm~\ref{alg:randomized_sr_decoder} to output at least one solution satisfies:
\begin{equation}
    \Pr[\text{success}] \geq \frac{1}{|\errorSet_{q,\shotLength,m,\shots}(\errWeight)|} \sum_{\subwt'=\errWeight}^{\smax} \sum_{\errWeightVec \in \wdecomp{\errWeight,\shots,\maxShotWeight}} \probProfavgRand{q, \maxShotWeight,\subwt'}{\errWeightVec} \prod_{i=1}^{\shots} \NMq{q}{m,\shotLength,\errWeight_i},
\end{equation}
and
\begin{equation}
    \Pr[\text{success}] \leq \left( \frac{1}{|\errorSet_{q,\shotLength,m,\shots}(\errWeight)|} + q^{m(k-n)} \right) \sum_{\subwt'=\errWeight}^{\smax} \sum_{\errWeightVec \in \wdecomp{\errWeight,\shots,\maxShotWeight}} \probProfavgRand{q, \maxShotWeight,\subwt'}{\errWeightVec} \prod_{i=1}^{\shots} \NMq{q}{m,\shotLength,\errWeight_i}.
\end{equation}
\end{corollary}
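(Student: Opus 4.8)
The plan is to mirror the proof of Theorem~\ref{thm:success_probability_bounds} essentially verbatim, replacing the per-iteration success probability of the generic decoder, $\probProfavg{q,\maxShotWeight,\supwt'}{\errWeightVec}$, with that of the randomized decoder, $\probProfavgRand{q,\maxShotWeight,\subwt'}{\errWeightVec}$, as defined in~\eqref{eq:errPgivenU}. The whole argument rests on two ingredients: (i) an analog of Theorem~\ref{thm:genDecSP_unique} giving the probability $\ProbOf{\uniqueEventRand}$ that a single iteration returns the transmitted codeword $\c$, and (ii) an analog of the random coding union bound in Theorem~\ref{thm:rc_bound_gen} controlling the probability $\ProbOf{\altEventRand}$ that a single iteration returns some alternative codeword $\c' \neq \c$.

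For ingredient (i), I would average the per-profile success probability over the error-weight distribution. By the channel model, the rank profile $\errWeightVec$ of the uniformly drawn error occurs with probability~\eqref{eq:probability of errWeightVec for uniform error}, and, conditioned on $\errWeightVec$, a single iteration of Algorithm~\ref{alg:randomized_sr_decoder} recovers $\c$ with probability $\probProfavgRand{q,\maxShotWeight,\subwt'}{\errWeightVec}$, which is exactly the quantity assembled in Lemma~\ref{lemma:success probability of randomized given w and u} and~\eqref{eq:errPgivenU}, since $\c$ is decodable precisely when the guessed support overlaps the error support enough to satisfy~\eqref{eq:deccond}. Summing over all profiles and over the guess-weight index $\subwt'$ yields $\ProbOf{\uniqueEventRand}$ in the same closed form as in Theorem~\ref{thm:genDecSP_unique} but with $\probProfavgRand$ in place of $\probProfavg$. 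The lower bound then follows immediately from $\ProbOf{\text{success}} \geq \ProbOf{\uniqueEventRand}$, because the transmitted $\c$ is always a valid output.

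For ingredient (ii), I would reuse the random coding argument of Theorem~\ref{thm:rc_bound_gen}. Each alternative codeword $\c_j \neq \c$ is uniform over $\Fqm^n$, so $\e_j \defeq \y - \c_j$ is uniform as well; the probability that $\e_j$ has a prescribed rank profile $\errWeightVec$ is $q^{-mn}\prod_{i=1}^{\shots}\NMq{q}{m,\shotLength,\errWeight_i}$, and, conditioned on that profile, the iteration decodes to $\c_j$ with probability $\probProfavgRand{q,\maxShotWeight,\subwt'}{\errWeightVec}$. Summing over profiles and applying the union bound over the $q^{mk}-1$ alternatives produces $\ProbOf{\altEventRand} \leq q^{m(k-n)}\sum_{\subwt'}\sum_{\errWeightVec}\probProfavgRand{q,\maxShotWeight,\subwt'}{\errWeightVec}\prod_i \NMq{q}{m,\shotLength,\errWeight_i}$. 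Combining via $\ProbOf{\text{success}} \leq \ProbOf{\uniqueEventRand} + \ProbOf{\altEventRand}$ and factoring out the common double sum gives the stated upper bound.

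The main obstacle is conceptual rather than computational: I must verify that the single-iteration decoding event for the randomized decoder genuinely depends only on the rank profile of the residual $\y - \c_j$, so that the random coding average collapses to $\probProfavgRand$. Unlike the generic decoder, success here is governed by the intersection dimension $\sdInter = \sumDim(\GSv \cap \ESv)$ and the error-and-erasure threshold~\eqref{eq:deccond}, not by containment of supports; Lemma~\ref{lemma:success probability of randomized given w and u} supplies exactly the profile-only dependence needed. A secondary caveat, already flagged in the text, is that \ac{LRS} codes are not random, so the random coding step is only an approximation for the lower bound in practice; within the corollary's stated assumption that $\mycode{C}$ is a random code, however, every step is exact.
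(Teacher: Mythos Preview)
Your proposal is correct and matches the paper's own proof, which consists of a single sentence stating that the argument of Theorem~\ref{thm:success_probability_bounds} carries over verbatim upon replacing $\probProfavg{q,\maxShotWeight,\supwt'}{\errWeightVec}$ by $\probProfavgRand{q,\maxShotWeight,\subwt'}{\errWeightVec}$. Your additional remark that Lemma~\ref{lemma:success probability of randomized given w and u} is what guarantees the needed profile-only dependence of the randomized decoder's success event is a useful clarification that the paper leaves implicit.
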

\begin{proof}
    The proof follows the same steps as in Theorem~\ref{thm:success_probability_bounds}, replacing $\probProfavg{q,\maxShotWeight, \supwt'}{\errWeightVec}$ with $\probProfavgRand{q, \maxShotWeight,\subwt'}{\errWeightVec}$.
\end{proof}

\subsection{Optimizing the Support-Drawing Distribution in Randomized Decoding of \ac{LRS} Codes}\label{sec:heuristic approach for randomized decoder}

We propose a heuristic approach to optimize the support-drawing distribution used in Algorithm~\ref{alg:randomized_sr_decoder}. Inspired by the asymptotic analysis in Section~\ref{sec:new heuristic approach for generic decoding for rcu bound}, we aim at maximizing the average intersection between the guessed support and the actual error support, thereby increasing the probability of successful decoding.

To simplify the optimization, we consider an asymptotic setting where the number of blocks $\shots$ tends to infinity. In this context, we approximate the rank weight distributions of the error and the guessed support by their marginal distributions for a single block, assuming they are independently and identically distributed across blocks.

Let $\pum{\subwt} \defeq \Pr[\dim(\GS^{(i)}) = \subwt]$ denote the marginal distribution of the rank weight of the guessed support $\GS^{(i)}$ for a single block $i$, where $\subwt \in \{0, \dots, \maxShotWeight\}$. The joint distribution of the guessing rank profile $\subwtVec$ is then
\begin{equation}\label{eq:rand_sup_pmf_iid}
    \pu{\subwtVec} = \prod_{i=1}^{\shots} \pum{\subwt_i}.
\end{equation}

We introduce the random variable $Z$, representing the dimension of the intersection between the guessed support and the error support for a single block. Our objective is to maximize the expected value $\mathbb{E}[Z]$, as a larger average intersection increases the probability of successful decoding with an error-and-erasure decoder, which requires a sufficiently large intersection dimension (see~\eqref{eq:minEpsDef}). The expectation $\mathbb{E}[Z]$ can be computed as
\begin{equation}\label{eq:maximize_term_rand}
    \mathbb{E}[Z] = \sum_{\epsilon=0}^{\maxShotWeight} \sum_{\errWeight'=0}^{\maxShotWeight} \sum_{\subwt'=0}^{\maxShotWeight} \pum{\subwt} \cdot \Pr[\errWeight'] \cdot \intersectprob{\maxShotWeight}{\errWeight', \subwt}(\epsilon),
\end{equation}
where $\Pr[\errWeight']$ is the marginal distribution of the error rank weight for a single block.

The distribution $\pumvec$ can be optimized using \ac{LP} methods to maximize $\mathbb{E}[Z]$, similar to the approach in Section \ref{sec:new heuristic approach for generic decoding for rcu bound}. 

The following theorem provides bounds on the expected runtime of the randomized sum-rank decoder (Algorithm~\ref{alg:randomized_sr_decoder}) based on the success probability bounds derived in Corollary~\ref{cor:success_probability_bounds_rand}.
\begin{theorem}
\label{thm:complexity_bounds_rand}
Under the same assumptions as in Corollary~\ref{cor:success_probability_bounds_rand}, the overall expected runtime $\WrandRCU$ of Algorithm~\ref{alg:randomized_sr_decoder} to output at least one solution is bounded by
\begin{equation}
    \WrandRCULB \leq \WrandRCU \leq \WrandRCUUB,
\end{equation}
with
\begin{equation}\label{eq:WrandRCULB}
    \WrandRCULB \defeq \WiterEnE \left(\left( \frac{1}{|\errorSet_{q,\shotLength,m,\shots}(\errWeight)|} + q^{m(k-n)} \right) \sum_{\subwt=\errWeight}^{\smax} C_{q, m, \shotLength}(\shots, \errWeight, \subwt)\right)^{-1},
\end{equation}
and
\begin{equation}\label{eq:WrandRCUUB}
    \WrandRCUUB \defeq \WiterEnE \left(\frac{1}{|\errorSet_{q,\shotLength,m,\shots}(\errWeight)|}\sum_{\subwt=\errWeight}^{\smax} C_{q, m, \shotLength}(\shots, \errWeight, \subwt)\right)^{-1},
\end{equation}
where
\begin{equation}\label{eq:defC}
    C_{q, m, \shotLength}(\shots, \errWeight, \subwt) \defeq \sum_{\sdInter=\minEps}^{\min\{\subwt,\errWeight\}} \sum_{\errWeightVec \in \wdecomp{\errWeight,\shots,\maxShotWeight}} \sum_{\subwtVec\in\wdecomp{\subwt,\shots,\maxShotWeight}}\sum_{\sdInterVec\in\wdecomp{\sdInter,\shots,\maxShotWeight}} \prod_{i=1}^{\shots} \pum{\subwt_i} \cdot \intersectprob{\maxShotWeight}{\errWeight_i, \subwt_i}(\sdInter_i) \cdot \NMq{q}{m,\shotLength,\errWeight_i},
\end{equation}
and $\WiterEnE\in\oh{n^2 m^2}$ over $\Fq$ denotes the complexity of the error and erasure decoder from~\cite{hormannErrorErasureDecodingLinearized2022} as defined in Section~\ref{sec:randomized decoding algorithm}.
\end{theorem}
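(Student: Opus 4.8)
The plan is to follow the same template as the proof of Theorem~\ref{thm:complexity_bounds}, since the only structural change is that the per-iteration success probability now comes from the randomized error-and-erasure decoder rather than the generic one. The number of iterations of the decoding loop is geometrically distributed, so its expectation equals the reciprocal of the single-iteration success probability; multiplying by the worst-case cost $\WiterEnE$ of one error-and-erasure decoding attempt (which is $\oh{n^2 m^2}$ over $\Fq$ by~\cite{hormannErrorErasureDecodingLinearized2022}) yields the expected runtime. Thus it suffices to substitute the two bounds on $\Pr[\text{success}]$ from Corollary~\ref{cor:success_probability_bounds_rand} into this reciprocal: using the upper bound on $\Pr[\text{success}]$ gives the lower runtime bound $\WrandRCULB$, and using the lower bound on $\Pr[\text{success}]$ gives the upper runtime bound $\WrandRCUUB$.

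The one piece of genuine bookkeeping is to identify the inner double sum appearing in Corollary~\ref{cor:success_probability_bounds_rand} with the quantity $C_{q,m,\shotLength}(\shots,\errWeight,\subwt)$ defined in~\eqref{eq:defC}. To establish this, I would first expand $\probProfavgRand{q,\maxShotWeight,\subwt}{\errWeightVec}$ via its definition in~\eqref{eq:errPgivenU} as a sum over guessing rank profiles $\subwtVec$ weighted by $\pu{\subwtVec}$, then substitute $\probProfRand{\maxShotWeight}{\subwtVec,\errWeightVec}$ from~\eqref{eq:prob_mu_total_rand} and replace each term $\Pr[S_{\sdInter}\mid\subwtVec,\errWeightVec]$ using the explicit rank-profile sum from Proposition~\ref{prop:success probability of randomized given w and u as sum}. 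Finally, applying the independence factorization $\pu{\subwtVec}=\prod_{i=1}^{\shots}\pum{\subwt_i}$ from~\eqref{eq:rand_sup_pmf_iid} and interchanging the finitely many sums shows that
\begin{equation*}
\sum_{\errWeightVec\in\wdecomp{\errWeight,\shots,\maxShotWeight}} \probProfavgRand{q,\maxShotWeight,\subwt}{\errWeightVec}\prod_{i=1}^{\shots}\NMq{q}{m,\shotLength,\errWeight_i} = C_{q,m,\shotLength}(\shots,\errWeight,\subwt).
\end{equation*}

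With this identity in hand, the success-probability bounds of Corollary~\ref{cor:success_probability_bounds_rand} read exactly as the claimed lower and upper bounds with the inner sum replaced by $C_{q,m,\shotLength}(\shots,\errWeight,\subwt')$ summed over $\subwt'$ from $\errWeight$ to $\smax$. Taking reciprocals and multiplying by $\WiterEnE$ then reproduces $\WrandRCULB$ and $\WrandRCUUB$ verbatim, completing the argument. I do not expect a real obstacle: the reciprocal-of-success-probability step is identical to that in Theorem~\ref{thm:complexity_bounds}, and the only subtlety to watch is keeping the summation ranges over $\subwt'$ consistent between the corollary and the definitions of $\WrandRCULB$ and $\WrandRCUUB$, together with correctly attributing the per-iteration cost to the error-and-erasure decoder.
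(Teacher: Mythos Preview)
Your proposal is correct and mirrors the paper's own proof, which simply states that the argument follows Theorem~\ref{thm:complexity_bounds} with the success-probability bounds replaced by those of Corollary~\ref{cor:success_probability_bounds_rand}, unpacked via the definitions in~\eqref{eq:success probability of randomized given w and u as sum}, \eqref{eq:prob_mu_total_rand}, \eqref{eq:errPgivenU}, and~\eqref{eq:rand_sup_pmf_iid}, together with the per-iteration cost $\WiterEnE$. In fact, your write-up is more explicit than the paper's about the bookkeeping needed to identify the inner sum with $C_{q,m,\shotLength}(\shots,\errWeight,\subwt)$.
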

\begin{proof}
    The proof follows similar arguments as in Theorem~\ref{thm:complexity_bounds}. The main difference is that the bounds on the success probability are replaced by the expressions derived in Corollary~\ref{cor:success_probability_bounds_rand}, which involve the definitions from~\eqref{eq:success probability of randomized given w and u as sum}, \eqref{eq:prob_mu_total_rand}, \eqref{eq:errPgivenU}, and \eqref{eq:rand_sup_pmf_iid}. The complexity of one iteration of Algorithm~\ref{alg:randomized_sr_decoder} is given by $\WiterEnE$, which is the complexity of the error and erasure decoder used in the randomized algorithm.
\end{proof}

\begin{remark}
    The function $C_{q, m, \shotLength}(\shots, \errWeight, \subwt)$ plays a crucial role in determining the complexity bounds of the randomized sum-rank syndrome decoder. It can be computed efficiently using a dynamic programming routine similar to Algorithm~\ref{alg:DP_gen} in polynomial time.
\end{remark}

\section{Numerical Results}\label{sec:numerical_results}

We compare the performance of the randomized decoding algorithm for \ac{LRS} codes with the generic decoder . Figures~\ref{fig:complexity_randomized_comparison_1} and~\ref{fig:complexity_randomized_comparison_2} illustrate the expected complexities for both algorithms under two different parameter sets, ensuring that the total number of bits, calculated as $m \log_2(q) = 144$, remains constant.

In both figures, we set $n=48$ and $k=24$, resulting in a minimum sum-rank distance $d_{\text{min}} = 25$, and a unique decoding radius $\UniqueDecRad = \left\lfloor \frac{n - k}{2} \right\rfloor = 12$. We consider errors with sum-rank weight $\errWeight = \UniqueDecRad + \errEx = 13$, where the error excess is $\errEx = 1$.

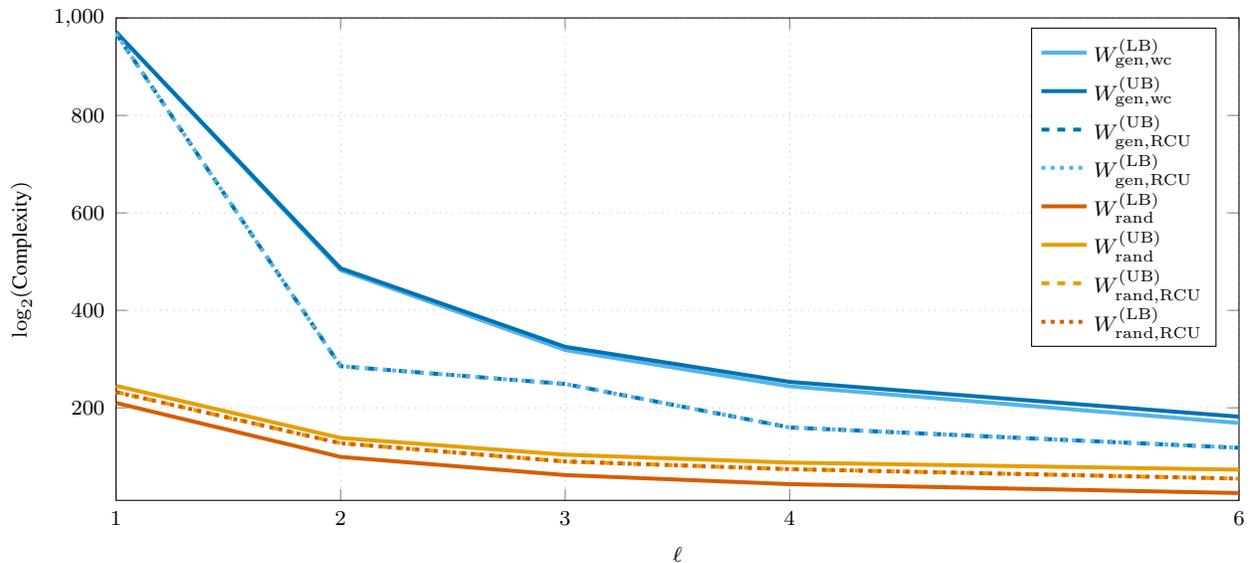
\begin{figure}[htbp]
  \centering
    \begin{tikzpicture}
  \begin{axis}[
      width=\linewidth,
      height=8cm,
      grid=major, 
      grid style={dotted,gray!50},
      xlabel={$\shots$},
      ylabel={$\log_2(\text{Complexity})$},
      xlabel style={font=\scriptsize},
      ylabel style={font=\scriptsize},
      xmin=1,
      xmax=6,
      ymin=10,
      ymax=1000,
      xtick={1,2,3,4,6}, %
      xticklabel style={font=\scriptsize},
      yticklabel style={font=\scriptsize},
      legend style={
          at={(0.98,0.98)}, 
          anchor=north east, 
          legend columns=1, 
          font=\scriptsize
      },
      legend cell align=left,
    ]

    \addplot[WgenLB] 
    table[x=L, y=WF, col sep=comma] {data/results/W_QwcLB_gen_N48_K24_q8_m48_t13_s24.txt};
    \addlegendentry{$\WgenLB$}
    
    \addplot[WgenUB]
    table[x=L, y=WF, col sep=comma] {data/results/W_QwcUB_gen_N48_K24_q8_m48_t13_s24.txt};
    \addlegendentry{$\WgenUB$}
    
    \addplot[WgenRCUUB]
    table[x=L, y=WF, col sep=comma] {data/results/W_heuvecavg_gen_N48_K24_q8_m48_t13_s24.txt};
    \addlegendentry{$\WgenRCUUB$}
        
    \addplot[WgenRCULB]
    table[x=L, y=WF, col sep=comma] {data/results/W_rcveclb_gen_N48_K24_q8_m48_t13_s24.txt};
    \addlegendentry{$\WgenRCULB$}

    \addplot[WrandLB] 
    table[x=L, y=WF, col sep=comma] {data/results/W_QwcLB_rand_N48_K24_q8_m48_t13_u2.txt};
    \addlegendentry{$\WrandLB$}
    
    \addplot[WrandUB]
    table[x=L, y=WF, col sep=comma] {data/results/W_QwcUB_rand_N48_K24_q8_m48_t13_u2.txt};
    \addlegendentry{$\WrandUB$}
    
    \addplot[WrandRCUUB]
    table[x=L, y=WF, col sep=comma] {data/results/WF_heuvecavg_rand_N48_K24_q8_m48_t13_u2.txt};
    \addlegendentry{$\WrandRCUUB$}
        
    \addplot[WrandRCULB]
    table[x=L, y=WF, col sep=comma] {data/results/W_rcveclb_rand_N48_K24_q8_m48_t13_u2.txt};
    \addlegendentry{$\WrandRCULB$}

  \end{axis}    
\end{tikzpicture}
    \vspace{-3em}
    \caption{Complexity comparison of generic decoding vs. randomized decoding beyond the unique decoding radius for parameters: $q=2^3$, $m=48$, $n=48$, $k=24$, $\errWeight=13$, $u=2$, $v=24$.}
    \label{fig:complexity_randomized_comparison_1}
\end{figure}

\begin{figure}[htbp]
  \centering
    \begin{tikzpicture}
  \begin{axis}[
      width=\linewidth,
      height=8cm,
      grid=major, 
      grid style={dotted,gray!50},
      xlabel={$\shots$},
      ylabel={$\log_2(\text{Complexity})$},
      xlabel style={font=\scriptsize},
      ylabel style={font=\scriptsize},
      xmin=2,
      xmax=48,
      ymin=10,
      ymax=1000,
      xtick={1,2,3,4,6,8,12,16,24,48},
      xticklabel style={font=\scriptsize},
      yticklabel style={font=\scriptsize},
      legend style={
          at={(0.99,0.98)}, 
          anchor=north east, 
          legend columns=1, 
          font=\scriptsize
      },
      legend cell align=left,
    ]

    \addplot[WgenLB] 
    table[x=L, y=WF, col sep=comma] {data/results/W_QwcLB_gen_N48_K24_q64_m24_t13_s24.txt};
    \addlegendentry{$\WgenLB$}

    \addplot[WgenUB]
    table[x=L, y=WF, col sep=comma] {data/results/W_QwcUB_gen_N48_K24_q64_m24_t13_s24.txt};
    \addlegendentry{$\WgenUB$}

    \addplot[WgenRCUUB]
    table[x=L, y=WF, col sep=comma] {data/results/W_heuvecavg_gen_N48_K24_q64_m24_t13_s24.txt};
    \addlegendentry{$\WgenRCUUB$}

    \addplot[WgenRCULB]
    table[x=L, y=WF, col sep=comma] {data/results/W_rcveclb_gen_N48_K24_q64_m24_t13_s24.txt};
    \addlegendentry{$\WgenRCULB$}

    \addplot[WrandLB] 
    table[x=L, y=WF, col sep=comma] {data/results/W_QwcLB_rand_N48_K24_q64_m24_t13_u2.txt};
    \addlegendentry{$\WrandLB$}

    \addplot[WrandUB]
    table[x=L, y=WF, col sep=comma] {data/results/W_QwcUB_rand_N48_K24_q64_m24_t13_u2.txt};
    \addlegendentry{$\WrandUB$}

    \addplot[WrandRCUUB]
    table[x=L, y=WF, col sep=comma] {data/results/WF_heuvecavg_rand_N48_K24_q64_m24_t13_u2.txt};
    \addlegendentry{$\WrandRCUUB$}

    \addplot[WrandRCULB]
    table[x=L, y=WF, col sep=comma] {data/results/W_rcveclb_rand_N48_K24_q64_m24_t13_u2.txt};
    \addlegendentry{$\WrandRCULB$}

    \addplot[Wprange]
        coordinates {(48, 46.3327412190881)};
    \addlegendentry{$W_\text{Prange}$}

    \node (tr) at (axis cs:48,140) {}; %
    \node (br) at (axis cs:16,140) {};  %

    \draw[black, thick, dotted] (axis cs:16,10) rectangle (axis cs:48,140);

  \end{axis}

  \begin{scope}[shift={(4cm,2cm)}] %
    \begin{axis}[
        width=7.5cm,
        height=5cm,
        grid=major,
        grid style={dotted,gray!50},
        xlabel={$\shots$},
        ylabel={$\log_2(\text{Complexity})$},
        xlabel style={font=\scriptsize},
        ylabel style={font=\scriptsize},
        xmin=16, xmax=48,
        ymin=0, ymax=140,
        xtick={1,2,3,4,6,8,12,16,24,48},
        ytick={0,20,40,60,80,100, 120, 140},
        xticklabel style={font=\scriptsize},
        yticklabel style={font=\scriptsize},
        legend style={draw=none}, %
        legend cell align=left,
        clip=true, %
    ]

      \addplot[WgenLB] 
      table[x=L, y=WF, col sep=comma] {data/results/W_QwcLB_gen_N48_K24_q64_m24_t13_s24.txt};

      \addplot[WgenUB]
      table[x=L, y=WF, col sep=comma] {data/results/W_QwcUB_gen_N48_K24_q64_m24_t13_s24.txt};

      \addplot[WgenRCUUB]
      table[x=L, y=WF, col sep=comma] {data/results/W_heuvecavg_gen_N48_K24_q64_m24_t13_s24.txt};

      \addplot[WgenRCULB]
      table[x=L, y=WF, col sep=comma] {data/results/W_rcveclb_gen_N48_K24_q64_m24_t13_s24.txt};

      \addplot[WrandLB] 
      table[x=L, y=WF, col sep=comma] {data/results/W_QwcLB_rand_N48_K24_q64_m24_t13_u2.txt};

      \addplot[WrandUB]
      table[x=L, y=WF, col sep=comma] {data/results/W_QwcUB_rand_N48_K24_q64_m24_t13_u2.txt};

      \addplot[WrandRCUUB]
      table[x=L, y=WF, col sep=comma] {data/results/WF_heuvecavg_rand_N48_K24_q64_m24_t13_u2.txt};

      \addplot[WrandRCULB]
      table[x=L, y=WF, col sep=comma] {data/results/W_rcveclb_rand_N48_K24_q64_m24_t13_u2.txt};

        \addplot[Wprange]
        coordinates {(48, 46.3327412190881)};

      \node (inset_tr) at (axis cs:48,0) {}; %
      \node (inset_br) at (axis cs:16,0) {};  %

    \end{axis}
  \end{scope}

  \draw[dotted, thick] 
      (tr) -- (inset_tr); %

  \draw[dotted, thick] 
      (br) -- (inset_br); %

\end{tikzpicture}
    \vspace{-3em}
    \caption{Complexity comparison of generic decoding vs. randomized decoding beyond the unique decoding radius for parameters: $q=2^6$, $m=24$, $n=48$, $k=24$, $\errWeight=13$, $u=2$, $v=24$.}
    \label{fig:complexity_randomized_comparison_2}
\end{figure}
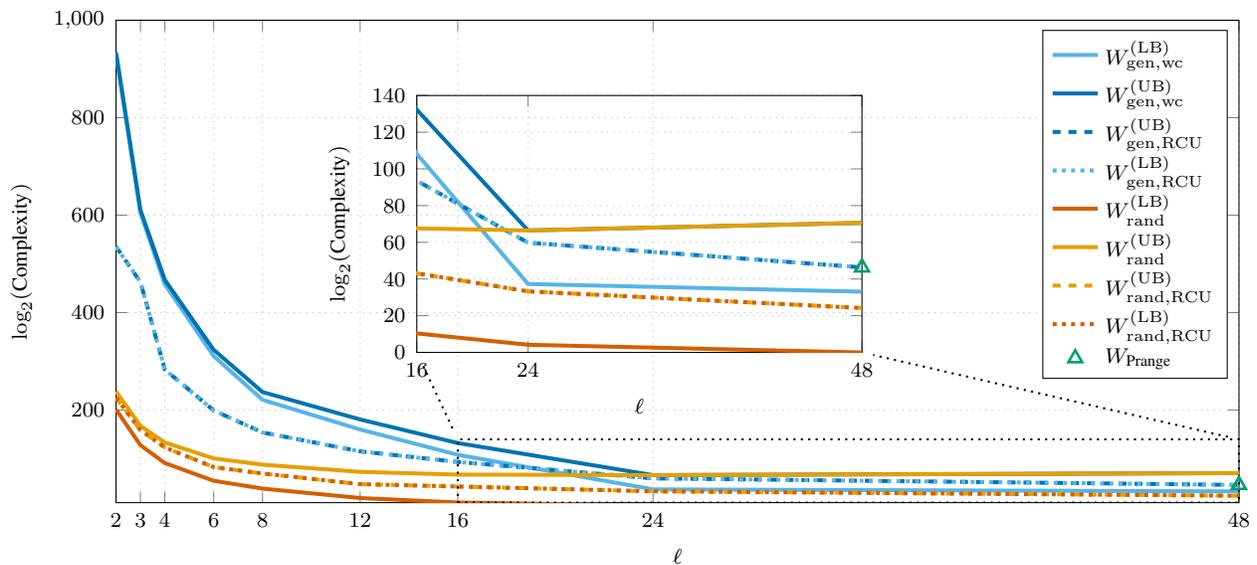

The results show a significant reduction in complexity for the randomized decoding algorithm compared to the generic decoder across both parameter sets. This improvement highlights the advantage of leveraging the structural properties of \ac{LRS} codes. Although the relative gain decreases as the parameters approach those of the Hamming metric, the randomized decoder still maintains a lower complexity in both the worst-case and average-case settings.

\section{Conclusion and Outlook}\label{sec:conclusion}
In this paper, we developed and analyzed algorithms to address the general sum-rank metric decoding problem, with a focus on both worst-case and average-case complexities.
We derived a tighter upper bound for the results in~\cite{puchingerGenericDecodingSumRank2022} and extended their work to the average-case scenario, with a particular focus on decoding beyond the unique decoding radius. Additionally, we improved the randomized decoding algorithm for \ac{LRS} codes, building on our previous work~\cite{jerkovitsRandomizedDecodingLinearized2023}. Furthermore, we introduced a Prange-like algorithm for the sum-rank metric that effectively handles larger error weights in the asymptotic setting, where $\shots \to \infty$.

Future research could adapt techniques that improved Prange's original \ac{ISD} algorithm in the Hamming metric~\cite{sternMethodFindingCodewords1989, bernsteinSmallerDecodingExponents2011, mayDecodingRandomLinear2011, beckerDecodingRandomBinary2012} to the generic and randomized decoding algorithms in the sum-rank metric. Given the hybrid nature of the sum-rank metric, these methods may particularly benefit the Hamming-like error structure and reduce complexity, especially when $\shots$ is large.

Another direction is to apply improvements from generic decoding algorithms in the rank metric, such as those in~\cite{aragonImprovementGenericAttacks2017}, or extend algebraic techniques like~\cite{bardetImprovementsAlgebraicAttacks2020} to the sum-rank metric.

These approaches could yield substantial complexity reductions for specific parameter regimes in the sum-rank metric.

A list decoding algorithm for Gabidulin codes based on Gröbner bases was introduced in~\cite{horlemannModuleMinimizationApproach2017}, enabling error correction beyond the unique decoding radius. This approach can be easily adapted for \ac{LRS} codes. However, as no upper bound on the list size is known, it is difficult to assess the overall complexity of the algorithm, making it challenging to compare with our approach. Establishing tighter complexity bounds for this algorithm remains an open problem and could be a promising direction for future research.

\appendices

\section{Efficient computation of $B_{q, m, \shotLength}(\errWeight, \supwt, \shots)$}\label{sec:appendix-1}

The quantity $B_{q, m, \shotLength}(\errWeight, \supwt, \shots)$ defined in~\eqref{eq:B_def} can also be computed recursively as follows
\begin{equation}
    B_{q, m, \shotLength}(\errWeight, \supwt, \shots) = 
    \begin{cases}
        \psm{\supwt} \NMq{q}{m, \shotLength, \errWeight} \subspaceprob{\maxShotWeight}{\errWeight}{\supwt} & \text{if } \shots = 1 \\
        \\[-10pt]
        \begin{aligned}
            \sum_{\errWeight'=0}^{\min\{\maxShotWeight,\errWeight\}}\sum_{\supwt'=\errWeight'}^{\min\{\maxShotWeight,\supwt\}} 
            & \psm{\supwt'} \NMq{q}{m, \shotLength, \errWeight'} \subspaceprob{\maxShotWeight}{\errWeight'}{\supwt'} \\
            & \cdot B_{q, m, \shotLength}(\errWeight-\errWeight', \supwt-\supwt', \shots-1)
        \end{aligned} & \text{else}
    \end{cases}.
\end{equation}
This expression can be computed in polynomial time using dynamic programming; see Algorithm~\ref{alg:DP_gen}.
\begin{algorithm}
\caption{Compute the term $B_{q, m, \shotLength}(\errWeight, \supwt, \shots)$ in the sum~\eqref{eq:maxP_iid} for a given $\supwt$.}\label{alg:DP_gen}
\Input{Parameters: $q$, $m$, $n$, $k$, $\shots$, $\errWeight$ and $\supwt$ with $\supwt \geq 0$}
\Output{Value of $B_{q, m, \shotLength}(\errWeight, \supwt, \shots)$}
\Initialize{$N(\supwt', \errWeight', \shots') = 0 \quad \forall \errWeight'\in\{0,\ldots,\errWeight\}, \; \supwt'\in\{0,\ldots,\supwt\}, \; \shots'\in\{0,\ldots,\shots\}$}
\If{$\supwt < \errWeight$}{
    \Return $0$
}
    \For{$\errWeight'\in\{0,\ldots,\errWeight\}$}{
        \For{$\supwt'\in\{\errWeight',\ldots,\supwt\}$}{
            \If{$\supwt' \leq \maxShotWeight$}{
                $N(\supwt', \errWeight', 1) \gets \psm{\supwt'} \NMq{q}{m, \shotLength, \errWeight'} \subspaceprob{\maxShotWeight}{\supwt'}{\errWeight'}$
            }
        }
    }
    \For{$\shots'\in\{2,\ldots,\shots\}$}{
        \For{$\errWeight'\in\{0,\ldots,\errWeight\}$}{
            \For{$\supwt'\in\{\errWeight',\ldots,\supwt\}$}{
                $\begin{aligned}
                N(\supwt', \errWeight', \shots') \gets \sum_{\errWeight'' = 0}^{\min\{\maxShotWeight, \errWeight'\}}\sum_{\supwt'' = \errWeight''}^{\min\{\maxShotWeight, \supwt'\}} & N(\supwt'-\supwt'', \errWeight'-\errWeight'', \shots'-1) \\
                & \cdot \psm{\supwt''} \NMq{q}{m, \shotLength, \errWeight''} \subspaceprob{\maxShotWeight}{\supwt''}{\errWeight''}
                \end{aligned}$
            }
        }
    }
\Return $N(\supwt,\errWeight,\shots)$
\end{algorithm}

\section{Appendix for Section~\ref{sec:random_decoding}}
\subsection{Definition of $\ucomp$ and Proof of Correctness}\label{sec:appendix-2}

The computation of $\ucomp_\maxShotWeight(\errWeightVec, \subwt)$ is described in Algorithm~\ref{alg:ucomp}. and Lemma~\ref{lem:ucomp_maximizes_prob} proofs its correctness. The randomization step in Line 6 of Algorithm \ref{alg:ucomp} is crucial to avoid bias towards specific positions, particularly when $\shots$ is large. This contrasts with a deterministic choice, which may lead to suboptimal results. In the Hamming case, where $\shotLength = 1$ and $n = \ell$, such randomization is essential for the effectiveness of Prange's generic decoder. However, our analysis does not explicitly take this randomness property into account and instead relies on $\probProfRandP{\maxShotWeight}{\ucomp_\maxShotWeight(\errWeightVec, \subwt), \errWeightVec}$, which is not randomized, despite $\ucomp_\maxShotWeight$ being a randomized function.

\begin{algorithm}
\caption{$\ucomp_\maxShotWeight(\errWeightVec, \subwt)$}\label{alg:ucomp}
\SetKwInOut{Input}{Input}\SetKwInOut{Output}{Output}
\Input{$\errWeightVec\in\wdecomp{w,\shots,\maxShotWeight}$ and $\subwt\in\NN$ with $\subwt=2\errWeightDiff \leq \errWeight$}
\Output{$\subwtVec\in\wdecomp{\subwt,\shots,\maxShotWeight}$ such that $\subwtVec = \argmax_{\subwtVec' \in \wdecomp{\subwt,\shots,\maxShotWeight}} \probProfRandP{\maxShotWeight}{\subwtVec', \errWeightVec}$}
$\subwtVec = [\subwt_1,\ldots,\subwt_\shots] \gets \errWeightVec$ \;
$\delta \gets \errWeight-\subwt$ \;
\While{$\delta > 0$}{
    $\set{J}_1 \gets \{i\in\{1,\ldots,n\} \st \subwt_i > 0\}$  \;
    $\set{J}_2 \gets \{i\in\set{J}_1 \st \errWeight_i = \min_{j\in\set{J}_1}\{\errWeight_j\}\}$ \;
    $\set{J}_3 \gets \{i\in\set{J}_2 \st \subwt_i = \max_{j\in\set{J}_2}\{\subwt_j\}\}$ \;
    $h \overset{\$}{\gets} \set{J}_3$ \;
    $\subwt_h \gets \subwt_h - 1$ \;
    $\delta \gets \delta - 1$ \;
}
\Return $\subwtVec$
\end{algorithm}

\begin{lemma}\label{lem:ucomp_maximizes_prob}
Let $\errWeightVec \in \wdecomp{\errWeight,\shots,\maxShotWeight}$ and let $\subwt \leq \errWeight$. Then, $\subwtVec = \ucomp_\maxShotWeight(\errWeightVec, \subwt)$, with $\ucomp_\maxShotWeight$ as in Algorithm \ref{alg:ucomp}, maximizes $\probProfRandP{\maxShotWeight}{\subwtVec,\errWeightVec}$, i.e.,
\begin{equation}
    \probProfRandP{\maxShotWeight}{\ucomp_\maxShotWeight(\errWeightVec, \subwt), \errWeightVec} = \max_{\subwtVec \in \wdecomp{\subwt,\shots,\maxShotWeight}} \probProfRandP{\maxShotWeight}{\subwtVec, \errWeightVec}.
\end{equation}
\end{lemma}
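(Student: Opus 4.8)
The plan is to exploit separability. Since $\probProfRandP{\maxShotWeight}{\subwtVec,\errWeightVec} = \prod_{i=1}^{\shots}\subspaceprob{\maxShotWeight}{\subwt_i}{\errWeight_i}$, taking logarithms turns the task into maximizing $\sum_{i=1}^{\shots} g_i(\subwt_i)$ with $g_i(s) \defeq \log \subspaceprob{\maxShotWeight}{s}{\errWeight_i}$ over all integer vectors $\subwtVec$ with $\sum_i \subwt_i = \subwt$ and $0 \leq \subwt_i \leq \errWeight_i$; note any $\subwt_i > \errWeight_i$ makes a factor $\quadbinomq{\errWeight_i}{\subwt_i}$ vanish, and $\errWeight_i \leq \maxShotWeight$, so the effective box is $[0,\errWeight_i]$. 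Because $\ucomp_\maxShotWeight$ starts from $\subwtVec = \errWeightVec$ (total weight $\errWeight$) and applies $\delta = \errWeight - \subwt$ unit decrements, I would first record feasibility and termination: whenever $\delta > 0$ we have $\sum_i \subwt_i > \subwt \geq 0$, so $\mathcal{J}_1 \neq \emptyset$ and a decrementable coordinate always exists; only coordinates with $\subwt_i > 0$ are lowered and the invariant $\subwt_i \leq \errWeight_i$ is preserved, so the returned vector lies in $\wdecomp{\subwt,\shots,\maxShotWeight}$.

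The core is to show the decrement rule is a greedy step for this concave separable program. Using the identity $\quadbinomq{N}{s-1}/\quadbinomq{N}{s} = (q^{s}-1)/(q^{N-s+1}-1)$, the marginal gain of lowering coordinate $i$ from $s$ to $s-1$ is $r_i(s) \defeq \subspaceprob{\maxShotWeight}{s-1}{\errWeight_i}/\subspaceprob{\maxShotWeight}{s}{\errWeight_i} = (q^{\maxShotWeight-s+1}-1)/(q^{\errWeight_i-s+1}-1) \geq 1$. Writing this with $c_i \defeq \maxShotWeight - \errWeight_i$ as $r_i(s) = q^{c_i} + (q^{c_i}-1)/(q^{\errWeight_i-s+1}-1)$ yields two monotonicity facts. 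First, $r_i(s)$ is increasing in $s$, so the successive decrements of a single coordinate have diminishing gains and each $g_i$ is concave. Second, the same expression gives the sandwich $q^{c_i} \leq r_i(s) < \tfrac{q}{q-1}\,q^{c_i}$, valid since $\errWeight_i - s + 1 \geq 1$.

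These two facts certify that the selection rule — picking $h$ of minimal $\errWeight_h$ (set $\mathcal{J}_2$), and among those maximal $\subwt_h$ (set $\mathcal{J}_3$) — always decrements a coordinate attaining the global maximum of $r_i(\subwt_i)$. Indeed, if $\errWeight_i < \errWeight_j$ then $c_i \geq c_j + 1$, and for $q \geq 2$ the sandwich gives $r_i(s) \geq q^{c_i} \geq q\cdot q^{c_j} \geq \tfrac{q}{q-1}\,q^{c_j} > r_j(s')$ for every admissible $s,s'$, so minimal $\errWeight$ strictly dominates; for coordinates sharing the minimal $\errWeight$, the monotonicity of $r_i(s)$ in $s$ makes maximal $\subwt$ optimal. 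I would then close with the standard exchange argument for maximizing a separable concave function under a fixed-sum constraint: given any optimal $\subwtVec^*$ and the greedy output $\hat\subwtVec$ (equal totals), any discrepancy provides a coordinate where $\hat\subwtVec$ exceeds $\subwtVec^*$ and one where it falls short; transferring a unit in $\subwtVec^*$ toward $\hat\subwtVec$ does not decrease the objective by concavity together with the greedy (largest-$r_i$) choice, so $\hat\subwtVec$ attains the maximum, which is exactly $\max_{\subwtVec \in \wdecomp{\subwt,\shots,\maxShotWeight}}\probProfRandP{\maxShotWeight}{\subwtVec,\errWeightVec}$.

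The main obstacle is the second monotonicity fact, i.e. proving that a coordinate of smallest $\errWeight_i$ always carries the largest marginal gain regardless of the current $\subwt$ values. This is precisely where the integrality of the exponent $c_i$ and the hypothesis $q \geq 2$ enter, through the gap between the lower bound $q^{c_i}$ and the upper bound $\tfrac{q}{q-1}\,q^{c_j}$: without this separation the two-level tie-breaking of $\ucomp_\maxShotWeight$ need not realize a genuine greedy step, and the concavity-based exchange argument would break down.
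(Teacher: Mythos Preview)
Your proof is correct and follows the same greedy strategy as the paper: both compute the multiplicative factor by which $\probProfRandP{\maxShotWeight}{\subwtVec,\errWeightVec}$ changes when $\subwt_i$ is decremented by one (the paper obtains the equivalent form $(q^{\maxShotWeight+1}-q^{\subwt_i})/(q^{\errWeight_i+1}-q^{\subwt_i})$), and both argue that the selection rule of $\ucomp_\maxShotWeight$ always picks a coordinate maximizing this factor. Your version is more careful at exactly the point you flag as the main obstacle. The paper only records that the factor is increasing in $\subwt_i$ for fixed $\errWeight_i$ and decreasing in $\errWeight_i$ for fixed $\subwt_i$, and then asserts that the maximum is attained at ``the largest $\subwt_i$ among the smallest $\errWeight_i$''; but separate monotonicity in each variable does not by itself resolve comparisons between coordinates with different $(\errWeight_i,\subwt_i)$ pairs. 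Your sandwich $q^{c_i} \leq r_i(s) < \tfrac{q}{q-1}\,q^{c_i}$, together with the integrality gap $c_i \geq c_j+1$ and $q\geq 2$, supplies precisely the missing dominance of the $\errWeight$-ordering over the $\subwt$-ordering, and your explicit concavity-plus-exchange argument replaces the paper's informal appeal to greedy optimality.
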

\begin{proof}
By~\eqref{eq:phiPrime} we have that
\begin{equation}
    \probProfRandP{\maxShotWeight}{\subwtVec,\errWeightVec} \defeq \prod_{i=1}^{\shots} \subspaceprob{\maxShotWeight}{\subwt_i}{\errWeight_i} = \frac{\quadbinomq{\errWeight_i}{\subwt_i}}{\quadbinomq{\maxShotWeight}{\subwt_i}}.
\end{equation}
The factor by what this expression is increased if we decrease $\subwt_i$ by $1$ is
\begin{align}
    \frac{\quadbinomq{\errWeight_i}{\subwt_i - 1} / \quadbinomq{\maxShotWeight}{\subwt_i - 1}}{\quadbinomq{\errWeight_i}{\subwt_i} / \quadbinomq{\maxShotWeight}{\subwt_i}} &= \prod_{j=1}^{\subwt_i} \frac{q^{\maxShotWeight-\subwt_i+j}-1}{q^{\errWeight_i-\subwt_i+j}-1} \cdot \prod_{j=1}^{u-1} \frac{q^{\errWeight_i - \subwt_i + 1 + j}-1}{q^{\maxShotWeight-\subwt_i+1+j}-1}  \\
    &= \frac{q^{\maxShotWeight}-1}{q^{\errWeight_i}-1} \cdot \underbrace{\prod_{i=j}^{\subwt_i - 1} \frac{(q^{\errWeight_i - \subwt_i +1 + j}-1)(q^{\maxShotWeight - \subwt_i + j}-1)}{(q^{\maxShotWeight-\subwt_i +1 + j}-1)(q^{\errWeight_i - \subwt_i +j}-1)}}_{=\frac{(q^{\errWeight_i}-1)(q^{\maxShotWeight+1}-q^{\subwt_i})}{(q^{\maxShotWeight}-1)(q^{\errWeight_i + 1}-q^{\subwt_i})}} \\
    &= \frac{q^{\maxShotWeight+1}-q^{\subwt_i}}{q^{\errWeight_i + 1}-q^{\subwt_i}}.
    \end{align}
This increase factor is monotonically increasing in $\subwt_i$ for a fixed $\errWeight_i$ and $\maxShotWeight$, and decreasing in $\errWeight_i$ for a fixed $\subwt_i$. Consequently, the maximum increase of \eqref{eq:phiPrime} is obtained by decreasing the largest $\subwt_i$ among the smallest $\errWeight_i$. By adopting a greedy approach and incrementally adjusting such positions, a global maximum can be reached as this strategy ensures optimal increase in subsequent steps. Thus, \eqref{eq:phiPrime} is optimized by incrementally decreasing $\subwt_i$ by one while maintaining $\subwt_i \geq 0$ and ensuring $\sum_{i=1}^{\shots} \subwt_i \geq \subwt$. This method aligns with the operations performed by $\ucomp_\maxShotWeight(\errWeightVec, \subwt)$ as described in Algorithm~\ref{alg:ucomp}.
\end{proof}

\subsection{Efficient Computation of $\qFactorSymRand_{\ell,\errWeight,\maxShotWeight}$}\label{sec:appendix-3}

Fortunately, we can employ a similar approach to \cite[Lemma 22]{puchingerGenericDecodingSumRank2022} and compute $\qFactorSymRand_{\ell,\errWeight,\maxShotWeight}$ as
\begin{equation}
    \qFactorSymRand_{\ell,\errWeight,\maxShotWeight} = \shots! \cdot M(\errWeight, \shots, \maxShotWeight, \subwt),
\end{equation}
where $M(\errWeight, \shots, \maxShotWeight, \subwt)$ can be computed using Algorithm \ref{alg:computeM}. To do so, we first initialize a global table ${\{M(\errWeight', \shots', \maxShotWeight', \subwt')\}}_{\errWeight'\leq\errWeight,\shots'\leq\shots}^{\maxShotWeight'\leq\maxShotWeight,\subwt'\leq\subwt}$ with $M(\errWeight', \shots', \maxShotWeight', \subwt') = -1$ for all entries. Then, we call Algorithm \ref{alg:computeM} with input parameters $\errWeight$, $\shots$, $\maxShotWeight$, and $\subwt$.

By applying arguments similar to those in \cite[Proposition 23]{puchingerGenericDecodingSumRank2022}, we can show that the complexity of computing $\qFactorSymRand_{\ell,\errWeight,\maxShotWeight}$ using this approach is $\softoh{\errWeight\subwt n^3 \maxShotWeight^3 \log_2(q)}$ and thus polynomially bounded.

\begin{algorithm}
\caption{Fill Table ${\{M(\errWeight', \shots', \maxShotWeight', \subwt')\}}_{\errWeight'\leq\errWeight,\shots'\leq\shots}^{\maxShotWeight'\leq\maxShotWeight,\subwt'\leq\subwt}$}\label{alg:computeM}
\SetKwInOut{Input}{Input}\SetKwInOut{Output}{Output}
\Input{
Integers $\errWeight'\leq\errWeight$, $\shots'\leq\shots$, $\maxShotWeight'\leq\maxShotWeight$, $\subwt'\leq\subwt$ \\
Global table ${\{M(\errWeight', \shots', \maxShotWeight', \subwt')\}}_{\errWeight'\leq\errWeight,\shots'\leq\shots}^{\maxShotWeight'\leq\maxShotWeight,\subwt'\leq\subwt}$ \\
Global parameters $q$ and $\maxShotWeight$
}
\Output{$M(\errWeight', \shots', \maxShotWeight', \subwt')$}
\If{$M(\errWeight', \shots', \maxShotWeight', \subwt') = - 1$}{
    \If{$\shots'=\errWeight'=\subwt'=0$}{
        $x \gets 1$
    }\Else{
        \If{$\shots' \geq 1$ \andand $0 \leq \errWeight' \leq \shots' \maxShotWeight'$ \andand $0 \leq \subwt' \leq \min\{\shots'\maxShotWeight, \errWeight'\}$}{
            $x\gets 0$ \;
            \For{$\errWeight_1 \in \{\maxShotWeight,\ldots,\left\lfloor \errWeight / \shots \right\rfloor \}$}{
                $\delta_\mathrm{min} \gets \max\{1, \shots'(\errWeight_1+1) - \errWeight)\}$ \;
                $\delta_\mathrm{max} \gets  \max\{ i \in \NN \st 1 \leq i \leq \shots'+1, \errWeight_1 i \leq \errWeight \}$ \;
                \For{$\delta \in \{\delta_\mathrm{min} , \ldots, \delta_\mathrm{max} \}$}{
                    $\subwt_1 \gets \max\{\subwt' - (\errWeight'-\delta \errWeight_1), 0 \}$
                    $\subwtVec^{(1)} \gets \ucomp_\maxShotWeight([\errWeight_1,\ldots,\errWeight_1], \subwt_1)$ \;
                    $\rho \gets \frac{1}{\delta !} \cdot \prod_{i=1}^{\delta} \left( {\quadbinomq{\maxShotWeight}{\subwt_i^{(1)}}}\cdot{\quadbinomq{\errWeight_1}{\subwt_i^{(1)}}}^{-1} \right) $ \;
                    $x\gets x + \rho \cdot M(\errWeight' - \delta \errWeight_1, \shots'-\delta, \errWeight_1 + 1, \subwt'-\subwt_1)$                
                }
            }
        }\Else{
        $x\gets 0$
        }
    }
}
\Return $M(\errWeight', \shots', \maxShotWeight', \subwt')$
\end{algorithm}
\FloatBarrier

\bibliography{zotereo-references}
\bibliographystyle{IEEEtran}

\end{document}